\newcommand{\tikzcircle}[2][red,fill=red]{\tikz[baseline=-0.5ex]\draw[#1,radius=#2] (0,0) circle ;}%
\newtheorem{theorem}{Theorem}
\newtheorem{corollary}{Corollary}[theorem]
\newtheorem{lemma}[theorem]{Lemma}
\theoremstyle{definition}
\newcommand{\stir}{\genfrac{\{}{\}}{0pt}{}}
\def\Z{\mathbb{Z}}
\def\R{\mathbb{R}}
\def\Dd{\mathbb{D}_{\mathbb{A}}}
\def\Dda{\mathbb{D}_{\mathbb{A}'}}
\def\Ddb{\mathbb{D}_{\mathbb{B}}}
\def\H{\mathcal{H}}
\def\Oo{\mathbb{O}}
\def\Tr{\overrightarrow{\mathcal{T}}}
\def\W{\mathcal{W}}
\def\A{\mathcal{A}}
\def\Aa{\mathbb{A}}
\def\B{\mathcal{B}}
\def\S{\mathcal{S}} 
\def\L{\mathcal{L}} 
\def\M{\mathcal{M}}
\def\N{\mathcal{N}}
\def\P{\mathcal{P}}
\def\U{\mathcal{U}} 
\def\X{\mathcal{X}}
\def\d{\dagger}
\def\Q{\mathcal{Q}}
\def\G{\mathcal{G}}
\def\K{\mathcal{K}} 
\def\F{\mathcal{F}} 
\def\I{\mathcal{I}}
\def\U{\mathcal{U}}
\def\Pi{\mathbf{P}}
\def\tB{\tilde{B}}
\def\hB{\hat{B}}
\def\Bb{\mathbb{B}}
\def\tP{\tilde{P}}
\def\gdbig{\tikzcircle[green, fill]{2pt}}
\def\rdbig{\tikzcircle[red, fill]{2pt}}
\def\blkdbig{\tikzcircle[black, fill]{2pt}}
\def\bledbig{\tikzcircle[blue, fill]{2pt}}
\def\pdbig{\tikzcircle[pink, fill]{2pt}}
\def\ydbig{\tikzcircle[yellow, fill]{2pt}}
\def\lgraf{\overrightarrow{\otimes}}
\journal{arXiv}
\begin{document}
\begin{frontmatter}
\title{Combinatorial Mori-Zwanzig theory}
\author[ucm]{Yuanran Zhu\corref{correspondingAuthor}}
\cortext[correspondingAuthor]{Corresponding author}
\ead{yzhu4@lbl.gov}
\address[lbl]{Applied Mathematics \& Computational Research Division, Lawrence Berkeley National Laboratory,\\
Berkeley, CA 94720, USA}

\begin{abstract}
We introduce a combinatorial version Mori-Zwanzig theory and develop from it a family of self-consistent evolution equations for the correlation function or Green's function of interactive many-body systems. The core idea is to use an ansatz to rewrite the memory kernel (self-energy) of the regular Mori-Zwanzig equation as a function composition of the correlation (Green's) function. Then a series of algebraic combinatorial tools, especially the commutative and noncommutative Bell polynomials, are used to determine the exact Taylor series expansion of the composition function. The resulting combinatorial Mori-Zwanzig equation (CMZE) yields novel non-perturbative expansions of the equation of motion for the correlation (Green's) function. The structural equation for deriving such a combinatorial expansion resembles the combinatorial Dyson-Schwinger equation and may be viewed as its temporal-domain analogue. After introducing the abstract word and tree representation of the CMZE, we show its wide-range application in classical, stochastic, and quantum many-body systems. In all these examples, the new self-consistent expansions we obtained with the CMZE are similar to the diagrammatic skeleton expansions used in quantum many-body theory and lattice statistical field theory. We expect such a new framework can be used to calculate the correlation (Green's) function for strongly correlated/interactive many-body systems.
\end{abstract}
\end{frontmatter}
\section{Introduction}
\subsection{Motivation}
For interactive many-body systems, the correlation/Green's function captures meaningful information from which we can obtain important physical quantities such as the spectral function, density of states, relaxation times, and response functions. As one of the main themes in modern many-body theories, traditionally, the calculation of the correlation/Green's function heavily relies on the many-body perturbation theory (MBPT). This approach was systematically developed by Feynman, Tomonaga, Schwinger, and Dyson in quantum electrodynamics \cite{zinn2021quantum}, and since then has been widely applied to condensed matter physics \cite{mahan2013many} and statistical field theory \cite{parisi1988statistical}. For weakly interactive many-body systems, MBPT usually yields a rather accurate prediction of Green's function due to the asymptotic characteristic of the perturbation series \cite{altland2010condensed}. However, when strong interaction between particles is imposed, which is the case for say strongly correlated electron systems, the classical MBPT fails for the obvious reason. Hence, the calculation of Green's function for such strongly interactive/correlated systems becomes one of the most challenging tasks in condensed matter physics and quantum field theory.   

For simplicity, from now on our discussion will focus on the calculation of Green's function for discrete/lattice systems since the continuous field requires renormalization, which is an independent research topic \cite{zinn2021quantum}. Although far from being satisfactory, many established methods have obtained success in this regard. For strongly interactive/correlated systems, a large class of the developed methods employs a {\em modified} MBPT and {\em renormalized}
\footnote {We particularly note the difference between the terminology renormalization and renormalized perturbation theory. Renormalization in quantum field theory \cite{zinn2021quantum} or stochastic partial differential equations \cite{bruned2019algebraic} refers to the formal procedure to eliminate the infinity appearing in the perturbation series. The renormalized expansion theory in our context means the resummation technique as mentioned below.} 
perturbation series to overcome, at least numerically, the divergence difficulty encountered when applying the original perturbation series. This type of method is often termed ``skeleton expansion'', ``dressed expansion'',``renormalized expansion''
, or ``bold diagrammatic expansion'' in the literature \cite{stefanucci2013nonequilibrium,lin2021bold,lin2021bold2}. Typical numerical schemes that can be classified in the category include the self-consistent Hartree-Fock approximation, Born's approximation, and the GW approximation used in the approximated Kadanoff-Baym equation \cite{stefanucci2013nonequilibrium}, as well as the dynamical mean-field theory \cite{georges1996dynamical}. The core procedure adopted in the renormalized expansion is to replace the bare, non-interacting Green's function with the interactive Green's function via the resummation of the Feynman diagram. In the end, one is led to solve a self-consistent Dyson's equation where the self-energy becomes a function of the interactive Green's function.     

From a computational point of view, the calculation of Green's function is a dimension-reduction problem since one aims to get a closed equation for the low-dimensional one-particle, or two-particle Green's function while the exact evolution equation for Green's function depends on all $n$-particle Green's functions, as clearly seen from the Martin-Schwinger-Dyson hierarchy \cite{stefanucci2013nonequilibrium,zinn2021quantum}. In a seemly different context, this type of problem can be well-formulated using the Mori-Zwanzig (MZ) theory, which is a rather general framework first developed in nonequilibrium statistical mechanics \cite{mori1965transport,zwanzig1960ensemble,zwanzig2001nonequilibrium} for explaining the appearance of irreversibility and memory effect in the microscopic particle system, and since then gradually became one of the standard paradigms to formulate dimension-reduction problems appearing in different fields. In fact, many researchers have successfully applied it to different interactive many-body systems to get the evolution equation for the correlation/Green's function \cite{snook2006langevin,fulde1995electron,kakehashi2009full}. In the MZ framework, instead of using the Dyson series expansion in the interaction picture to get Dyson's equation, one stays in the Heisenberg picture and introduces a formal projection operator $\P$ to isolate the low-dimensional quantity of interest, which in our case is the correlation/Green's function, and uses the differential-form Dyson's identity to get the evolution equation for it. Once the equation is built, one can introduce a series expansion to approximate the MZ memory function, a term analogous to the self-energy in Dyson's equation. Eventually, we obtain a closed evolution equation for Green's function that is similar to Dyson's equation. When compared with the MBPT and the renormalized MBPT, it is fair to say that the MZ approach is relatively less popular in the physics community, especially for those who work on {\em quantum} many-body theory, albeit some experts such as P. Fulde pointed out that the MZ framework has advantages to address the strongly correlated system since it is {\em free}-Wick's theorem \cite{fulde1995electron}. In our humble opinion, this phenomenon is due to two drawbacks of the traditional MZ theory. The first one is the computational difficulty of the series expansion for the memory function. The second one is more fundamental. Unlike the renormalized MBPT, one lacks systematic renormalization methods to make the MZ equation a self-consistent equation for the correlation/Green's function
\footnote{Some previous works \cite{fulde1995electron,kakehashi2009full,fulde2006strongly} already introduced different self-consistent approximations to the Mori-Zwanzig equation. The methodology adopted in this paper is presented in a self-contained manner and is generally different from what has been used in the above references.}. Due to this limitation, many approximation schemes developed under the MZ framework are either system-dependent and hence hard to generalize, or only valid locally, therefore yielding bad asymptotics. In this paper, we try to address these two problems, in particular, the second one. This is the main motivation for the development of a combinatorial version of the Mori-Zwanzig theory. Now we briefly sketch our main results. 

\subsection{Main results}
\label{sec:into_main_result}
Although we will show in Section \ref{sec:app} that in principle, the combinatorial Mori-Zwanzig theory applies to any statistical, stochastic, and quantum mechanical lattice fields, it is convenient to choose a quantum lattice field, say the Hubbard model, as an example to show the main result of the paper and make comparisons with the renormalized Dyson's equation, which is the counterpart we aim to mimic in the Mori-Zwanzig framework. Recall that the fundamental equation of motion (EOM) for interactive electron systems is Dyson's equation:
\begin{gather}
    G=G_0+G_0\Sigma G, \label{intro_Dyson_eqn}\\
    G=G_0+G_0\Sigma G_0+G_0\Sigma G_0\Sigma G_0+\cdots. \label{intro_Dyson_eqn_series}
\end{gather}
Eqn \eqref{intro_Dyson_eqn_series} is the series expansion representation for Eqn \eqref{intro_Dyson_eqn}. In \eqref{intro_Dyson_eqn}-\eqref{intro_Dyson_eqn_series}, $G=G(p,i\omega_n)$ is the (exact) interactive Green's function in the momentum-energy space, $G_0=G_0(p,i\omega_n)$ is known as the bare, or non-interactive Green's function which can be calculated analytically from the non-interactive part of the modeling Hamiltonian. $\Sigma=\Sigma(p,i\omega_n)$ is the self-energy, which can be written as the series expansion $\Sigma=\sum_{n=1}^{\infty}\Sigma^{(n)}[G_0,v]$, where the $n$-th order contribution $\Sigma^{(n)}[G_0,v]$ is a function of $G_0$ and the interaction strength $v$. In MBPT, $\Sigma^{(n)}[G_0,v]$ can be obtained using Feynman diagrams. The main objective of the renormalized MBPT is to find a way to represent the self-energy $\Sigma$ as the series expansion with respect to the function of the interactive Green's function $G$, instead of $G_0$. For instance, if using $G$-skeleton expansion, we have:
\begin{align}\label{intro_self-energy}
   \Sigma=\Sigma_r=\sum_{n=1}^{\infty}\Sigma^{(n)}_r[G,v],
\end{align}
where $\Sigma_r$ is the called the {\em renormalized self-energy}, with $\Sigma^{(n)}_r[G,v]$ corresponding to the skeleton diagrams after the dressing replacement $G_0\rightarrow G$ \cite{stefanucci2013nonequilibrium}. Substituting \eqref{intro_self-energy} into Dyson's equation \eqref{intro_Dyson_eqn}, we get the following self-consistent Dyson's equation for $G$:
\begin{gather}\label{intro_renorm_Dyson_eqn}
    G=G_0+\sum_{n=1}^{\infty}G_0\Sigma^{(n)}_r[G,v]G.
\end{gather}
Eqn  \eqref{intro_renorm_Dyson_eqn} serves as the starting point of many renormalized MBPT. In Mori-Zwanzig theory, the following operator equation, known as the differential-form Dyson's identity, is used to calculate the correlation/Green's function: 
\begin{align}\label{intro_op_ROM}
\frac{d}{dt}\P e^{t\L}\P =\P e^{t\L}\P\L\P 
+\int_0^t\P e^{(t-s)\L}\P\L e^{s\Q\L}\Q\L\P ds.
\end{align}
Here $\U(t,0)=e^{t\L}$ is the time propagator of the system under investigation, $\P$ is a projection operator and $\Q=\I-\P$ is its orthogonal. $\U_{\Q}(t,0)=e^{t\Q\L}$ is the time propagator for the orthogonal dynamics. By choosing a suitable projection operator $\P$ and acting this operator equation in the range $\P$, we obtain the EOM for the correlation/Green's function $G(t)$, which can be roughly written as:
\begin{align}\label{intro_G_EOM}
\frac{d}{dt}G(t) =\Omega G(t) 
+\int_0^t \hat\Sigma(s)G(t-s)ds,
\end{align}
where $G(t)$ can be a scalar, vector, or matrix, depending on the definition of $\P$, The Laplace transform of Eqn \eqref{intro_G_EOM} yields the following equations that are similar to Dyson's equation and its series expansion:
\begin{gather}
G(z)= S^{-1}(z)G(0) + S^{-1}(z)\hat\Sigma(z)G(z),\\
G(z)= S^{-1}(z)G(0) +S^{-1}(z)\hat\Sigma(z)S^{-1}(z)G(0)
+S^{-1}(z)\hat\Sigma(z)S^{-1}(z)\hat\Sigma(z)S^{-1}(z)G(0)+\cdots.
\end{gather}
Here $S^{-1}(z)=(zI-\Omega)^{-1}$, assuming invertibility. One can view $\hat\Sigma(z)$ as the ``self-energy'' in the MZ framework. Via rigorous combinatorial derivation, we prove that operator EOM \eqref{intro_op_ROM} admits series expansion:
\begin{align}
    \frac{d}{dt}\P e^{t\L}\P =\P e^{t\L}\P\L\P+\sum_{n=0}^{\infty}\sum_{k=0}^n\frac{(-1)^{n-k}}{(n-k)!k!}\int_0^t\P e^{(t-s)\L}\F_n\left(\P e^{s\L}\P\right)^kds,
\end{align}
where $\{\F_n\}_{n=0}^{\infty}$ is a sequence of operators which can be obtained explicitly by solving a recurrence equation. As a result, we obtain the following self-consistent evolution equation for the correlation/Green's function:
\begin{align}\label{intro_CMZE}
\frac{d}{dt}G(t)&=\Omega G(t) 
+\sum_{n=0}^{\infty}\sum_{k=0}^n\frac{(-1)^{n-k}}{(n-k)!k!}\int_0^t [G(s)\Omega_n]^kG(t-s)ds.
\end{align}
EOM of the type \eqref{intro_CMZE} will be called the {\em combinatorial Mori-Zwanzig equation (CMZE)}. It also admits a frequency space representation:
\begin{gather}
G(z)= S^{-1}(z)G(0) + S^{-1}(z)\hat\Sigma(z)G(z),\label{intro_G(z)}\\
G(z)= S^{-1}(z)G(0) + \sum_{n=0}^{\infty}\sum_{k=0}^n\frac{(-1)^{n-k}}{(n-k)!k!}S^{-1}(z)\hat\Sigma_k[G(z),\Omega_n]G(z).\label{intro_G(z)_CMZE}
\end{gather}
where $\hat\Sigma_k[G(z),\Omega_n]=\mathfrak{L}\{[G(t)\Omega_n]^k\}$, $\mathfrak{L}$ is the Laplace transform. All $\Omega_n$s are explicitly computable and can be formally written as the function of $G(0)$, i.e. $\Omega_n=\Omega_n[G(0)]$. One can immediately find the similarity between the renormalized Dyson's equation \eqref{intro_renorm_Dyson_eqn} and CMZE \eqref{intro_G(z)_CMZE}.
The combinatorial Mori-Zwanzig theory can be readily generalized to nonequilibrium systems with the evolution operator $\U(t,0)=\Tr e^{\int_0^t\L(\tau)d\tau}$ generated by time-dependent operator $\L(\tau)$. We refer to Section \ref{sec:Time-d-CMZE} for details.
\subsection{Outline}
This paper is organized as follows. In Section \ref{sec:CCMZE}, we first briefly review the formal derivation of the Mori-Zwanzig equation for time-independent and time-dependent many-body systems. We also derive the commutative CMZE using a well-known recurrence relation in Mori-Zwanzig theory and Fa\`a di Bruno's formula. With this simple example, we wish to provide a quick introduction to the core idea behind the combinatorial expansion theory. Section \ref{sec:NCC-MZE_word} contains the main result of this paper. In order to extend the commutative CMZE to the noncommutative case, which is required for the development of a general theory based on operator algebras, in Section \ref{sec:bell_poly} we introduce several new noncommutative polynomials of words and in \ref{sec:alge_eqn_words} a general symbolic algebraic equation between these polynomials. In Section \ref{sec:main_thm} and \ref{sec:Time-d-CMZE}, we show that the noncommutative CMZE can be obtained using algebraic homomorphisms to map the solution to the symbolic algebraic equation into operators. Up to this section, the main construction of the combinatorial Mori-Zwanzig theory is finished. 

Section \ref{sec:NCC-MZE_tree} introduces an equivalent, tree representation for the CMZE. The main purpose of this section is to develop a tree diagrammatic method for deriving CMZE, which is comparable with the Feynman diagrammatic used in the renormalized skeleton expansion. Moreover, we show that the symbolic algebraic equation that leads to the CMZE can be reformulated into a functional equation for the generating function of trees. This builds a connection between CMZE and the combinatorial Dyson-Schwinger equation. The latter was heavily discussed in combinatorial quantum field theory \cite{yeats2008growth,yeats2017combinatorial,kreimer2006etude}. Section \ref{sec:app} contains applications of the combinatorial Mori-Zwanzig theory to classical Hamiltonian systems, stochastic dynamical systems, and quantum many-body systems. With these three examples, we demonstrate that CMZE provides a coherent framework to derive self-consistent EOM for the correlation/Green's function for virtually all kinds of lattice field systems. The main findings of the paper are summarized in Section \ref{sec:summary}. At last, we note that although the derivation of the CMZE heavily relies on algebraic combinatorics, the usage of algebraic and combinatorial terminology would be kept to a minimum quantity that is just enough for deriving the equation.  

\section{Commutative combinatorial Mori-Zwanzig equation}
 
\subsection{Mori-Zwanzig equation in a nutshell}
Consider a general evolution operator $\U(t,0)$ for a dynamical system evolving on a Banach space $X$. The system can be deterministic, stochastic, or quantum mechanical, depending on the choice of $\U(t,0)$ and $X$. We are interested in deriving the exact evolution equation for a phase space observable function $u(t)=u(x(t))$ (for quantum mechanical systems $u(t)$ is often an observable operator), where $x\in X$ is the coordinate observable in the Banach space $X$. To this end, we introduce a projection operator $\P:X\rightarrow X$, and its complementary projection $\Q=\I-\P:X\rightarrow X$, where $\I$ is the identity operator in $X$. If furthering assuming the dynamical system is time-independent and $\U(t,0)$ is generated by a time-independent infinitesimal generator $\L$, then by differential the well-known Dyson's identity:
\begin{align}\label{eqn:Dyson_ID}
    \U(t,0)=e^{t\L}=e^{t\Q\L}+\int_0^te^{s\L}\P\L e^{(t-s)\Q\L}ds
\end{align}
with respect to $t$, we can get the following operator identity for the evolution operator $\U(t,0)=e^{t\L}$:
\begin{align}\label{eqn:time_ind_MZE_OP}
\frac{d}{dt}e^{t\L}=e^{t\L}\P\L
+\int_0^t e^{s\L}\P\L e^{(t-s)\Q\L}\Q\L ds+e^{t\Q\L}\Q\L.
\end{align} 
Applying the above operator identity to the initial of the observable function $u(0)=u(x(0))$ yields the (full) Mori-Zwanzig equation (MZE):
\begin{align}\label{eqn:time_ind_MZE_u_t}
\frac{d}{dt}e^{t\L}u(0)=e^{t\L}\P\L u(0)
+\int_0^t e^{s\L}\P\L e^{(t-s)\Q\L}\Q\L u_0ds+e^{t\Q\L}\Q\L u(0).
\end{align} 
The three terms on the right-hand side (RHS) of Eqn \eqref{eqn:time_ind_MZE_u_t} are called, respectively, streaming term, memory term, and fluctuation (or noise) term. In applications, a more useful MZE is the {\em projected} MZE. Specifically, we apply the projection operator $\P$ from the left into \eqref{eqn:time_ind_MZE_u_t}. The orthogonality between projection operators implies $\P\Q=0$, which leads to $\P e^{t\Q\L}\Q\L u(0)	\equiv 0$. As a result, the projected MZE will evolve in a projected space $\text{Ran}(\P)$, i.e. the range of the projection operator $\P$ and takes the form:
\begin{align}\label{eqn:time_ind_PMZE_OP}
\frac{d}{dt}\P e^{t\L}u(0)=\P e^{t\L}\P\L u(0)
+\int_0^t\P e^{s\L}\P\L e^{(t-s)\Q\L}\Q\L u(0)ds.
\end{align} 
If further applying the projection operator $\P$ from right and then acting on $u(0)$, we get 
\begin{align}\label{eqn:time_ind_PMZE_POP}
\frac{d}{dt}\P e^{t\L}\P u(0)=\P e^{t\L}\P\L\P u(0)
+\int_0^t\P e^{s\L}\P\L e^{(t-s)\Q\L}\Q\L\P u(0)ds.
\end{align}
This projected MZE induces an endomorphism $\text{Ran}(\P)\rightarrow \text{Ran}(\P)$, which gives the exact evolution of the projected quantity $\P \U(t,0)\P$ in a presumably low-dimensional submanifold $\text{Ran}(\P)\subset X$. This is the essence of why the Mori-Zwanzig framework can be used as a universal dimension-reduction tool in mathematics and physics. To be noticed that, up to this point, everything is derived in a formal way and no approximations have been made. Therefore, the MZE \eqref{eqn:time_ind_MZE_u_t} and its projected form \eqref{eqn:time_ind_PMZE_OP}-\eqref{eqn:time_ind_PMZE_POP} are both formally exact. For time-independent systems, we further note that using the change of variable $t-s\rightarrow s$, one can express Dyson's identity \eqref{eqn:Dyson_ID} equivalently as
\begin{align}\label{memory_reformulation}
    e^{t\L}=e^{t\Q\L}+\int_0^te^{(t-s)\L}\P\L e^{s\Q\L}ds
\end{align}
and the resulting MZEs can be obtained following the same procedure. In fact, MZEs expressed in this format will be the starting point of our derivation for the time-independent combinatorial MZE since it leads to equations of motions with simpler forms. 

For time-dependent dynamical systems, the generator of the evolution operator $\U(t,0)$, denoted as $\L(t)$, is naturally time-dependent. The Mori-Zwanzig equation for any observable function $u=u(x(t))$ for time-dependent systems can be obtained using the time-dependent Dyson's identity:
\begin{align}\label{Dyson_ID_time-depen}
    \U(t,0)=\U_{\Q}(t,0)+\int_0^t\P\U(t,0)\P\L(t)\U_{\Q}(t,s)\Q\L(t)ds.
\end{align}
If using the time-ordered operator $\Tr$ which places operators from the right to the left with respect to the decreasing time order, then the Dyson series representation enables us to formally write the evolution operators in \eqref{Dyson_ID_time-depen} as time-ordered exponential maps $\U(t,s)=\Tr e^{\int_s^t\L(\tau)d\tau}$ and $\U_{\Q}(t,s)=\Tr e^{\int_s^t\Q\L(\tau)d\tau}$. Again, taking the time derivative and applying the resulting operator identity into $u(0)=u(x(0))$, we obtain the Mori-Zwanzig equation for the time-dependent system: 
\begin{align}\label{eqn:time_d_MZE_OP}
\frac{d}{dt}\U(t,0)u(0)=\U(t,0)\P\L(t)u(0)
+\U_{\Q}(t,0)\Q\L(t)u(0)
+\int_0^t\U(s,0)\P\L(s)\U_{\Q}(t,s)\Q\L(t)u(0)ds.
\end{align}
Similarly, we can get its projected version:
\begin{align}
\frac{d}{dt}\P\U(t,0)u(0)&=\P\U(t,0)\P\L(t)u(0)
+\int_0^t\P\U(s,0)\P\L(s)\U_{\Q}(t,s)\Q\L(t)u(0)ds
\label{eqn:time_d_MZE_POP}\\
\frac{d}{dt}\P\U(t,0)\P u(0)&=\P\U(t,0)\P\L(t)\P u(0)
+\int_0^t\P\U(s,0)\P\L(s)\U_{\Q}(t,s)\Q\L(t)\P u(0)ds.
\label{eqn:time_d_MZE_POPP}
\end{align}
A notable difference between the time-independent MZEs and the time-dependent ones is that reformulation \eqref{memory_reformulation} is not valid for time-dependent Dyson's identity \eqref{Dyson_ID_time-depen} and hence for all the follow-up time-dependent MZEs \eqref{eqn:time_d_MZE_OP}-\eqref{eqn:time_d_MZE_POPP}.  
\subsection{Commutative combinatorial Mori-Zwanzig equation}
\label{sec:CCMZE}
In this section, we use a simple example to derive the combinatorial expansion of the Mori-Zwanzig equation so that readers can quickly grasp the main idea of the combinatorial Mori-Zwanzig theory. In fact, for the simplest case, the combinatorial expansion we are going to derive is just an application of the classical Fa\`a di Bruno formula which generalizes the chain rule to higher-order derivatives. 

Consider a time-independent, classical Hamiltonian system with the modeling Hamiltonian $\H$. The associated state space evolution operator of the system is given by $\U(t,0)=e^{t\L}$, where $\L=\{\cdot,\H\}$ is known as the Liouville operator. Here $\{\cdot,\cdot\}$ denotes the classical Poisson bracket. Now we introduce a projection operator $\P$ and its orthogonal $\Q=\I-\P$. To be noticed that as projection operators orthogonal to each other, $\P,\Q$ satisfy the idempotence $\P^2=\P$, $\Q^2=\Q$ and $\P\Q=0$. 
With all these definitions, the following operator identity can be proved easily using the aforementioned operator properties and mathematical induction \cite{zhu2021effective}:
\begin{align}\label{recurrence_relation_op}
    \P\L(\Q\L)^n\P=\P\L^{(n+1)}\P-\sum_{k=1}^n\P\L(\Q\L)^{(k-1)}\P\L^{n-k+1}\P,\qquad n\geq 0.
\end{align}
One should notice that \eqref{recurrence_relation_op} is a {\em recurrence} relation since the second term on the right-hand side (RHS) of \eqref{recurrence_relation_op} contains operators $\{\P\L(\Q\L)\P\}_{i=1}^{n-1}$. Further decomposing these operators using this recurrence relation, eventually, we can rewrite operator $\P\L(\Q\L)^n\P$ as an (n+1)-order multivariate polynomial of operators
$\P\L^{j}\P$ for $1\leq j\leq n+1$, i.e. $\P\L(\Q\L)^n\P=P_{n+1}(\P\L\P,\P\L^2\P,\cdots,\P\L^{(n+1)}\P)$. Now, consider the simplest case when $X=H$ is a certain Hilbert space equipped with the inner product $\langle\cdot,\cdot\rangle$. $\P:H\rightarrow H$ is a finite-rank projection operator which admits a canonical form $\P(\cdot)=\langle u(0),\cdot\rangle u(0)$ with $u(0)\in H$ and $\langle u(0),u(0)\rangle=\| u(0)\|^2=1$. Then applying the above operator identity to $u(0)$ leads to a simple recurrence formula:

\begin{align}\label{recurrence_relation}
\begin{dcases}
\mu_0&=\gamma_0,\\
\mu_{n}&=\gamma_{n}-\sum_{k=1}^n\mu_{k-1}\gamma_{n-k},\qquad n\geq 1,
\end{dcases}
\end{align}
where we introduced shorthand notation $\mu_n,\gamma_n$ which are constants defined by: $\mu_n u(0)=\P\L(\Q\L)^n\P u(0)=\langle\L(\Q\L)^nu(0),u(0)\rangle u(0)$ for $n\geq 0$ and $\gamma_n u(0)=\P\L^{(n+1)}\P u(0)=\langle\L^{(n+1)}u(0),u(0)\rangle u(0)$ for $n\geq -1$. Note that $\gamma_{-1}=\langle u(0),u(0)\rangle=1$. As a result of the recurrence relation, any $\mu_n$ can be expressed as a $n$-dimensional polynomial of $\gamma_j$, $0\leq j\leq n-1$. For instance, $\mu_3=P_3(\gamma_0,\gamma_1,\gamma_2)=\gamma_2-2\gamma_1\gamma_0+\gamma_0^3$. For the Mori-Zwanzig equation of this special case, we have the following series expansions:
\begin{equation}\label{taylor_C(t)_K(t)}
\begin{aligned}
\P e^{t\L}\P u(0)&=\sum_{m=0}^\infty\frac{t^m}{m!}\P\L^m\P u(0)=u(0)\sum_{m=0}^\infty\frac{t^m}{m!}\gamma_{m-1}=u(0)C(t),\\
\P\L e^{t\Q\L}\Q\L\P  u(0)&=\sum_{n=0}^\infty\frac{t^n}{n!}\P\L(\Q\L)^n\Q\L \P u(0)=u(0)\sum_{n=0}^\infty\frac{t^n}{n!}\mu_{n+1}=u(0)K(t).
\end{aligned}
\end{equation}
Omitting the Hilbert space vector $u(0)$ in the above equations, these two formulas are just the Taylor series expansion for functions $C(t)$ and $K(t)$. Of course, we need to assume that they are smooth functions of time and therefore differentiable up to any order. For those who are familiar with the Mori-Zwanzig framework and its application in classical statistical mechanics, one immediately identifies that $C(t)=\langle u(t),u(0)\rangle$ corresponds to the time autocorrelation function (normalized since $\langle u(0),u(0)\rangle^2=1$) of the time-independent observable function $u$ and $K(t)=\langle \L e^{t\Q\L}u(0),u(0)\rangle=-\langle  e^{t\Q\L}u(0),\Q\L u(0)\rangle$ is the MZ memory kernel. Here we have used the common setting in classical equilibrium statistical mechanics that $\langle a, b\rangle=\frac{1}{Z}\int ab e^{-\beta\H}dpdq$. To get equality $\langle \L e^{t\Q\L}u(0),u(0)\rangle=-\langle  e^{t\Q\L}u(0),\Q\L u(0)\rangle$, we used operator properties that $\L$ is skew-Hermitian with respect to the inner product $\langle\cdot,\cdot\rangle$, and $\Q^2=\Q$ is symmetric with respect to $\langle\cdot,\cdot\rangle$. The so-called combinatorial expansion of the Mori-Zwanzig equation is just a simple ansatz assuming that the memory kernel can be written as a function composition of the correlation function:
\begin{align}\label{simple_anstaz}
K(t)=f\circ \hat{C}(t)=f[C(t)-C(0)],
\end{align}
where $\circ$ denotes the function composition and $\hat{C}(t)= C(t)-C(0)$. Since the $n$-th order derivatives of $C(t)$ and $K(t)$ are actually related through the recurrence relation \eqref{recurrence_relation}, we can use Fa\`a di Bruno formula to determine the exact form of the composition function $f$. The result can be summarized as follows:
\begin{theorem}\label{thm_combi_expansion}
Assuming $\gamma_0\neq 0$, then the $k$-th order derivative of $f$, denoted as $f_k$, are uniquely determined by a set of Laurent polynomials $\{L_k\}_{k=1}^{\infty}$ with $f_k=L_k(\gamma_0,\cdots,\gamma_{k-1})$. Moreover, we have the following series expansion of $K(t)$:
\begin{align}\label{K_combi_expansion}
    K(t)=\sum_{k=0}^{\infty}f_k\frac{(\hat{C}(t))^k}{k!}=
    \sum_{k=0}^{\infty}L_k(\gamma_0,\cdots,\gamma_{k-1})\frac{(C(t)-C(0))^k}{k!}.
\end{align}
In particular, $L_k$ can be constructed recursively by solving the system of recurrence equations:
\begin{equation}\label{eqn:rec}
\begin{aligned}
\begin{dcases}
\mu_0&=\gamma_0\\
f_0&=\mu_1\\
\mu_{n}&=\gamma_{n}-\sum_{k=1}^{n}\mu_{k-1}\gamma_{n-k},\quad n\geq 1\\
\mu_{n+1}&=\sum_{k=1}^nf_kB_{n,k}(\gamma_0,\gamma_1,\cdots,\gamma_{n-k}), \quad n\geq 1,
\end{dcases}
\end{aligned}
\end{equation}
where $B_{n,k}(x_1,x_2,\cdots,x_{n-k+1})$ is the partial Bell polynomial \cite{comtet2012advanced}. 
\end{theorem}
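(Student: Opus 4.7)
The plan is to exploit the ansatz \eqref{simple_anstaz} by matching Taylor coefficients at $t=0$ on both sides via Fa\`a di Bruno's formula, and then to solve the resulting triangular system recursively for the derivatives $f_k$. The key structural fact is that the recurrence \eqref{recurrence_relation}, already derived from the operator identity \eqref{recurrence_relation_op}, furnishes half of the recipe, while Fa\`a di Bruno supplies the other half.

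As a first step, I observe that $\hat{C}(0)=0$, so the ansatz evaluated at $t=0$ immediately yields $f_0 = f(0) = K(0) = \mu_1$, which is the second line of \eqref{eqn:rec}. For $n\geq 1$, applying Fa\`a di Bruno's formula to the composition $K(t) = f(\hat{C}(t))$ and evaluating at $t=0$ gives
\[
K^{(n)}(0) \;=\; \sum_{k=1}^n f_k \, B_{n,k}\bigl(\hat{C}^{(1)}(0),\ldots,\hat{C}^{(n-k+1)}(0)\bigr),
\]
the $k=0$ term vanishing by the convention $B_{n,0}=0$ for $n\geq 1$. The Taylor expansions recorded in \eqref{taylor_C(t)_K(t)} identify $K^{(n)}(0) = \mu_{n+1}$ and $\hat{C}^{(j)}(0) = C^{(j)}(0) = \gamma_{j-1}$ for $j\geq 1$. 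Substituting these identifications reproduces the fourth line of \eqref{eqn:rec}, while the first and third lines simply restate the already-proved recurrence \eqref{recurrence_relation}. Thus the full system \eqref{eqn:rec} is verified.

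Next, I solve \eqref{eqn:rec} for the $f_n$ by isolating the top-index contribution. Since $B_{n,n}(x_1) = x_1^n$, the term $k=n$ contributes $f_n \gamma_0^n$, which separates cleanly from the lower-order terms:
\[
f_n\,\gamma_0^n \;=\; \mu_{n+1} \;-\; \sum_{k=1}^{n-1} f_k\, B_{n,k}(\gamma_0,\ldots,\gamma_{n-k}).
\]
The hypothesis $\gamma_0 \neq 0$ allows division by $\gamma_0^n$, and an induction on $n$ then shows that each $f_n$ is a Laurent polynomial in the $\gamma_j$'s (with negative powers appearing only in $\gamma_0$), defining the claimed $L_k$. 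The series representation \eqref{K_combi_expansion} follows by simply Taylor-expanding $f$ about $0$ and substituting $\hat{C}(t)$.

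The main obstacle I anticipate is not algebraic — Fa\`a di Bruno packages everything cleanly once one recognizes that $\hat{C}(0) = 0$ forces evaluation of all higher derivatives of $f$ at the single point $0$ — but analytic: the ansatz \eqref{simple_anstaz} is a priori formal, and producing a genuine identity of functions requires $C(t)$ and $K(t)$ to be real-analytic with $\hat{C}(t)$ lying inside the disk of convergence of $f$. The theorem as stated, however, treats \eqref{K_combi_expansion} as a formal power series identity under the smoothness hypothesis imposed just before its statement, so this concern does not enter the combinatorial proof itself; it only becomes relevant when the expansion is to be used quantitatively.
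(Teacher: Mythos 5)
Your proposal is correct and follows essentially the same route as the paper's own proof: apply Fa\`a di Bruno to $K(t)=f(\hat{C}(t))$, evaluate at $t=0$ using the Taylor identifications $K^{(n)}(0)=\mu_{n+1}$ and $\hat{C}^{(j)}(0)=\gamma_{j-1}$ together with the recurrence \eqref{recurrence_relation}, and then solve the resulting triangular system by isolating $f_n\gamma_0^n$ via $B_{n,n}(x_1)=x_1^n$. You simply carry out explicitly the ``simple computations'' the paper leaves implicit, and your closing remark on the formal versus analytic status of the ansatz matches the paper's own caveat.
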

Theorem \ref{thm_combi_expansion} can be easily proved using the Fa\`a di Bruno's formula, which can be determined by the combinatorics of set partitions \cite{comtet2012advanced}. As we pointed out, $K(t)$ in \eqref{K_combi_expansion} corresponds to the memory kernel of the Mori-Zwanzig equation and $C(t)$ will be the correlation function. Hence, we will call \eqref{K_combi_expansion} a combinatorial expansion of the memory kernel and the resulting EOM for the correlation function (i.e. \eqref{eqn:Com_MEZ_time}) a combinatorial Mori-Zwanzig equation (CMZE). 
\begin{proof}
Recall that the Fa\`a di Bruno's formula for the derivative of the composite function $K(t)=f(\hat{C}(t))$ is given by \cite{comtet2012advanced}:
\begin{align*}
    \frac{d^n}{dt^n}K(t)=\frac{d^n}{dt^n}f(\hat{C}(t))&=\sum_{k=1}^nf^{(k)}(\hat{C}(t))\cdot B_{n,k}(\hat{C}'(t),\hat{C}''(t),\cdots,\hat{C}^{(n-k+1)}(t))\\
    &=\sum_{k=1}^nf^{(k)}(\hat{C}(t))\cdot B_{n,k}(C'(t),C''(t),\cdots,C^{(n-k+1)}(t)),
\end{align*}
where $B_{n,k}(x_1,x_2,\cdots,x_{n-k+1})$ is the partial Bell polynomial \cite{comtet2012advanced}. Evaluating this formula at $t=0$ and set $f_k=f^{(k)}(\hat{C}(t))|_{t=0}$, with the recurrence relation \eqref{recurrence_relation}, Taylor series expansions \eqref{taylor_C(t)_K(t)} for $C(t)$ and $K(t)$, and the Taylor series expansion for $f(w)$ at $w=0$:
\begin{align*}
    f(w)= f_0+ f_1w+ \frac{1}{2}f_2w^2+\cdots,
\end{align*}
simple computations will lead to \eqref{K_combi_expansion} and \eqref{eqn:rec}.
\end{proof}
For the reader's convenience, here we provide the specific form of the Bell polynomials. The partial Bell polynomials are given by explicitly by:
\begin{align*}
    B_{n,k}(x_1,x_2,\cdots,c_{n-k+1})=\sum\frac{n!}{j_1!j_2!\cdots j_{n-k+1}!}
    \left(\frac{x_1}{1!}\right)^{j_1}
    \left(\frac{x_2}{2!}\right)^{j_2}
    \cdots
    \left(\frac{x_{n-k+1}}{(n-k+1)!}\right)^{j_{n-k+1}},
\end{align*}
where the sum taken over all sequences $j_1,j_2,\cdots,j_{n-k+1}$ of non-negative integers such that these two conditions are satisfied:
\begin{align}
\begin{dcases}
&j_1+j_2+\cdots+j_{n-k+1}=k,\\
&j_1+2j_2+\cdots+(n-k+1)j_{n-k+1}=n.
\end{dcases}   
\end{align}
The sum of the partial Bell polynomial yields the complete Bell polynomial:
\begin{align*}
    B_n(x_1,\cdots,x_n)=\sum_{k=1}^{n}B_{n,k}(x_1,x_2,\cdots,x_{n-k+1})
\end{align*}
and the first few Bell polynomials are given by:
\begin{equation}\label{f_k_complete_Bell_poly}
\begin{aligned}
B_0&=1\\
B_1&=x_1\\
B_2&=x_1^2+x_2\\
B_3&=x_1^3+3x_1x_2+x_3\\
B_4&=x_1^4+6x_1^2x_2+4x_1x_3+3x_2^2+x_4.\\
\end{aligned}
\end{equation}
When $\gamma_0\neq 0$, using \eqref{f_k_complete_Bell_poly}, we can explicitly solve the recurrence equation \eqref{eqn:rec} and get the first few $f_k$s:
\begin{align*}
    f_0&=L_0(\gamma_0,\gamma_1)=\gamma_1-\gamma_0^2\\
    f_1&=L_1(\gamma_0,\gamma_1,\gamma_2)=\frac{\gamma_2}{\gamma_0}-2\gamma_1+\gamma_0^2\\
    f_2&=L_2(\gamma_0,\gamma_1,\gamma_2,\gamma_3)=-\frac{\gamma_1\gamma_2}{\gamma_0^3}+\frac{\gamma_1^2+\gamma_3}{\gamma_0^2}-2\frac{\gamma_2}{\gamma_0}+2\gamma_1-\gamma_0^2,
\end{align*}
which indeed are Laurent polynomials of $\gamma_k$s. To be noticed that for the example we gave where $\L$ is a skew-Hermitian operator with respect to the Hilbert space inner product  $\langle\cdot,\cdot\rangle$, we actually have $\gamma_{2j}=\mu_{2j}=0$, $j\geq 0$. For this case, the assumption $\gamma_0\neq 0$ is no longer valid. However, it is easy to show that a similar recurrence equation can be obtained. Eventually, we can get that the first few $f_k$s are given by
\begin{equation}\label{f_n_CCMZE}
\begin{aligned}
    f_0&=L_0(\gamma_1)=\gamma_1\\
    f_1&=L_1(\gamma_1,\gamma_3)=\gamma_3-\gamma_1^2\\
    f_2&=L_2(\gamma_1,\gamma_3,\gamma_5)=\frac{\gamma_5}{\gamma_1}-2\gamma_3+\gamma_1^2.
\end{aligned}
\end{equation}
One can see that $f_j$s are given by the same Laurent polynomial $L_k$ with different arguments. The series expansion \eqref{K_combi_expansion} can be plugged in MZE \eqref{eqn:time_ind_PMZE_POP}. Using Dyson's identity of this version \eqref{memory_reformulation}, we obtain the (commutative) combinatorial Mori-Zwanzig equation (CMZE) for the correlation function:
\begin{align}\label{eqn:Com_MEZ_time}
\frac{d}{dt}C(t)=\Omega C(t)+\sum_{n=0}^{\infty}\sum_{k=0}^n\frac{(-1)^{n-k}}{(n-k)!k!}f_n\int_0^tC^{k}(s)C(t-s)ds,
\end{align}
where $\Omega=\langle\L u(0),u(0)\rangle$. Here we used the binomial expansion and the fact that normalized time autocorrelation function $C(t)$ satisfies $C(0)=1$. \eqref{eqn:Com_MEZ_time} is a {\em self-consistent} equation of motion (EOM) for $C(t)$ since the memory kernel $K(s)$ is replaced by the polynomials of $C(s)$. At the crudest level of approximations, the equation resembles the mode-coupling equation developed for describing glassy dynamics \cite{bengtzelius1984dynamics,reichman2005mode}. In fact, the combinatorial expansion approach can be used to generalize the mode-coupling theory as we will show in Section \ref{sec:app}.

%
%
%
%
\section{Noncommutative combinatorial Mori-Zwanzig equation}
\label{sec:NCC-MZE_word}
Having considered the simplest combinatorial expansion for the Mori-Zwanzig equation, we need to consider common cases and generalize the basic idea to obtain a more fundamental theory based on operator algebra. This turns out to be trickier than the simple application of the Fa\`a di Bruno formula because operators, unlike real numbers, are non-commutative. One can easily see this when comparing the operator-form recurrence relation \eqref{recurrence_relation_op} and the real-number recurrence relation \eqref{recurrence_relation}. For instance, $\mu_1\gamma_1=\gamma_1\mu_1$, while for the corresponding operators $\P\L(\Q\L)\P\L^2\P\neq \P\L^2\P\L(\Q\L)\P$ in general. Moreover, the noncommutativity also comes from the projected evolution operator $\P\U(t,0)\P$ and its derivatives. To resolve all these issues, we need to introduce noncommutative combinatorial expansions for the operator-form Mori-Zwanzig equation. To this end, a series of new tools from enumerative combinatorics have to be introduced in order to get the most general theory we aimed to obtain. 

Specifically, we need to use the noncommutative Bell polynomials introduced by Munthe-Kaas \& Lundervold \cite{lundervold2011hopf,munthe1995lie,munthe1998runge}. In addition to that, two new families of noncommutative polynomials are also introduced. The first one is a new variant of the Bell polynomials, which will be called the {\em Type-II noncommutative Bell polynomials}. The second one will be called the {\em noncommutative bipartition polynomials} since it is closely related to the bipartition property of the projection operators $\P$ and $\Q$. With all these new polynomials, for the time-independent case, we prompt to use the following ansatz to get the combinatorial expansion for evolution operators:
\begin{align}\label{Com_expanion_time_depen}
   \P\L e^{t\Q\L}\Q\L\P=\sum_{n=0}^{\infty}\F_n(\P e^{t\L}\P-\P)^n.
\end{align}
Then we use combinatorics to determine explicitly what $\F_n$ operators really are. This is a brief prefatory remark on what we are going to do.
\subsection{Noncommutative Bell and bipartition polynomials}
\label{sec:bell_poly}
\paragraph{Type-I noncommutative Bell polynomials} Type-I noncommutative Bell polynomials was introduced by Munthe-Kaas \& Lundervold \cite{lundervold2011hopf,munthe1995lie,munthe1998runge} when discussing the Lie-Butcher theory on differential manifold. Noncommutative polynomials can be represented using abstract words or rooted planar trees. In this section, we will only consider the word representation. The tree representation will be introduced in Section \ref{sec:NCC-MZE_tree}. For the sake of completeness, we largely borrowed material on the Type-I noncommutative Bell polynomials from \cite{lundervold2011hopf}. For which, we do not claim any originality here.

Let $\mathbb{A}=\{a_j\}_{j=1}^{\infty}$ be an infinite alphabet and consider the free, unital, associative algebra $\Dd=\R\langle\mathbb{A} \rangle$ with grading given by $|a_j|=j$ and $|a_{j_1}a_{j_2}\cdots a_{j_k}|=j_1+j_2+\cdots+j_k$.  $\partial:\Dd\rightarrow \Dd$ is a linear differential operator satisfying
$\partial a_j=a_{j+1}$ and the Leibniz rule $\partial(w_1w_2)=\partial(w_1)w_2+w_1\partial(w_2)$ for all $w_1,w_2\in \mathbb{A}^*$. Here $w$ is called a word that consists of a finite string of elements selected with replacement from the alphabet $\mathbb{A}$. All possible words form a set $\mathbb{A}^*$. Each word has length $\#(w)$. For instance, a word $w=a_1a_2a_4a_4$ has length 4. The Type-I noncommutative Bell polynomial on the free, associative, noncommutative algebra $\Dd$ is a series of polynomials $\tB_n(a_1,\cdots,a_n)\in\Dd$ given by the recurrence relation:
\begin{equation}\label{rec_non_Bell_poly1}
\begin{aligned}
\begin{dcases}
    \tB_0&=\I\\
    \tB_n&=(a_1+\partial)\tB_{n-1}=(a_1+\partial)^n\I,\qquad n\geq 1.
\end{dcases}
\end{aligned}
\end{equation}
The first few $\tB_n$s are:
\begin{equation}\label{type1_Bell_example}
\begin{aligned}
    \tB_0&=\I\\
    \tB_1&=a_1\\
    \tB_2&=a_1^2+a_2\\
    \tB_3&=a_1^3+2a_1a_2+a_2a_1+a_3\\
    \tB_4&=a_1^4+a_2a_1^2+2a_1a_2a_1+3a_1^2a_2+3a_2^2+3a_1a_3+a_3a_1+a_4.
\end{aligned}
\end{equation}
According to this definition, we find that the noncommutative Bell polynomial can also be constructed recurrently without using $\partial$ by
\begin{align}\label{rec2_non_Bell_poly1}
    \tB_{n+1}(a_1,\cdots, a_{n+1})=\sum_{k=0}^n\binom{n}{k}\tB_k(a_1,\cdots,a_k)a_{n-k+1}.
\end{align}
We further introduce the {\em noncommutative partial Bell polynomials} $\tB_{n,k}=\tB_{n,k}(a_1,\cdots,a_{n-k+1})$, which only contains words $w$ of length $\#(w)=k\leq n$. Specifically,
\begin{align}\label{rec3_non_Bell_poly}
    \tB_{n,k}:=\sum_{\substack{w\in\mathbb{A}^{*},\\ |w|=n,\#(w)=k}}\kappa(\|w\|)\stir{n}{\|w\|}w,
\end{align}
where for $w=a_{j_1}a_{j_2}\cdots a_{j_k}$, we have $\|w\|=|a_{j_1}|,|a_{j_2}|,\cdots,|a_{j_k}|=j_1,j_2,\cdots,j_k$, and 
\begin{align}
\stir{n}{\|w\|}&=\binom{n}{|a_{j_1}|,|a_{j_2}|,\cdots,|a_{j_k}|}:=\frac{n!}{j_1!j_2!\cdots j_k!},\label{multi-binomial}\\
\kappa(\|w\|)&=\kappa(|a_{j_1}|,|a_{j_2}|,\cdots,|a_{j_k}|):=\frac{j_1j_2\cdots j_k}{j_1(j_1+j_2)\cdots(j_1+j_2+\cdots+j_k)}.\label{symmetric_group_partition}
\end{align}
With the  noncommutative partial Bell polynomials defined as \eqref{rec2_non_Bell_poly1}, it can be verified that $\tB_n$ and $\tB_{n,k}$ satisfies:
\begin{align*}
    \tB_n(a_1,\cdots,a_n)=\sum_{k=1}^n\tB_{n,k}(a_1,\cdots,a_{n-k+1}),
\end{align*}
which is similar to the commutative Bell polynomials. Coefficients \eqref{multi-binomial}-\eqref{symmetric_group_partition} can be interpreted combinatorially. One identifies $\stir{n}{\|w\|}$ as the multinomial coefficient, which is the number of distinct ways to permute a multiset of $n$ elements, where $j_i$ is the multiplicity of each of the $i$-th element. $\kappa$ forms a partition of unity on the symmetric group $S_k$, i.e. $\sum_{\sigma\in S_k}\kappa(\sigma(w))=1$, where $\sigma(w)$ denotes a permutations of letters in $w$. The noncommutative (partial) Bell polynomials become commutative (partial) Bell polynomials if the product of the associative algebra $\Dd$ is commutative, i.e. $a_ia_j=a_ja_i$ for any $i,j\geq 1$. One can easily verify this by summing up terms with the same {\em set} of subscript indices in \eqref{type1_Bell_example} and compare them with \eqref{f_k_complete_Bell_poly}.

The Type-I noncommutative Bell polynomials were introduced in \cite{munthe1995lie,munthe1998runge} to develop a Lie-Butcher theory to construct exponential integrator for time-dependent dynamical systems. The core problem in that context can be summarized as follows. Given a time-dependent vector field $\F_t$ on a smooth manifold $\M$. If $\Phi_{t,\F_t}$ is the associated flow map of diffeomorphism, then how to construct a series $\{\X_i\}_{i=1}^{\infty}\subset U(\X\M)$, where $U(\X\M)$ is the universal enveloping algebra  of all vector fields of all orders on $\M$, such that the pullback map $\Phi_{t,\F_t}^*\psi=\psi\circ\Phi_{t,\F_t}$ admits a series representation:
\begin{align*}
    \Phi_{t,\F_t}^*\psi=\sum_{j=0}^{\infty}\frac{t^j}{j!}\X_j[\psi],
\end{align*}
where $\X_j[\psi]$ is the {\em Lie derivative} of $\psi$ with respect to $\X_j$, which satisfies the noncommutative product rule: $\X_i\X_j[\psi]=\X_j[\X_i[\psi]]$. The Type-I noncommutative Bell polynomials provide a direct answer to this problem. Namely, using a homomorphism $\Dd\rightarrow U(\X\M)$ given by the map $a_i\rightarrow \F_t^{(i-1)}\big|_{t=0}=\F_i$, then $\X_j$ is explicitly given by the noncommutative Bell polynomial $\tB_j(\F_1,\cdots,\F_j)$, which leads to 
\begin{align}\label{exp_map_lie}
    \Phi_{t,\F_t}^*\psi=\sum_{j=0}^{\infty}\frac{t^j}{j!}\tB_j(\F_1,\cdots,\F_j)[\psi].
\end{align}
This can be viewed as a special Lie-Butcher series in the word representation. More generally, We denote $\Dd^*$ the dual of $\Dd$, i.e. $\Dd^*:\Dd\rightarrow \R$. Further define a concatenation homomorphism $\mathcal{C}_v:\Dd\rightarrow U(\X\M)$, then any Lie-Butcher series $\B_t:\Dd^*\rightarrow U(\X\M)^*$ is a map between the dual space explicitly given by
\begin{align}\label{general_lie_butcher}
    \B_t(\alpha)=\sum_{w\in\mathbb{A}^*}t^{|w|}\alpha(w)\mathcal{C}_v(w).
\end{align}
The Type-I noncommutative Bell polynomials will be used in Section \ref{sec:Time-d-CMZE} when we develop the combinatorial expansion for the time-dependent MZE \eqref{eqn:time_d_MZE_POPP}. In fact, the problem we will encounter is similar to the construction of Lie-Butcher series \eqref{exp_map_lie} with $\F_i$ replacing by operators $\L_i$ or $\Q\L_i$. Moreover, the resummation series introduced at the end of Section \ref{sec:Time-d-CMZE} are similar to the general Lie-Butcher series \eqref{general_lie_butcher}.  

\paragraph{Type-II noncommutative Bell polynomials} 
The construction of Type-II noncommutative Bell polynomials is much tailored for its usage when expanding the RHS of the ansatz \eqref{Com_expanion_time_depen} with respect to $t$. Hence, the rationale behind its construction is not clearly seen now but will become obvious in a minute. Specifically, using a different infinite alphabet $\Aa'=\{a'_1,a'_2,\cdots\}$ and the associative word algebra $\Dda$, the first few $\hB_n$s for the Type-II noncommutative Bell polynomials are given by :
\begin{equation}\label{type2_bell_example}
\begin{aligned}
    \hB_0&=\I\\
    \hB_1&=a'_1\\
    \hB_2&=a_1^{'2}+a'_2\\
    \hB_3&=a_1^{'3}+\frac{3}{2}a'_1a'_2+\frac{3}{2}a'_2a'_1+a'_3\\
    \hB_4&=a^{'4}_1+2a'_2a^{'2}_1+2a'_1a'_2a'_1+2a_1^{'2}a'_2+3a^{'2}_2+2a'_1a'_3+2a'_3a'_1+a'_4.
\end{aligned}
\end{equation}
The general expression for $\hB_n$ is given in terms of the the {\em noncommutative partial Bell polynomials} $\hB_{n,k}=\hB_{n,k}(a'_1,\cdots,a'_{n-k+1})$, where
\begin{align}\label{rec3_non_Bell_poly_type2}
    \hB_{n,k}:=\sum_{\substack{w\in\mathbb{A}^{*},\\ |w|=n,\#(w)=k}}\frac{1}{k!}\stir{n}{\|w\|}w.
\end{align}
Here $\stir{n}{\|w\|}$ is the multinormial coefficient defined as \eqref{multi-binomial}. $\hB_{n,k}$ only contains words $w$ of length $\#(w)=k\leq n$. The full Bell polynomials $\hB_n$ and the partial ones $\hB_{n,k}$ satisfy:
\begin{align*}
    \hB_n(a'_1,\cdots,a'_n)=\sum_{k=1}^n\hB_{n,k}(a'_1,\cdots,a'_{n-k+1}),
\end{align*}
which is similar to the commutative Bell polynomials. 
Type-II (partial) Bell polynomials also degenerate to the commutative (partial) Bell polynomials if the product of the associative algebra $\Dda$ comes commutative, i.e. $a'_ia'_j=a'_ja'_i$ for any $i,j\geq 1$. One can easily verify this by summing up terms with the same {\em set} of subscript indices in \eqref{type2_bell_example} and compare them with \eqref{f_k_complete_Bell_poly}. 

\paragraph{Non-commutative bipartition polynomials}
To discuss the Taylor series expansion of the LHS of \eqref{Com_expanion_time_depen}, we consider a different infinite alphabet $\mathbb{B}=\{b_j\}_{j=1}^{\infty}$ and define a new series of noncommutative polynomials of words in $\Ddb$, called the {\em noncommutative bipartition polynomial}. Specifically, this set of polynomials is defined using the recurrence relation:
\begin{equation}\label{rec_non_bipart_poly}
\begin{aligned}
\begin{dcases}
    \tP_0&=\I\\
    \tP_1&=b_1\\
    \tP_n&=\sum_{k=0}^{n-1}(-1)^{\delta_{k0}+1}b_{n-k}\tP_{k}(b_1,\cdots,b_{n-k}).
\end{dcases}
\end{aligned}
\end{equation}
The first few $\tP_n$s are:
\begin{align*}
    \tP_0&=\I\\
    \tP_1&=b_1\\
    \tP_2&=-b_1^2+b_2\\
    \tP_3&=b_1^3-b_1b_2-b_2b_1+b_3\\
    \tP_4&=-b_1^4+b_2b_1^2+b_1b_2b_1+b_1^2b_2-b_2^2-b_1b_3-b_3b_1+b_4.
\end{align*}
If we further define the {\em noncommutative partial bipartition polynomial} $\tP_{n,k}$ as 
\begin{align}\label{rec1.5_non_bipart_poly}
    \tP_{n,k}:=\sum_{\substack{w\in\mathbb{B}^{*},\\ |w|=n,\#(w)=k}}(-1)^{k+1}w,
\end{align}
where the summation goes over for all possible ordered partitions of the integer $k$. Then we get a similar relation:
\begin{align}\label{rec3_non_bipart_poly}
    \tP_n(b_1,\cdots,b_n)=\sum_{k=1}^n\tP_{n,k}(b_1,\cdots,b_{n-k+1}).
\end{align}
Similarly, if $\Ddb$ is commutative, we have the {\em commutative bipartition polynomials} $P_n$. The first few $P_n$s are:
\begin{align*}
    P_0&=\I\\
    P_1&=b_1\\
    P_2&=-b_1^2+b_2\\
    P_3&=b_1^3-2b_1b_2+b_3\\
    P_4&=-b_1^4+3b_1^2b_2-2b_1b_3-b_2^2+b_4.
\end{align*}
$P_n$s can also be written as the summation of the {\em commutative partial bipartition polynomials} $P_{n,k}$ as:
\begin{align}\label{rec2_bipart_poly_decomp}
    P_n(b_1,\cdots,b_n)=\sum_{k=1}^nP_{n,k}(b_1,\cdots,b_{n-k+1}).
\end{align}
Using a little bit of combinatorics, we can determine that 
\begin{align}\label{rec2_bipart_poly}
    P_{n,k}(b_1,\cdots,b_k)=\sum_{\substack{1\leq j_i,1\leq i\leq k,\\j_1+\cdots+j_k=n,\\j_1\leq j_2\leq\cdots \leq j_k}}(-1)^{k+1}{n \brack j_1,\cdots,j_k} b_{j_1}b_{j_2}\cdots b_{j_k},
\end{align}
where ${n \brack j_1,\cdots j_k}$ is the number of ways to partition a set of $n$ unlabeled objects into $k$ labeled bins, with each bin has at least one object, and one of the bins has $j_1$ objects, one of the bins has $j_2$ objects, and so on. One can also calculate $P_n$ recursively as:
\begin{align*}
    P_n(b_1,b_2,\cdots,b_n)=b_n-\sum_{k=1}^{n-1}b_{k} P_{n-k}(b_1,b_2,\cdots,b_{n-k}).
\end{align*}
We notice that this recurrence relation is exactly the same as \eqref{recurrence_relation}. 
\subsection{Symbolic algebraic equation of words}
\label{sec:alge_eqn_words}
Now suppose we have a system of algebraic equations which connects the noncommutative (commutative) polynomials $Q_n^a=(a_1,\cdots,a_n)$ in $\Dd$ and another noncommutative (commutative) polynomials $Q_n^b=(b_1,\cdots,b_n)$ in $\Ddb$ by
\begin{equation}\label{alge_eqn_word}
\begin{aligned}
\begin{dcases}
    Q^b_{m}(b_1,\cdots,b_{m})&=f_0,\\
    Q^b_{n+m}(b_1,\cdots,b_{n+m})&=\sum_{k=1}^nf_kQ^a_{n,k}(a_1,\cdots,a_{n-k+1}),\qquad \text{for all $n\geq 1$}.
\end{dcases}
\end{aligned}
\end{equation}
Equation \eqref{alge_eqn_word} will be called a {\em symbolic algebraic equation of words} and it is formulated in a rather general way. Here $\Dd$ and $\Ddb$ can be general alphabets, not necessarily the same as the ones used in the previous section. $Q^b_{n}$ and $Q^a_{n}$ can be any commutative or noncommutative polynomials of words introduced in Section \ref{sec:bell_poly}. $m$ can be any non-negative integer. In this section, we discuss the existence and uniqueness of the solution to \eqref{alge_eqn_word}, regardless of the specific setting of $\Dd$, $\Ddb$ and $Q^b_{n}$ and $Q^a_{n}$. In the follow-up sections, the noncommutative combinatorial expansion for the MZE can be obtained by solving \eqref{alge_eqn_word}, after we introduce algebra homomorphisms which translate the alphabets $\Dd$ and $\Ddb$ into operators. Hence, the discussion in this section will establish the theoretical foundation for the existence and uniqueness of the noncommutative combinatorial expansion theory, which we now detail. The following lemma provides two sufficient conditions to guarantee the solvability of Eqn \eqref{alge_eqn_word}.
\begin{lemma}\label{lemma1}
The system of algebraic equation of words \eqref{alge_eqn_word} has an unique solution $\{f_k\}_{k=0}^{\infty}$ for the following two cases:
\begin{enumerate}
    \item The letter $a_1\in\mathbb{A}$ is invertible and has inverse $a_1^{-1}$ in $\Dd^+$. Here, we considered  an extended associative algebra $\Dd^+$ which includes $a_1^{-1}$ into the original alphabet $\mathbb{A}$. $a_1$ in invertible implies $a_1a_1^{-1}=a_1^{-1}a_1=\I$. 
    \item All the odd letters in $\Aa$ and $\Bb$ are zero, i.e. $a_1,a_3\cdots=0$ and $b_1,b_3,\cdots=0$. $a_2$ is invertible in $\Dd$ and the inverse is $a_2^{-1}$. Moreover, $m$ in Eqn \eqref{alge_eqn_word} is an even integer. The invertibility of $a_2$ has the exact same meaning as in Case 1.  
\end{enumerate}
In each case, $\{f_k\}_{k=0}^{\infty}$ are polynomials in $\Dd^+\cup\Ddb$, where $\Dd^+$ corresponds to extended alphabets $\{a_1^{-1},a_1,a_2,\cdots\}$ and $\{a_2^{-1},a_2,a_4,\cdots\}$ respectively. Moreover, $\{f_k\}_{k=0}^{\infty}$ can be constructed recursively.
\end{lemma}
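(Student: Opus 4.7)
My approach is strong induction on the index $k$ of $f_k$, reading the system \eqref{alge_eqn_word} level by level. The base case is immediate: the first equation of \eqref{alge_eqn_word} prescribes $f_0 = Q^b_m(b_1,\ldots,b_m) \in \Ddb$ directly. For the inductive step I plan to isolate the extremal summand of the right-hand side at each level and invert the single letter that appears there, using the invertibility hypothesis of Case 1 or Case 2.

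For Case 1, suppose $f_0,\ldots,f_{n-1}$ have been constructed. The key structural observation, checked directly against the definitions \eqref{rec3_non_Bell_poly}, \eqref{rec3_non_Bell_poly_type2} and \eqref{rec1.5_non_bipart_poly}, is that for each of the three polynomial families of Section~\ref{sec:bell_poly} the top partial polynomial $Q^a_{n,n}$ reduces to a single monomial in $a_1$: the defining sum ranges over words $w$ with $|w|=n$ and $\#(w)=n$, which forces $w=a_1^n$, and a one-line calculation of the accompanying coefficient ($\kappa(\|w\|)\stir{n}{\|w\|}$, $\tfrac{1}{n!}\stir{n}{\|w\|}$, or $(-1)^{n+1}$ respectively) gives $Q^a_{n,n} = \epsilon_n a_1^n$ with $\epsilon_n\in\{+1,-1\}$. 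Rearranging the equation at level $n$ and right-multiplying by $a_1^{-n}\in\Dd^+$ yields
\begin{equation*}
f_n = \epsilon_n^{-1}\Bigl(Q^b_{n+m}(b_1,\ldots,b_{n+m}) - \sum_{k=1}^{n-1} f_k\, Q^a_{n,k}(a_1,\ldots,a_{n-k+1})\Bigr) a_1^{-n},
\end{equation*}
which uniquely determines $f_n$ as a polynomial in $\Dd^+\cup\Ddb$. Uniqueness at every stage propagates to uniqueness of the full sequence $\{f_k\}_{k=0}^{\infty}$.

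For Case 2 I would first observe that the parity hypotheses force all polynomials indexed by an odd integer to vanish, since any nonzero word is a product of even-indexed letters and therefore has even total weight. The equation at odd level $n$ is consequently $0=0$ and imposes no constraint, whereas at even level $n=2N$ only the summands with $k\leq N$ can be nonzero, with equality $k=N$ attained uniquely by the word $a_2^N$. The same coefficient computation as above gives $Q^a_{2N,N} = c_N a_2^N$ for a nonzero scalar $c_N$, and the isolation-and-inversion argument, now using $a_2^{-N}\in\Dd^+$, solves uniquely for $f_N$ in terms of $f_0,\ldots,f_{N-1}$. The assumption that $m$ is even ensures that $Q^b_{2N+m}$ on the left is itself an even-indexed (hence non-vanishing) polynomial, so the equation is nondegenerate.

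The main technical point, and essentially the only place where anything beyond bookkeeping is required, is the verification that for each polynomial family the extremal partial polynomial $Q^a_{n,n}$ (respectively $Q^a_{2N,N}$) reduces to an invertible scalar multiple of a pure power of $a_1$ (respectively $a_2$). Once this is established, the induction runs mechanically and simultaneously produces the promised recursive construction of $\{f_k\}_{k=0}^{\infty}$.
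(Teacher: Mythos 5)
Your proposal is correct and follows essentially the same route as the paper's proof: both arguments isolate the extremal partial polynomial $Q^a_{n,n}=\pm a_1^n$ (respectively $Q^a_{2N,N}=c_N a_2^N$ after the parity/pigeonhole reduction in Case 2) and solve for the new unknown by right-multiplying by the inverse power, recursing on the level. The only cosmetic difference is that you verify the extremal coefficient directly for each polynomial family, whereas the paper checks one representative case and notes the argument is coefficient-independent.
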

\begin{proof}
The proof can be obtained using direct computation and combinatorics. We first consider case 1. For $n=1$, $Q_{1,1}^a=a_1$ for all possible polynomials given in the previous section, hence the first equation of \eqref{alge_eqn_word} is given by: $Q^b_{m+1}=f_1a_1$, which implies $f_1=Q^b_{m+1}a_1^{-1}$ when $a_1^{-1}$ exists. For $n\geq 2$, $Q^a_{n}$ can be any one of the Bell polynomials $\tB_{n}$, $\hB_{n}$, $B_n$ or the bipartition polynomials $\tP_{n}$, $P_n$. For all these cases, we always have $Q^a_{n,n}(a_1)=(\pm a_1)^n=(\pm 1)^{n+1}a_1^n$, where $+$ is for the Bell polynomials case and $-$ is for the bipartition polynomial case. As a result, we have 
\begin{align*}
    Q^b_{n+m}(b_1,\cdots,b_{n+m})&=\sum_{k=1}^{n-1}f_kQ^a_{n,k}(a_1,\cdots,a_{n-k+1})+f_nQ^a_{n,k}(a_1)\\
    &=\sum_{k=1}^{n-1}f_kQ^a_{n,k}(a_1,\cdots,a_{n-k+1})+(\pm 1)^{n+1}f_na_1^n.
\end{align*}
Since $a_1$ is invertible, $f_n$ can be solved recursively by:
\begin{align*}
   (\pm 1)^{n+1} f_na_1^n&=Q^b_{n+m}(b_1,\cdots,b_{n+m})-\sum_{k=1}^{n-1}f_kQ^a_{n,k}(a_1,\cdots,a_{n-k+1})\\
 \Rightarrow\quad   (\pm 1)^{n+1}f_n&=Q^b_{n+m}(b_1,\cdots,b_{n+m})(a_1^{-1})^n-\sum_{k=1}^{n-1}f_kQ^a_{n,k}(a_1,\cdots,a_{n-k+1})(a_1^{-1})^n,\qquad n\geq 2.
\end{align*}
As an example, if $Q^b_{n+m}=\tP_n$ and $Q^a_{n}=\tB_n$, then the first few $f_k$s are:
\begin{align*}
    f_1&=b_1a_1^{-1}\\
    f_2&=-b_1^2a_1^{-2}+b_2a_1^{-2}-b_1a_1^{-1}a_2a_1^{-2}\\
    f_3&=b_3a_1^{-3}-b_1b_2a_1^{-3}-b_2b_1a_1^{-3}+b_3a_1^{-3}+2b_1^2a_1^{-1}a_2a_1^{-3}-2b_2a_1^{-1}a_2a_1^{-3}+2b_1a_1^{-1}a_2a_1^{-1}a_2a_1^{-3}\\
    &\quad +b_1^2a_1^{-2}a_2a_1^{-2}-b_2a_1^{-2}a_2a_1^{-2}+b_1a_1^{-1}a_2a_1^{-2}a_2a_1^{-2}-b_1a_1^{-1}a_3a_1^{-3}.
\end{align*}
For the second case, we first note that the commutative Bell/bipartition polynomials can be viewed as a special case of the general noncommutative Bell/bipartition polynomials. Moreover, for each $n$, the noncommutative Bell polynomials $\tB_n$, $\hB_n$ and bipartition polynomials $\tP_n$ only differ by the coefficients in front of the words. As a result, we can simply get a proof for a special case when $Q^b_{n+m}=\tP_{n+m}$ and $Q^a_n=\tB_n$. When the proof uses no information regarding the coefficients, then without loss of generality, we can conclude that the result is valid for all other cases. 

Bearing this in mind, now we assume $Q^b_{n+m}=\tP_{n+m}$ and $Q^a_n=\tB_n$. 
For any $w=a_{j_1}a_{j_2}\cdots a_{j_k}\in\tB_{n},\tP_{n+m}$ for any odd $n$, $n+m$, there must exists a letter with odd $j_i$, otherwise it contradict with the fact that $j_1+j_2+\cdots j_k$ is odd. Since all odd alphabet $a_j,b_j$ vanished, this implies all odd order $\tP_{n+m}$ and $\tB_{n}$ vanished. Hence the first equation of Eqn \eqref{alge_eqn_word} becomes $\tP_{m+1}=\tB_1=0$ and whole equation becomes a system of equations for $\{b_2,b_4,\cdots\} $ and $\{a_2,a_4,\cdots\}$. Now we only need to verify that such an equation can be solved recursively assuming only $a_2^{-1}$ is invertible. 
To this end, we note that for any word belonging to $\tB_{n}$, the index runs through all possible partitions of the integer $n$. From the above analysis, we know that only the partition that involves even alphabets $a_{2},a_{4},\cdots$ survives. Hence for any $w\in \tB_n$, only the word satisfying $\#(w)2\leq n$ are nonzero, which implies $\#(w)\leq \lfloor n/2\rfloor$, where $\lfloor\cdot\rfloor$ is the floor function. 
Naturally, we found that $\tB_n$ is a polynomial with $f_{k_1}a_2^{k_1}$, $k_1\leq \lfloor n/2\rfloor$, $f_{k_2}a_4^{k_2}$, $k_2\leq \lfloor n/4\rfloor$ $\cdots$ with $k_1>k_2>\cdots$. Hence $\tP_{m+2}=\tB_2$ is an equation for $a_2$ and $b$s, $\tP_{m+4}=\tB_4$ is an equation for $a_2,a_4$ and $b$s $\cdots$, and new $f_k$s always first appear with $f_{k_1}a_2^{k_1}$, $k_1\leq \lfloor n/2\rfloor$. As a result, one can solve for $f_k$s assuming only $a_2^{-1}$ is invertible. 
The above arguments are based on simple combinatorial rules that are similar to the pigeonhole principle. Note that we have not used any information regarding the word coefficients, Naturally, the results hold for any $Q^b_{n+m}$ and $Q^a_n$ for even $m$. As an example, if $Q^b_{n+m}=\tP_n$ and $Q^a_{n}=\tB_n$, for the second case, the first few $f_k$s are:
\begin{align*}
    f_1&=b_2a_2^{-1}\\
    f_2&=\frac{1}{3}b_2^2a_2^{-2}+\frac{1}{3}b_4a_2^{-2}-\frac{1}{3}b_2a_2^{-1}a_4a_2^{-2}.
\end{align*}

\end{proof}

Lemma \ref{lemma1} provides two sufficient conditions to get the unique solution for Eqn \eqref{alge_eqn_word}. These two cases happen to be the most common ones we will encounter when using CMZE in applications, as we have seen in Section \ref{sec:CCMZE}. More specifically, when $\L$ is a non-skew-Hermitian operator, we will have Case 1. Otherwise, we will get Case 2. In principle, we can also consider other situations 
\footnote{
For instance, if we only have $a_1=\cdots a_{i-1}=0$ and $a_i^{-1}$ exists, the words in $\tB_{n}$ which contains any $a_j$, $j\leq i-1$ will be $0$. Combinatorially, this means only the partition that involves $a_{i},a_{i+1},\cdots$ survives. Hence for any $w\in \tB_n$, only the word satisfying $\#(w)i\leq n$ are nonzero, which implies $\#(w)\leq \lfloor n/i\rfloor$. Naturally, we found that: (I) The second nonzero $\tB_n$ is when $n=2i$, which contains words of length 1 and 2. (2) For any $2i\leq n< 3i$, $\tB_n$ only contains  words of length 1 and 2. (3) For any $ki\leq n< (k+1)i$, $\tB_n$ only contains words of length $1,2\cdots,k$. Under this setting, we find that the equation between $ki\leq n< (k+1)i$ becomes {\em overdetermined} equation for $f_1,\cdots f_k$. Generally speaking, this indicates \eqref{alge_eqn_word} has no solution.}
, but the existence and uniqueness of the solution may not be guaranteed. Luckily, we probably will not deal with any of these in applications.

\subsection{Time-independent CMZE}\label{sec:main_thm}
With the noncommutative Bell and bipartition polynomials and the symbolic algebraic equation \eqref{alge_eqn_word}, now we can derive the combinatorial expansion for the operator-form MZEs \eqref{eqn:time_ind_PMZE_POP} and \eqref{eqn:time_d_MZE_POPP} and prove the main theorems of this paper. The road map is as follows. We first prove that the Taylor series expansion at the two sides of the combinatorial expansion ansatz:
\begin{align}\label{type1_expan}
   \P\L e^{t\Q\L}\Q\L\P=\sum_{n=0}^{\infty}\frac{1}{n!}\F_n(\P e^{t\L}\P-\P)^n
\end{align}
is given by the noncommutative polynomials introduced in Section \ref{sec:bell_poly}, after choosing suitable algebra homomorphisms to map $\Dd$ and $\Ddb$ into the free, unital, and associative algebra of operators. Naturally, \eqref{type1_expan} can be reformulated as the algebraic equation for words \eqref{alge_eqn_word}, which can be solved recursively using the result of Lemma \ref{lemma1}. We begin our discussion with the combinatorial expansion for the Mori-Zwanzig equation of time-independent systems. To this end, two identical alphabets $\Aa'=\Bb$ are considered, along with the homomorphisms induced by the map $a'_i=b_i\rightarrow \P\L^{i}\P$. As a result, the solution to Eqn \eqref{alge_eqn_word} with $Q_{n+m}^b=\tP_{n+2}$ and $Q_{n}^a=\hB_n$ yields the combinatorial expansion of the Mori-Zwanzig equation for the time-independent system. The result can be summarized as follows.
\begin{theorem}
\label{thm_combi_expansion1}
(Time-independent CMZE) Suppose the identity operator $\I$ in Banach space $X$ can be orthogonally partitioned by two projection operators $\P,\Q$ as $\I=\P+\Q$ with $\P\Q=0$. If the closure of the operators $\L:X\rightarrow X$ and $\Q\L:X\rightarrow X$ generate two one-parameter semigroups $e^{t\L}$ and $e^{t\Q\L}$ respectively, moreover, there exists a subspace $V\subset X$ such that $\text{Ran}(\P)\subset V$ and the restriction of $\P\L\P$ in V, i.e. $\P\L\P|_V$, is invertible, then we have the following combinatorial expansion of the differential-form Dyson's identity \eqref{eqn:time_ind_PMZE_POP}:
\begin{align}\label{op_id_1_nonc}
    \frac{d}{dt}\P e^{t\L}\P =\P e^{t\L}\P\L\P+\sum_{n=0}^{\infty}\sum_{k=0}^n\frac{(-1)^{n-k}}{(n-k)!k!}\int_0^t\P e^{(t-s)\L}\F_n\left(\P e^{s\L}\P\right)^kds,
\end{align}
where in \eqref{op_id_1_nonc}, $\F_n$s are the polynomials of operators which can be constructed recursively by solving the operator algebraic equation in $V$:
\begin{equation}\label{eqn:rec_op_sol}
\begin{aligned}
\begin{dcases}
\F_0&=\P\L\P,\\
\F_n&=-\tP_{n+2}(\P\L\P,\cdots,\P\L^{n+2}\P)(\P\L\P|_{V})^{-n}\\
&\ \ \ \ \ \ \ +\sum_{k=1}^{n-1}\F_{k}\hB_{n,k}(\P\L\P,\cdots,\P\L^{(n+1-k)}\P)(\P\L\P|_{V})^{-n}, \qquad n\geq 1.
\end{dcases}
\end{aligned}
\end{equation}
Here $(\P\L\P|_{V})^{-1}$ is the inverse of operator $\P\L\P$ in $V$. 
\end{theorem}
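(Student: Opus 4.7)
The strategy is to derive and then solve, at the operator level, the ansatz
\begin{equation*}
\P\L e^{t\Q\L}\Q\L\P \;=\; \sum_{n=0}^{\infty}\F_n\bigl(\P e^{t\L}\P-\P\bigr)^n,
\end{equation*}
which, once established, will be substituted into the memory term of the projected MZE in its reformulated version \eqref{memory_reformulation} to produce \eqref{op_id_1_nonc} after a binomial rearrangement. The plan is to match Taylor coefficients in $t$ on the two sides of the ansatz, identify the resulting operator identities with an instance of the symbolic algebraic equation \eqref{alge_eqn_word}, and then invoke Lemma \ref{lemma1} to obtain $\{\F_n\}$ recursively.

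First, I would expand the left-hand side. Using the analyticity of $e^{t\Q\L}$ together with the operator recurrence \eqref{recurrence_relation_op}, a direct induction shows that under the homomorphism $b_i\mapsto\P\L^i\P$ the noncommutative bipartition polynomials \eqref{rec_non_bipart_poly} satisfy
$$\P\L(\Q\L)^n\P \;=\; \tP_{n+1}\bigl(\P\L\P,\P\L^2\P,\ldots,\P\L^{n+1}\P\bigr),$$
so that $\P\L e^{t\Q\L}\Q\L\P=\sum_{N\geq 0}\frac{t^N}{N!}\tP_{N+2}(\P\L\P,\ldots,\P\L^{N+2}\P)$. Next, I would expand the right-hand side: writing $\P e^{t\L}\P-\P=\sum_{m\geq 1}\tfrac{t^m}{m!}\P\L^m\P$, the $n$-fold product $(\P e^{t\L}\P-\P)^n$ can be regrouped by total weight $N=m_1+\cdots+m_n$, and matching the resulting multinomial coefficients with the definition \eqref{rec3_non_Bell_poly_type2} of the Type-II noncommutative partial Bell polynomial $\hB_{N,n}$ (under the homomorphism $a_i'\mapsto\P\L^i\P$) yields
$$\frac{1}{n!}\bigl(\P e^{t\L}\P-\P\bigr)^n \;=\; \sum_{N\geq n}\frac{t^N}{N!}\,\hB_{N,n}\bigl(\P\L\P,\ldots,\P\L^{N-n+1}\P\bigr).$$

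Equating coefficients of $t^N$ on both sides of the ansatz then produces precisely a symbolic algebraic equation of the form \eqref{alge_eqn_word} with $Q^b_{n+m}=\tP_{n+m}$, $Q^a_n=\hB_n$, and $m=2$. Because $\P\L\P|_V$ is invertible by hypothesis, Lemma \ref{lemma1} (Case 1) applies: it guarantees existence and uniqueness of the coefficients $\{\F_n\}_{n\geq 1}$ and produces them as noncommutative polynomials in the operators $\P\L^j\P$ and $(\P\L\P|_V)^{-1}$, matching the recursion \eqref{eqn:rec_op_sol}. To convert the validated ansatz into the form \eqref{op_id_1_nonc}, I would use the fact that $\P$ acts as a two-sided identity on $A:=\P e^{t\L}\P$ (since $\P A=A\P=A$) together with $\P^2=\P$; this makes $\P$ and $A$ commute in the associative algebra of operators, so the ordinary binomial identity gives $(A-\P)^n=\sum_{k=0}^{n}\binom{n}{k}(-1)^{n-k}A^k$ with the convention $A^0:=\P$ at $k=0$. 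Substituting this into the memory integral of the MZE then yields the double-sum expansion displayed in the theorem.

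The main obstacle I anticipate is the combinatorial bookkeeping in the second step, namely verifying that $(\P e^{t\L}\P-\P)^n/n!$ reproduces exactly $\hB_{N,n}$ term by term. This reduces to matching the multinomial factor $\tfrac{1}{n!}\binom{N}{m_1,\ldots,m_n}$ arising from the Cauchy product of $n$ exponential tails against the coefficient prescribed in \eqref{rec3_non_Bell_poly_type2}, and it is what forces the precise choice of Type-II (rather than Type-I) noncommutative Bell polynomials here. By contrast, the left-hand identification with bipartition polynomials follows directly by comparing the recurrences \eqref{recurrence_relation_op} and \eqref{rec_non_bipart_poly}, and once both sides are expressed in the appropriate polynomial bases, the remaining work is a mechanical application of Lemma \ref{lemma1} and the binomial rearrangement above.
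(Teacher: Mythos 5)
Your proposal follows essentially the same route as the paper's own proof: expand both sides of the ansatz in $t$, identify the Taylor coefficients of the left side with the bipartition polynomials $\tP_{n+2}$ via the recurrence \eqref{recurrence_relation_op} and those of the right side with the Type-II partial Bell polynomials $\hB_{N,n}$, reduce to the symbolic equation \eqref{alge_eqn_word} with $m=2$, invoke Lemma \ref{lemma1} (Case 1), and finish with the binomial rearrangement using $\P A=A\P=A$ for $A=\P e^{t\L}\P$. The one discrepancy is that you write the ansatz without the $1/n!$ (the paper's \eqref{type1_expan} reads $\sum_{n}\frac{1}{n!}\F_n(\P e^{t\L}\P-\P)^n$), so the coefficients you solve for would differ from those in \eqref{eqn:rec_op_sol} by a factor of $n!$ and the final double sum would carry $\binom{n}{k}$ rather than $\frac{1}{(n-k)!k!}$; restoring the $1/n!$ makes your argument line up with the stated recursion and with \eqref{op_id_1_nonc} exactly.
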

\begin{proof}
We follow the aforementioned road map to get the proof. Under the map $b_i\rightarrow\P\L^i\P$, one easily finds that the recurrent relation \eqref{recurrence_relation_op} is exactly the same as \eqref{rec_non_bipart_poly}, hence $\tP_n(\P\L\P,\cdots,\P\L^n\P)=\P\L(\Q\L)^{n-1}\P$. As a result, the Taylor series expansion of the LHS of \eqref{type1_expan} can be written as
\begin{align}\label{type1_LHS}
    \P\L e^{t\Q\L}\Q\L\P=\sum_{n=0}^{\infty}\frac{t^n}{n!}\tP_{n+2}(\P\L\P,\cdots,\P\L^{n+2}\P).
\end{align}
On the other hand, under the map $a_i\rightarrow\P\L^{i}\P$, we have the following expansion for $(\P e^{t\L}\P-\P)^n$: 
\begin{align*}
(\P e^{t\L}\P-\P)^n
&=\left(\sum_{j_1=1}^{\infty}\frac{t^{j_1}}{j_1!}\P\L^{j_1}\P\right)\left(\sum_{j_2=1}^{\infty}\frac{t^{j_2}}{j_2!}\P\L^{j_1}\P\right)\cdots \left(\sum_{j_n=1}^{\infty}\frac{t^{j_n}}{j_n!}\P\L^{j_n}\P\right)\\
&=\sum_{m=n}^{\infty}\frac{t^m}{m!}\sum_{\substack{j_1+j_1+\cdots j_n=m,\\ j_i\geq 1}}\frac{m!}{j_1!j_2!\cdots j_n!}(\P\L^{j_1}\P)(\P\L^{j_2}\P)\cdots(\P\L^{j_n}\P)
\\
&=\sum_{m=n}^{\infty}\frac{t^m}{m!}n!\hB_{m,n}(\P\L\P,\cdots,\P\L^{m+1-n}\P).
\end{align*}
The above formula is just the noncommutative version of Fais's derivation of Fa\`a di Bruno formula (See Page 227 in \cite{johnson2002curious}). Substituting this expansion into the RHS of \eqref{type1_expan}, we get
\begin{align}
\sum_{n=0}^{\infty}\frac{1}{n!}\F_n(\P e^{t\L}\P-\P)^n
&=\sum_{n=0}^{\infty}\frac{1}{n!}\F_n\sum_{m=n}^{\infty}
\frac{t^m}{m!}n!\hB_{m,n}(\P\L\P,\cdots,\P\L^{m+1-n}\P)
\nonumber\\
&=
\sum_{m=0}^{\infty}\frac{t^m}{m!}\sum_{n=0}^m\F_n\hB_{m,n}(\P\L\P,\cdots,\P\L^{m+1-n}\P).
\label{type1_RHS}
\end{align}
Matching different order $t^n$ terms on the RHS of \eqref{type1_LHS} and \eqref{type1_RHS}, we get the recursive algebraic equation:
\begin{equation}\label{eqn:rec_op_nonc}
\begin{aligned}
\begin{dcases}
&\F_0=\P\L^2\P-(\P\L\P)^2,\\
&\tP_{n+2}(\P\L\P,\cdots,\P\L^{n+2}\P)=\sum_{k=1}^{n}\F_{k}\hB_{n,k}(\P\L\P,\cdots,\P\L^{(n+1-k)}\P)\qquad n\geq 1.
\end{dcases}
\end{aligned}
\end{equation}
Here we used the definition $\hB_{n,k}=0$ for $n>0,k=0$. Under the assumption that $\P\L\P$ in invertible in subspace $V\subset X$, with the Case 1 result of Lemma \ref{lemma1}, we confirm that algebraic equation \eqref{eqn:rec_op_nonc} has an unique solution $\{\F_k\}_{k=1}^{\infty}$. In fact, according to the definition of the Type-II noncommutative polynomial $\hB_{n,k}$, the second equation in \eqref{eqn:rec_op_nonc} can be decomposed as:
\begin{align*}
    \tP_{n+2}(\P\L\P,\cdots,\P\L^{n+2}\P)=\sum_{k=1}^{n-1}\F_{k}\hB_{n,k}(\P\L\P,\cdots,\P\L^{(n+1-k)}\P)+\F_n(\P\L\P)^n.
\end{align*}
Then we have 
\begin{align*}
\F_n&=-\tP_{n+2}(\P\L\P,\cdots,\P\L^{n+2}\P)(\P\L\P|_{V})^{-n}\\
&\ \ \ \ \ \ \ +\sum_{k=1}^{n-1}\F_{k}\hB_{n,k}(\P\L\P,\cdots,\P\L^{(n+1-k)}\P)(\P\L\P|_{V})^{-n}, \qquad n\geq 1.
\end{align*}
Following this procedure, we can get the explicit expression for the first few $\F_k$s:
\begin{equation}\label{first_few_f_n}
\begin{aligned}
    \F_0&=\P\L^2\P-(\P\L\P)^2\\
    \F_1&=(\P\L\P)^2-\P\L\P\L^2\P(\P\L\P|_V)^{-1}-\P\L^2\P+\P\L^3\P(\P\L\P|_V)^{-1}\\
    \F_2&=\P\L\P\L^2\P(\P\L\P|_V)^{-1}\P\L^2\P(\P\L\P|_V)^{-2}
    -\P\L^3\P(\P\L\P|_V)^{-1}\P\L^2\P(\P\L\P|_V)^{-2}\\
    &\ \ -(\P\L\P)^2+\P\L\P\L^2\P(\P\L\P|_V)^{-1}-\P\L\P\L^3\P(\P\L\P|_V)^{-2}+\P\L^2\P\\
    &\ \ -\P\L^3\P(\P\L\P|_V)^{-1}+\P\L^4\P(\P\L\P|_V)^{-2}.
\end{aligned}
\end{equation}
In the derivation, we have consistently used the idempotence $\P^2=\P$. Moreover, with $\P^2=\P$, operator $\P$ and $\P e^{t\L}\P$ commute. Naturally, we can directly use the binomial theorem and obtain the following expansion:
\begin{align}\label{futher-type1_expansion}
   \P\L e^{t\Q\L}\Q\L\P=\sum_{n=0}^{\infty}\frac{1}{n!}\F_n(\P e^{t\L}\P-\P)^n=\sum_{n=0}^{\infty}\sum_{k=0}^n\frac{(-1)^{n-k}}{(n-k)!k!}\F_n(\P e^{t\L}\P)^k.
\end{align}
Here we emphasize that the above expansion is valid since $\text{Ran}(\P)\subset V$ and $\F_n$ always act after $\P$. This guarantees the invertibility of $\P\L\P$ in the expression of $\F_n$s. Substituting operator identity \eqref{futher-type1_expansion} into the MZ memory integral in \eqref{eqn:time_ind_PMZE_POP}, we obtain the combinatorial expansion \eqref{op_id_1_nonc} for the time-differential Dyson's identity. Further applying this operator equation into any observable function $v\in V\subset X$, we get the noncommutative CMZE for the projected quantity $\P e^{t\L}\P v$, which normally corresponds the time-autocorrelation function, or the Green's function for observable $v$. 
\end{proof}
\paragraph{Remark 1} In the whole derivation, we implicitly assumed the time-differentiability of semigroups up to an arbitrary order. We did not discuss the convergence problem for the semigroup Taylor series expansion. Hence the resulting combinatorial series expansion \eqref{op_id_1_nonc} is {\em not} guaranteed to be convergent for $v\in V\subset X$. Generally speaking, the convergence problem is rather difficult to address on a general basis since for most situations infinitesimal generators $\L$ and $\Q\L$ are unbounded operators with different analytical properties. In particular, the convergence of the Taylor series expansion for the effective semigroups $\P e^{t\L}\P$ and $\P\L e^{t\Q\L}\Q\L\P$ is related to the convergence of the Neumann series for $\P(\I-\L)^{-1}\P$ and $\P(\I-\Q\L)^{-1}\P$ in $\text{Ran}(\P)\subset X$, where $(\I-\L)^{-1}$ and $(\I-\Q\L)^{-1}$ are the resolvent operators. We will leave this as an open topic and await further investigation. Numerically, applications of the MZE in \cite{zhu2021effective,zhu2020generalized} have shown the convergence of these Neumann series for many Hamiltonian and stochastic systems. 
\paragraph{Remark 2} There are other ways to get a combinatorial expansion for the MZE. In fact, the following variants of the expansion ansatz:
\begin{align*}
\P\L e^{t\Q\L}\Q\L\P&=\partial_t\P\L e^{t\Q\L}\P=\partial_t\sum_{n=0}^{\infty}\frac{1}{n!}\F_n(\P e^{t\L}\P-\P)^n,\\
\P\L e^{t\Q\L}\Q\L\P&=\partial_t\P\L e^{t\Q\L}\P=\partial_t\sum_{n=0}^{\infty}\frac{1}{n!}\F_n(\P e^{t\L}\P)^n,\\
\P\L e^{t\Q\L}\Q\L\P&=\sum_{n=0}^{\infty}\frac{1}{n!}\F_n(\P e^{t\L}\P)^n
\end{align*}
also lead to self-consistent expansions of the MZE. Moreover, one can also place operator coefficients $\F_n$ to the RHS of $(\P e^{t\L}-\P)$ as:
\begin{align}\label{type1_variant_expan}
\P\L e^{t\Q\L}\Q\L\P=\sum_{n=0}^{\infty}\frac{1}{n!}(\P e^{t\L}\P-\P)^n\F_n.
\end{align}
For all these variations, interested readers may follow our road map and get the explicit form of the resulting combinatorial expansion series. Eventually, one is led to define a different version of the Type-II noncommutative Bell polynomials and then solve the corresponding symbolic equation of words. 
\begin{corollary}
\label{type1_co1}
If $\P\L^{2k+1}\P=0$ for $k\geq 0$ and $\P\L^2\P$ is invertible in $V\subset X$ and has inverse $\P\L^2\P|_V^{-1}$, assuming all other conditions listed in Theorem \ref{thm_combi_expansion1} are met, then we have a different set of operators $\{\F'_K\}_{k=1}^{\infty}$ such that 
the following differential-form Dyson's identity hold:
\begin{align}\label{op_id_1_nonc_co1}
    \frac{d}{dt}\P e^{t\L}\P =\P e^{t\L}\P\L\P+\sum_{n=0}^{\infty}\sum_{k=0}^n\frac{(-1)^{n-k}}{(n-k)!k!}\int_0^t\P e^{(t-s)\L}\F'_n\left(\P e^{s\L}\P\right)^kds,
\end{align}
where $\{\F'_n\}_{n=1}^{\infty}$ can be constructed recursively.
\end{corollary}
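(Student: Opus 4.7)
The proof would follow the same three-step road map that proved Theorem \ref{thm_combi_expansion1}, only with the second branch of Lemma \ref{lemma1} taking the place of the first. I would start from the identical operator ansatz
\[
\P\L e^{t\Q\L}\Q\L\P \;=\; \sum_{n=0}^{\infty}\frac{1}{n!}\F'_n\bigl(\P e^{t\L}\P-\P\bigr)^n
\]
and apply the same Taylor-series manipulations on both sides under the algebra homomorphism $a'_i = b_i \mapsto \P\L^i\P$. The left-hand side becomes $\sum_m \frac{t^m}{m!}\,\tP_{m+2}(\P\L\P,\ldots,\P\L^{m+2}\P)$, and the right-hand side reorganizes, as in \eqref{type1_RHS}, into $\sum_m \frac{t^m}{m!}\sum_{n=0}^m \F'_n\,\hB_{m,n}(\P\L\P,\ldots,\P\L^{m+1-n}\P)$. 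Matching powers of $t$ gives the same structural system \eqref{eqn:rec_op_nonc}, but with $\F'_n$ in place of $\F_n$.

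The decisive observation is that the hypothesis $\P\L^{2k+1}\P\equiv 0$ sends every odd letter of both $\Aa'$ and $\Bb$ to zero, which is exactly the configuration of Case 2 of Lemma \ref{lemma1}. A short induction on the recursion \eqref{rec_non_bipart_poly} shows $\tP_{2j+1}\equiv 0$, and since every word in $\hB_{2j+1,k}$ must carry at least one odd letter, $\hB_{2j+1,k}\equiv 0$ as well. Consequently every $t^{2j+1}$ equation collapses to the tautology $0=0$, while the $t^{2j}$ equations furnish one nontrivial operator constraint apiece. The offset $m=2$ is even and $a'_2\mapsto \P\L^2\P$ is invertible on $V$, so the hypotheses of Case 2 are met and the resulting system admits a unique recursive solution.

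To make the recursion explicit, I would isolate inside $\hB_{2n,n}$ the unique surviving word of length $n$ and degree $2n$: namely $(a'_2)^n$, which enters with combinatorial weight $\frac{1}{n!}\binom{2n}{2,2,\ldots,2}=\frac{(2n)!}{2^n\,n!}$. Every other term in $\hB_{2n,n}$ requires an odd letter and vanishes, while $\hB_{2n,k}$ for $k<n$ contributes only words of length strictly less than $n$ and hence cannot produce $(\P\L^2\P)^n$. The $t^{2n}$ equation therefore takes the schematic form
\[
\frac{(2n)!}{2^n\,n!}\,\F'_n\,(\P\L^2\P)^n \;=\; -\,\tP_{2n+2} \;+\; \sum_{k=1}^{n-1}\F'_k\,\hB_{2n,k},
\]
from which, under the invertibility of $\P\L^2\P$ on $V$, one extracts $\F'_n$ by right-multiplication by $\bigl(\P\L^2\P|_V\bigr)^{-n}$ together with the scalar prefactor. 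Feeding the resulting $\{\F'_n\}$ through the binomial rewriting \eqref{futher-type1_expansion}, which depends only on the idempotence $\P^2=\P$ and is therefore unaffected by the new hypothesis, and substituting into the projected MZE \eqref{eqn:time_ind_PMZE_POP}, delivers \eqref{op_id_1_nonc_co1}.

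The only real obstacle I anticipate is the bookkeeping in the last display: one must confirm that no hidden $(\P\L^2\P)^n$-type term slips into $\hB_{2n,k}$ for $k<n$, so that $\F'_n$ is unambiguously identified by the leading coefficient. This follows from the length grading built into the partial Bell polynomials, since every word in $\hB_{2n,k}$ has length exactly $k$. The remaining verifications are essentially the Case 2 specialization of the computation already carried out in the proof of Theorem \ref{thm_combi_expansion1}.
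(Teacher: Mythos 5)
Your proposal is correct and follows essentially the same route as the paper, which simply invokes Case 2 of Lemma \ref{lemma1} and notes that $\{\F'_n\}$ is obtained by solving the even-order equations of \eqref{eqn:rec_op_nonc}. Your added bookkeeping — that the unique surviving length-$n$ word in $\hB_{2n,n}$ is $(a'_2)^n$ with weight $\tfrac{(2n)!}{2^n n!}$, so that $\F'_n$ is extracted by right-multiplication by $(\P\L^2\P|_V)^{-n}$ — is a correct and more explicit rendering of the recursion the paper leaves implicit.
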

\begin{proof}
Using The Case 2 result of Lemma \ref{lemma1}, it is easy to get the proof. In fact, $\{\F'_n\}_{n=1}^{\infty}$ can be constructed recursively by solving the even order of \eqref{eqn:rec_op_nonc}. 
\end{proof}

In the rest of this paper, we will no longer consider the second case in Lemma \ref{lemma1} since it is direct to get the specific form of the combinatorial expansion based on the Case 1 result. 
%
%
%
%
\paragraph{Case study:}
We reconsider the simple example introduced in Section \ref{sec:CCMZE} to explain when the noncommutative CMZE \eqref{op_id_1_nonc} reduces to the commutative CMZE \eqref{eqn:Com_MEZ_time} and when it does not. Choose a Hilbert space $H=X$ and define one-dimensional projection operator $\P(\cdot)=\langle\cdot,u(0)\rangle u(0)$, $\|u(0)\|^2=1$. It is easy to verify that $\text{Ran}(\P)=V=\R u(0)\subset H$ is a closed subspace under the action of operators which are chosen from the alphabet $\{\P\L\P,\P\L^2\P ,\cdots\}$. Moreover, we also have $[\P\L^i\P,\P\L^j\P]=0$, $i,j\geq 0$ in $\R u(0)$. Hence $\Dda$, $\Ddb$ becomes commutative algebra in $\R u(0)$ after the homomorphism. As a result, the noncommutative polynomials $\tP_{n+2}$ and $\tB_{n}$ in algebraic equation \eqref{eqn:rec_op_nonc} becomes commutative polynomials $P_{n+2}$ and $B_{n}$. Further denoting $\mu_nu(0)=\P\L(\Q\L)^n\P u(0)$, $\gamma_n u(0)=\P\L^{n+1}\P u(0)$, and we can get the result of Theorem \ref{thm_combi_expansion} from Theorem \ref{thm_combi_expansion1}. Naturally, the noncommutative MZE \eqref{op_id_1_nonc} (acting on $u(0)$) becomes the commutative MZE \eqref{eqn:Com_MEZ_time}. For more general Mori-type projection operator defined as $\P(\cdot)=\sum_{j=1}^N\langle\cdot,u_j(0)\rangle u_j(0)$, where $\langle u_i(0),u_j(0)\rangle=\delta_{ij}$. The action of the elements chosen from the alphabet $\{\P\L\P,\P\L^2\P,\cdots\}$ are closed in multi-dimensional linear space $\text{Ran}(\P)=V=\text{Span}\{u_i(0)\}_{i=1}^N$, but operators such as $\P\L\P,\P\L^2\P$ do not commute in $V$ due to the non-commutativity of matrices. As a result, we will get a closed noncommutative CMZE. A specific example for this case will be considered in Section \ref{sec:app_mod_coup}.   
%
%

%
%
\subsection{Time-dependent CMZE}
\label{sec:Time-d-CMZE}
The derivation of the combinatorial MZE for time-dependent systems is more difficult due to the time-dependence of the infinitesimal generator $\L(t)$ and $\Q\L(t)$. Assuming these two operators are also time differentiable up to any order. Using the evolution operator decomposition $\U_{\Q}(t,s)=\U_{\Q}(t,0)\U_{\Q}(0,s)$, we propose to use the following ansatz to get the series expansion of the memory integral:
\begin{align}\label{CMZE-type2_op}
    \P\L(s)\U_{\Q}(t,s)\Q\L(t)\P&=\P\L(s)\U_{\Q}(t,0)\U_{\Q}(0,s)\Q\L(t)\P\nonumber\\
    &=\sum_{n=0}^{\infty}\sum_{m=0}^{\infty}\frac{1}{n!m!}(\P\U(t,0)\P-\P)^n\F_n(s)\G_m(t)(\P\U(s,0)\P-\P)^m.
\end{align}
In order to determine the form of the time-dependent operators $\F_n(s)$ and $\G_m(t)$, we need to introduce {\em several} different homomorphisms and a {\em set} of algebraic equations of words \eqref{alge_eqn_word}. To this end, we use $\Oo_1,\Oo_2,\Oo_3,\Oo_4$ to represent the associative algebras for different (infinite) sets of operators. $\text{Hom}(\Dd,\mathbb{O}_1)$, $\text{Hom}(\Dd,\mathbb{O}_2)$,$\text{Hom}(\Dd,\mathbb{O}_3)$ are homomorphisms induced by different translation of the same alphabet $\Aa$. $\text{Hom}(\Dda,\mathbb{O}_4)$ is a homomorphisms induced by the translation of the alphabet $\Aa'$. The specific translation rule of these four homomorphisms are:
\begin{equation}\label{homomorphism}
\begin{aligned}
    \text{Hom}(\Dd,\mathbb{O}_1):&\quad\{a_1,a_2,\cdots\}\rightarrow \{\L_1,\L_2,\cdots\}\\
    \text{Hom}(\Dd,\mathbb{O}_2):&\quad\{a_1,a_2,\cdots\}\rightarrow \{\Q\L_1,\Q\L_2,\cdots\}\\
    \text{Hom}(\Dd,\mathbb{O}_3):&\quad\{a_1,a_2,\cdots\}\rightarrow \{-\Q\L_1,-\Q\L_2,\cdots\}\\
    \text{Hom}(\Dda,\mathbb{O}_4):&\quad\{a'_1,a'_2,\cdots\}\rightarrow \{\P\tB_1,\P\tB_2,\cdots\}.
\end{aligned}
\end{equation}
Here  $\L_n=\frac{d^{n-1}}{dt^{n-1}}\L(t)|_{t=0}$ and $\Q\L_n=\frac{d^{n-1}}{dt^{n-1}}\Q\L(t)|_{t=0}$.  $\P\tB_n=\P\tB_n(\L_1,\L_2,\cdots,\L_n)$, i.e. The $n$-th order Type-I noncommutative Bell polynomial over alphabet $\{\L_1,\L_2,\cdots\}$ becomes a new alphabet $\{\P\tB_1,\P\tB_2,\cdots\}$. With these homomorphisms, we will be able to get a (decoupled) system of algebraic equations for $\F_n(s)$ and $\G_m(t)$, which can be solved recursively. The result is summarized in the following theorem.
\begin{theorem}
\label{thm_combi_expansion2}
(Time-dependent CMZE) 
Suppose the identity operator $\I$ in Banach space $X$ can be orthogonally partitioned by two projection operators $\P,\Q$ as $\I=\P+\Q$ with $\P\Q=0$. If the closure of the time-dependent operators $\L(\tau):X\rightarrow X$ and $\Q\L(\tau):X\rightarrow X$ generate two evolution operators $\U(t,s)$ and $\U_{\Q}(t,s)$ respectively, moreover, the operator $\P\L_1\P$ is invertible in a subspace $V\subset X$, then we have the following combinatorial expansion of the time-dependent differential-form Dyson's identity \eqref{eqn:time_d_MZE_POPP}:
\begin{equation}\label{op_id_1_nonc_type2}
\begin{aligned}
\frac{d}{dt}\P&\U(t,0)\P
=\P \U(t,0)\P\L(t)\P\\
&
+\sum_{n=0}^{\infty}\sum_{m=0}^{\infty}\sum_{k_1=0}^{n}\sum_{k_2=0}^{m}
\frac{(-1)^{n+m-k_1-k_2}}{(n-k_1)!(m-k_1)!k_1!k_2!}\int_0^t\P \U(s,0)(\P\U(t,0)\P)^{k_1}\F_n(s)
\G_m(t)(\P\U(s,0)\P)^{k_2}\P ds,
\end{aligned}
\end{equation}
where $\F_n(s), \G_m(t)$ are time-independent polynomials of operators. Further denoting $(\P\L_1\P|_{V})^{-1}$ as the inverse of operator $\P\L_1\P$ in $V\subset X$. If $\text{Ran}(\P)\subset V$, then there exists explicit expressions for the operator pairs $\F_n(s), \G_m(t)$ which are given by the following recursive operator algebraic equation in $V$:
\begin{equation}\label{eqn:rec_op_nonc_type22}
\begin{aligned}
\begin{dcases}
\F_0(s)&=\P\L(s)\\
\F_n(s)&=-(\P\L_1\P|_{V})^{-n}\P\L(s)\tB_{n}(\Q\L_1,\Q\L_2,\cdots,\Q\L_{n})\\
&\ \ \ \ \ \ +(\P\L_1\P|_{V})^{-n}\sum_{k=1}^{n-1}\hB_{n,k}(\P\tB_{1}\P,\P\tB_{2}\P,\cdots,\P\tB_{n-k}\P)\F_k(s),\qquad n\geq 1
\end{dcases}
\end{aligned}
\end{equation}
\begin{equation}\label{eqn:rec_op_nonc_type21}
\begin{aligned}
\begin{dcases}
\G_0(s)&=\Q\L(s),\\
\G_m(t)&=-\tB_{m}(-\Q\L_1,-\Q\L_2,\cdots,-\Q\L_{m})\Q\L(t)(\P\L_1\P|_{V})^{-m}\\
&\ \ \ \ \ +\sum_{k=1}^{m-1} \G_k(t)\hB_{m,k}(\P\tB_{1}\P,\P\tB_{2}\P,\cdots,\P\tB_{m-k}\P)(\P\L_1\P|_{V})^{-m},\qquad m\geq 1
\end{dcases}
\end{aligned}
\end{equation}
\end{theorem}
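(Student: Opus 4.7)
The overall strategy is to follow the same road map used in the proof of Theorem \ref{thm_combi_expansion1}, now adapted to the bivariate, time-dependent setting. My plan is to Taylor-expand both sides of the ansatz \eqref{CMZE-type2_op} in the two time variables $s$ and $t$, rewrite the coefficients as noncommutative polynomials obtained via the homomorphisms listed in \eqref{homomorphism}, match to obtain a symbolic algebraic equation of words of the form \eqref{alge_eqn_word}, solve that equation recursively via Lemma~\ref{lemma1}, and finally substitute the expansion into the time-dependent differential-form Dyson's identity \eqref{eqn:time_d_MZE_POPP}.

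For the LHS expansion I would invoke the Lie-Butcher series \eqref{exp_map_lie} through the homomorphisms $\text{Hom}(\Dd,\Oo_2)$ and $\text{Hom}(\Dd,\Oo_3)$ to obtain
\[
\U_\Q(t,0)=\sum_{n\geq 0}\tfrac{t^n}{n!}\tB_n(\Q\L_1,\ldots,\Q\L_n),\qquad \U_\Q(0,s)=\sum_{m\geq 0}\tfrac{s^m}{m!}\tB_m(-\Q\L_1,\ldots,-\Q\L_m),
\]
together with the ordinary Taylor series $\P\L(s)=\sum_k \tfrac{s^k}{k!}\P\L_{k+1}$ and $\Q\L(t)=\sum_l \tfrac{t^l}{l!}\Q\L_{l+1}$. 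For the RHS, $\text{Hom}(\Dd,\Oo_1)$ gives $\P\U(t,0)\P-\P=\sum_{j\geq 1}\tfrac{t^j}{j!}\P\tB_j(\L_1,\ldots,\L_j)\P$, and the noncommutative Fais-style derivation of Fa\`a di Bruno used in the proof of Theorem~\ref{thm_combi_expansion1}, now applied through $\text{Hom}(\Dda,\Oo_4)$ with alphabet $\{\P\tB_j\P\}$, yields $(\P\U(t,0)\P-\P)^n=\sum_{p\geq n}\tfrac{t^p}{p!}\,n!\,\hB_{p,n}(\P\tB_1\P,\ldots,\P\tB_{p-n+1}\P)$, and analogously in $s$ for $(\P\U(s,0)\P-\P)^m$.

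Matching the coefficient of $s^q t^p$ on both sides produces a bivariate system of algebraic equations of words. The crucial observation, to be checked carefully, is that this bivariate system decouples into two independent univariate subsystems, one determining $\F_n(s)$ and the other $\G_m(t)$. Heuristically, $\P\L(s)$ sits at the far left and $\Q\L(t)$ at the far right of both the LHS and the RHS, and the split $\U_\Q(t,s)=\U_\Q(t,0)\U_\Q(0,s)$ cleanly separates the $t$-pure Bell-polynomial contributions (which meet the left sandwich containing $\P\L(s)$) from the $s$-pure ones (which meet the right sandwich containing $\Q\L(t)$). Each resulting univariate equation has the shape of \eqref{alge_eqn_word} with $Q^b$ coming from a $\tB_\cdot$ polynomial in $\{\Q\L_i\}$ and $Q^a$ coming from $\hB_{\cdot,\cdot}$ polynomials in $\{\P\tB_i\P\}$. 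Since $\P\tB_1\P=\P\L_1\P$ is assumed invertible on $V$ with $\text{Ran}(\P)\subset V$, Case~1 of Lemma~\ref{lemma1} delivers unique recursive solutions, which are precisely the formulas \eqref{eqn:rec_op_nonc_type22} and \eqref{eqn:rec_op_nonc_type21}. The final claim \eqref{op_id_1_nonc_type2} then follows by substituting the resulting combinatorial expansion into \eqref{eqn:time_d_MZE_POPP} and applying the binomial theorem separately to $(\P\U(t,0)\P-\P)^n$ and $(\P\U(s,0)\P-\P)^m$, exactly as in the derivation of \eqref{futher-type1_expansion}.

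The principal obstacle is the decoupling argument. Unlike the time-independent case, where a single time variable gives a single Bell-polynomial matching, the bivariate expansion generates many cross terms and one must verify that the coupled $(n,m)$ matching truly splits into two independent univariate matchings. The key bookkeeping point is that $(\P\U(t,0)\P-\P)^n$ acts only to the left and $(\P\U(s,0)\P-\P)^m$ only to the right of $\F_n(s)\G_m(t)$, with no intertwining, so that after isolating all $\P\L(s)$ and $\Q\L(t)$ factors and using idempotence $\P^2=\P$, the bivariate equation factors as a product of two independent symbolic equations. A secondary technical point, which motivates the use of $\text{Hom}(\Dd,\Oo_3)$ with $a_i\mapsto -\Q\L_i$, is tracking the signs and operator orderings in the backward propagator $\U_\Q(0,s)$; once this convention is fixed the Type-I Bell polynomials encapsulate all such signs automatically, and the remainder of the argument is routine manipulation of the recurrences supplied by Lemma~\ref{lemma1}.
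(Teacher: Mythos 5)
Your road map is essentially the paper's: Lie--Butcher/Type-I Bell expansions of $\U_\Q(t,0)$ and $\U_\Q(0,s)$, the Fais-style $\hB_{m,n}$ expansion of $(\P\U(\cdot,0)\P-\P)^n$ through the homomorphisms \eqref{homomorphism}, coefficient matching to recurrences of the form \eqref{alge_eqn_word}, inversion of $(\P\L_1\P|_V)^{-n}$ on $\text{Ran}(\P)\subset V$, and a final binomial expansion substituted into \eqref{eqn:time_d_MZE_POPP}. The recursions you arrive at are exactly \eqref{eqn:rec_op_nonc_type22} and \eqref{eqn:rec_op_nonc_type21}.

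The one place where your plan goes astray is the step you yourself flag as ``the principal obstacle'': you propose to match the full bivariate coefficient of $s^q t^p$ in \eqref{CMZE-type2_op} and then \emph{verify} that the coupled system splits into two univariate ones. That verification cannot succeed as stated, because the single bivariate identity does not determine $\F_n(s)$ and $\G_m(t)$ uniquely --- the product structure leaves a large gauge freedom, so there is nothing forcing the decoupled form out of the matching. The paper sidesteps this entirely: since the theorem only asserts \emph{existence} of a pair $\{\F_n(s),\G_m(t)\}$ satisfying the ansatz, it simply \emph{imposes} the two stronger decoupled identities
$\P\L(s)\U_{\Q}(t,0)=\sum_{n}\frac{1}{n!}(\P\U(t,0)\P-\P)^n\F_n(s)$ and
$\U_{\Q}(0,s)\Q\L(t)\P=\bigl[\sum_{m}\frac{1}{m!}\G_m(t)(\P\U(s,0)\P-\P)^m\bigr]\P$,
whose product recovers \eqref{CMZE-type2_op}, and solves each univariate matching separately (explicitly noting that the resulting pair need not be unique). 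Replace your ``verify the decoupling'' step with ``posit the decoupled identities and solve each,'' and the rest of your argument goes through; do also record, as the paper does, that in each recursion the inverse acts only on vectors already in $\text{Ran}(\P)$, and that the ordering of $\F_n(s)$ to the right of the $t$-powers and $\G_m(t)$ to the left of the $s$-powers is forced by the range conditions $\text{Ran}(\P)\perp\text{Ran}(\Q)$.
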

\begin{proof}
Assuming the validity of the ansatz \eqref{CMZE-type2_op}, we further assume this series expansion can be decoupled and we have
\begin{align}
\P\L(s)\U_{\Q}(t,0)&=\sum_{n=0}^{\infty}\frac{1}{n!}(\P\U(t,0)\P-\P)^n\F_n(s) \label{eqn:id1}\\ 
\U_{\Q}(0,s)\Q\L(t)\P&=\left[\sum_{m=0}^{\infty}\frac{1}{m!}\G_m(t)(\P\U(s,0)\P-\P)^m\right]\P.
\label{eqn:id22}
\end{align}
The decoupling assumption is {\em not} necessarily imposed, which means that possible choices of $\F(s)$, $\G(t)$ such that \eqref{CMZE-type2_op} holds may not be unique. Here we only need to confirm the existence by explicitly constructing an operator series pair $\{\F_n(s),\G_m(t)\}_{n,m=0}^{\infty}$ that satisfies \eqref{CMZE-type2_op}. The decoupling assumption enables us to treat two expansions separately. We first consider the Taylor series expansion for $\P\L(s)\U_{\Q}(t,0)$ at $t=0$. Since $\U_{\Q}(t,0)$ is generated by the time-dependent operator $\Q\L(t)$, its Taylor series expansion is given by the Lie-Butcher series \eqref{exp_map_lie} using the Type-I noncommutative Bell polynomials:
\begin{align}\label{deri02}
\P\L(s)\U_{\Q}(t,0)=\sum_{n=0}^{\infty}\frac{t^n}{n!}\P\L(s)\tB_{n}(\Q\L_1,\Q\L_2,\cdots,\Q\L_{n}).
\end{align}
This is because semigroup $\U_{\Q}(t,s)$ acts on the observable function $f$ as a Koopman operator: $f(x_{\Q}(t))=f\circ(x_{\Q}(t))=[\U_{\Q}(t,0)f](x_{\Q}(0))$, which essentially is a pullback map $\Phi_{t,x_{\Q}(t)}^*f=f\circ\Phi_{t,x_{\Q}(t)}$. Here $x_{\Q}(t)$ is the flow generated by the operator $\Q\L(t)$. On the other hand, since the dynamics we considered is in flat manifolds such as the Euclidean space $\R^n$, the Lie-derivative in \eqref{exp_map_lie} degenerates to the directional derivative, which yields the final expression \eqref{deri02}. For the same reason, the Taylor series expansion for the operator $\P\U(t,0)\P$ at $t=0$ can also be written as:
\begin{align}\label{PUP_Taylor}
\P\U(t,0)\P=\sum_{n=0}^{\infty}\frac{t^n}{n!}\P\tB_{n}(\L_1,\L_2,\cdots,\L_{n})\P.
\end{align}
As a result, the RHS term $(\P\U(t,0)\P-\P)^n$ in \eqref{eqn:id1} can be calculated as:
\begin{align*}
    (\P\U(t,0)\P-\P)^n&=\left(\sum_{j_1=1}^{\infty}\frac{t^{j_1}}{j_1!}\P\tB_{j_1}\P\right)\left(\sum_{j_2=1}^{\infty}\frac{t^{j_2}}{j_2!}\P\tB_{j_2}\P\right)
    \cdots\left(\sum_{j_n=1}^{\infty}\frac{t^{j_n}}{j_n!}\P\tB_{j_n}\P\right)\\
    &=\sum_{m=n}^{\infty}\frac{t^m}{m!}\sum_{\substack{j_1+j_1+\cdots j_n=m,\\ j_i\geq 1}}\frac{m!}{j_1!j_2!\cdots j_k!}(\P\tB_{j_1}\P)(\P\tB_{j_2}\P)\cdots(\P\tB_{j_n}\P)
\\
&=\sum_{m=n}^{\infty}\frac{t^m}{m!}n!\hB_{m,n}(\P\tB_{1}\P,\cdots,\P\tB_{m+1-n}\P)
\end{align*}
where $\P\tB_n\P=\P\tB_n(\L_1,\L_2,\cdots,\L_n)\P$. Then we have:
\begin{align}
\sum_{n=0}^{\infty}\frac{1}{n!}(\P \U(t,0)\P-\P)^n\F_n(s)
&=\sum_{n=0}^{\infty}\frac{1}{n!}\sum_{m=n}^{\infty}
\frac{t^m}{m!}n!\hB_{m,n}(\P\tB_{1}\P,\cdots,\P\tB_{m+1-n}\P)\F_n(s)
\nonumber\\
&=
\sum_{m=0}^{\infty}\frac{t^m}{m!}\sum_{n=0}^m\hB_{m,n}(\P\tB_{1}\P,\cdots,\P\tB_{m+1-n}\P)\F_n(s)
\label{type3_RHS}
\end{align}
To be noticed that the $\F_k(s)$ is placed at the {\em RHS} of $\hB_{m,n}$ which is similar to expansion \eqref{type1_variant_expan}. Matching the coefficients in different order of $t^n$ on the RHS of \eqref{deri02} and \eqref{type3_RHS}, we can get the explicit expression of the operator $\{\F_n(s)\}_{n=0}^{\infty}$ using recurrence relations. Specifically, it is easy to verify that $\F_0(s)=\P\L(s)$, other $\F_n(s)$s can be constructed recursively by solving the following equation:
\begin{equation}\label{47}
\begin{aligned}
\P\L(s)\tB_{n}(\Q\L_1,\Q\L_2,\cdots,\Q\L_{n})
   =\sum_{k=1}^n\hB_{n,k}(\P\tB_{1}\P,\P\tB_{2}\P,\cdots,\P\tB_{n-k+1}\P)\F_k(s),\qquad n\geq 1.
\end{aligned}
\end{equation}
For $n\geq 1$, we have 
\begin{equation}
\begin{aligned}
\P\L(s)\tB_{n}(\Q\L_1,\Q\L_2,\cdots,\Q\L_{n})
   &=\sum_{k=1}^n\hB_{n,k}(\P\tB_{1}\P,\P\tB_{2}\P,\cdots,\P\tB_{n-k+1}\P)\F_k(s)\\
   &=\sum_{k=1}^{n-1} \hB_{n,k}(\P\tB_{1}\P,\P\tB_{2}\P,\cdots,\P\tB_{n-k}\P)\F_k(s)+ (\P\L_1\P)^n\F_n(s).
\end{aligned}
\end{equation}
If $\P\L_1\P|_V^{-1}$ exists , we can construct $\F_n(s)$ recursively as:
\begin{equation}
\begin{aligned}
\F_n(s)=(\P\L_1\P|_{V})^{-n}\P\L(s)\tB_{n}&(-\Q\L_1,-\Q\L_2,\cdots,-\Q\L_{n})\\
&\ +\sum_{k=1}^{n-1}(\P\L_1\P|_{V})^{-n}\hB_{n,k}(\P\tB_{1}\P,\P\tB_{2}\P,\cdots,\P\hB_{n-k}\P)\F_k(s),\qquad n\geq 1
\end{aligned}
\end{equation}
To be noticed that the above formula is valid since we have  $\P\L(s)\tB_{n}(-\Q\L_1,-\Q\L_2,\cdots,-\Q\L_{n})\in \text{Ran}(\P)$ and $\hB_{n-1,k}(\P\tB_{1}\P,\P\tB_{2}\P,\cdots,\P\tB_{n-k}\P)\F_k(s)\in \text{Ran}(\P)$. These two relations guarantee that operator $(\P\L_1\P|_{V})^{-n}$ is placed on $\text{Ran}(\P)\subset V$. The first few $\F_n(s)$s are:
\begin{equation}\label{first_few_F_s}
\begin{aligned}
\F_0(s)&=\P\L(s)\\
\F_1(s)&=(\P\L_1\P|_V)^{-1}\P\L(s)\Q\L_1\\
\F_2(s)&=(\P\L_1\P|_V)^{-2}\P\L(s)[(\Q\L_1)^2+\Q\L_2]+(\P\L_1\P|_V)^{-2}(\P\L_1^2\P+\P\L_2\P)(\P\L_1\P|_V)^{-1}\P\L(s)\Q\L_1.
\end{aligned}
\end{equation}
On the other hand, it is easy to verify that the Taylor series expansion of $\U_{\Q}(0,s)$ at $s=0$ is given by:
\begin{align}\label{deri01}
\U_{\Q}(0,s)\Q\L(t)=\sum_{m=0}^{\infty}\frac{s^m}{m!}\tB_{m}(-\Q\L_1,-\Q\L_2,\cdots,-\Q\L_{m})\Q\L(t).
\end{align}
Similarly, for the second expansion \eqref{eqn:id22}, we find that:
\begin{align}
\sum_{m=0}^{\infty}\frac{1}{m!}\G_m(t)(\P \U(s,0)\P-\P)^m
&=\sum_{m=0}^{\infty}\frac{1}{m!}\sum_{k=m}^{\infty}
\frac{s^k}{k!}m!\G_m(t)\hB_{m,k}(\P\tB_{1}\P,\cdots,\P\tB_{m+1-k}\P)
\nonumber\\
&=
\sum_{k=0}^{\infty}\frac{s^k}{k!}\sum_{m=0}^k\G_m(t)\hB_{m,k}(\P\tB_{1}\P,\cdots,\P\tB_{m+1-k}\P).
\label{type3_RHS_2}
\end{align}
Now we match the coefficient in different order of $s^m$ on the RHS of \eqref{deri01} and \eqref{type3_RHS_2}. When $n=0$, we have $\G_0(t)=\Q\L(t)$. For any $n\geq1$, we have 
\begin{equation}\label{52}
\begin{aligned}
\tB_{m}(-\Q\L_1,-\Q\L_2,\cdots,-\Q\L_{m})\Q\L(t)
   &=\sum_{k=1}^m\G_k(t)\hB_{m,k}(\P\tB_{1}\P,\P\tB_{2}\P,\cdots,\P\tB_{m-k+1}\P)\\
   &=\sum_{k=1}^{m-1} \G_k(t)\hB_{m,k}(\P\tB_{1}\P,\P\tB_{2}\P,\cdots,\P\tB_{m-k}\P)+\G_m(t)(\P\L_1\P)^m.
\end{aligned}
\end{equation}
If $\P\L_1\P|_V^{-1}$ exists, we can construct $\G_m(t)$ recursively as
\begin{equation}\label{G_M_recursive}
\begin{aligned}
\G_m(t)=\tB_{m}&(-\Q\L_1,-\Q\L_2,\cdots,-\Q\L_{m})\Q\L(t)(\P\L_1\P|_{V})^{-m}\\
&\ +\sum_{k=1}^{m-1} \G_k(t)\hB_{m,k}(\P\tB_{1}\P,\P\tB_{2}\P,\cdots,\P\tB_{m-k}\P)(\P\L_1\P|_{V})^{-m},\qquad m\geq 1
\end{aligned}
\end{equation}
The first few $\G_m(t)$ are:
\begin{equation}\label{first_few_G_t}
\begin{aligned}
\G_0(t)&=\Q\L(t)\\
\G_1(t)&=-\Q\L_1\Q\L(t)(\P\L_1\P|_V)^{-1}\\
\G_2(t)&=[(\Q\L_1)^2-\Q\L_2]\Q\L(t)(\P\L_1\P|_V)^{-2}+\Q\L_1\Q\L(t)(\P\L_1\P|_V)^{-1}(\P\L_1^2\P+\P\L_2\P)(\P\L_1\P|_V)^{-2}.
\end{aligned}
\end{equation}
Note that in \eqref{eqn:id22}, $\G_m(t)$ always acts after $\P$, which guarantees the invertibility of $\P\L\P$ for the first term on the RHS of \eqref{G_M_recursive}. In addition to this,
$\tB_{m-1,k}(\P\tB_{1}\P,\P\tB_{2}\P,\cdots,\P\tB_{m-k}\P)(\P\L_1\P|_{V})^{-m}\in\text{Ran}(\P)$. This ensures the invertibility of $\P\L\P$ for the second term on the RHS of \eqref{G_M_recursive}. In conclusion, we confirmed the existence of $\F_n(s)$ and $\G_m(t)$ such that \eqref{CMZE-type2_op} holds. In the derivation, we used four homomorphisms \eqref{homomorphism} and two algebraic equations of words (after the homomorphism): \eqref{deri02}=\eqref{type3_RHS} and \eqref{deri01}=\eqref{type3_RHS_2}.
Using the fact that $P$ commutes with $\P\U(t,0)\P$ and $\P\U(s,0)\P$, we simply use the binomial theorem to expand \eqref{CMZE-type2_op} and get 
\begin{align}\label{type2_LHS}
\P\L(s)\U_{\Q}(t,s)\Q\L(t)\P=\sum_{n=0}^{\infty}\sum_{m=0}^{\infty}\frac{(-1)^{n+m-k_1-k_2}}{(n-k_1)!(m-k_2)!k_1!k_2!}(\P\U(t,0)\P)^{k_1}\F_n(s)
\G_m(t)(\P\U(s,0)\P)^{k_2}\P.
\end{align}
Substituting expansion \eqref{type2_LHS} into the time-dependent differential-form Dyson's identity \eqref{eqn:time_d_MZE_POPP}, we obtain \eqref{op_id_1_nonc_type2}.
\end{proof}
The combinatorial expansion \eqref{CMZE-type2_op} for the time-dependent system memory integrand naturally implies that for the time-independent case, we also have:
\begin{align*}
    \P\L e^{(t-s)\Q\L}\Q\L\P=\P\L e^{-s\Q\L}e^{t\Q\L}\Q\L\P=
    \sum_{n=0}^{\infty}\sum_{m=0}^{\infty}\frac{1}{n!m!}(\P e^{t\L}\P-\P)^n \F_n\G_m(\P e^{s\L}\P-\P)^m.
\end{align*}
However, this does not bring us any obvious computational benefit. For a time-independent system, using the change of variable $s=t-s$, it is more convenient to develop the combinatorial expansion for $\P\L e^{s\Q\L}\Q\L$. This is exactly what we did in the previous section. On the other hand, we particularly emphasize that the coefficient operators $\F_n(s),\G_m(t)$ in ansatz \eqref{CMZE-type2_op} {\em has to be} placed in that order in order to guarantee the existence of $\F_n(s),\G_m(t)$. This means that other ansatz settings say the following one:
\begin{equation}\label{type_2_alter_ansatz11}
\begin{aligned}
    \P\L(s)\U_{\Q}(t,s)\Q\L(t)\P&=\P\L(s)\U_{\Q}(t,0)\U_{\Q}(0,s)\Q\L(t)\P\\
    &=\sum_{n=0}^{\infty}\sum_{m=0}^{\infty}\frac{1}{n!m!}(\P\U(t,0)\P-\P)^n\F_n(s)(\P\U(s,0)\P-\P)^m\G_m(t)
\end{aligned}
\end{equation}
would not work. For example, if assuming \eqref{type_2_alter_ansatz11}, when $n=1$, Eqn \eqref{52} will become:
\begin{align*}
-\Q\L_1\Q\L(t)=(\P\L_1\P)\G_m(t),
\end{align*}
which has no solution whenever $-\Q\L_1\Q\L(t)\neq 0$. This is because $-\Q\L_1\Q\L(t): X\rightarrow \text{Ran}(\Q)$ and $(\P\L_1\P)\G_m(t):X \rightarrow \text{Ran}(\P)$, and $\text{Ran}(\P)\perp\text{Ran}(\Q)$ due to the orthogonality $\P\Q=0$ between projection operators. Other types of rearrangement of $\F_n(s),\G_m(t)$ would fail for the same reason. 

%
%
\paragraph{Case study} As an example, we again consider the simplest case when the projection operator is the one-dimensional Mori's projection: $\P=\langle\cdot,u(0)\rangle u(0)$, where $u(0)\in V\subset H$ is an observable in a Hilbert space $H$ with $\|u(0)\|^2=1$. According to this  definition, it is easy to verify that $\text{Ran}(\P)=V=\R u(0)$ is an invariant subspace of the operator $\P\L_1\P$, i.e. $\P\L_1\P:V\rightarrow V$. $\P\L_1\P$ can be invertible in $V$ since it merely is a linear scaling operator $\P\L_1\P u(0)=au(0)$, hence invertible as long as $a=\langle \L_1u(0),u(0)\rangle\neq 0$. Now using Theorem \ref{thm_combi_expansion2}, the memory integral in Eqn \eqref{op_id_1_nonc_type2} can be peeled like an onion from right to left. Acting the operator identity \eqref{op_id_1_nonc_type2} on observable $u(0)$, it is easy to verify that 
\begin{align*}
    (\P\U(s,0)\P)^{k_2}\P u(0)=\langle \U(s,0)u(0),u(0)\rangle(\P\U(s,0))^{k_2-1} u(0)=\langle \U(s,0)u(0),u(0)\rangle^{k_2} u(0)=C^{k_2}(s)u(0),
\end{align*}
where $C(s)$ is a scalar function of $s$. When $n=0$, $\F_0(s)=\P\L(s)$, we have 
\begin{align*}
    \F_0(s)\G_m(t)u(0)=\langle\L(s)\G_m(t)u(0),u(0)\rangle u(0)=k_{0,m}(t,s)u(0)
\end{align*}
where $k_{0,m}(t,s)$ is a scalar function of time $t,s$. For $n\geq 1$, we have 
\begin{align*}
    (\P\U(t,0)\P)^{k_1}\F_n(s)\G_m(t)u(0)
    &=(\P\U(t,0))^{k_1}\P\F_n(s)\G_m(t)u(0)\\
    &=\langle\F_n(s)\G_m(t)u(0),u(0)\rangle(\P\U(t,0))^{k_1}u(0)\\
    &=\langle\F_n(s)\G_m(t)u(0),u(0)\rangle\langle\U(t,0)u(0),u(0)\rangle^{k_1}u(0)\\
    &=k_{n,m}(t,s)C^{k_1}(t)u(0).
\end{align*}
Lastly, we have 
\begin{align*}
   \P\U(s,0)u(0)=C(s)u(0).
\end{align*}
Combining all these terms, we can see that when applying \eqref{op_id_1_nonc_type2} to $u(0)$, it simplifies to the following nonlinear {\em self-consistent} integro-differential equation for $C(t)$:
\begin{align*}
\frac{d}{dt}C(t)
&=\Omega(t)C(t)+\sum_{n=0}^{\infty}\sum_{m=0}^{\infty}\sum_{k_1=0}^{n}\sum_{k_2=0}^{m}
\frac{(-1)^{n+m-k_1-k_2}}{(n-k_1)!(m-k_1)!k_1!k_2!}C^{k_1}(t)\int_{0}^tk_{n,m}(t,s)C^{k_2+1}(s)ds.
\end{align*}
where $\Omega(t)=\langle \L(t)u(0),u(0)\rangle$. The first few coefficient functions $k_{n,m}(t,s)$ are explicitly calculated and summarized in Table \ref{Tab:1}.
\paragraph{Resummation of the combinatorial expansion} We can also consider the resummation of the combinatorial expansion by formally replacing the power series sequence with other polynomials. In this paper, we do not provide detailed explorations in this regard, and only show how the resummation can be done for the commutative CMZE \eqref{eqn:Com_MEZ_time}. For this case, we have combinatorial expansion $K(t)=f(\hat C(t))=\sum_kf_k/k!\hat C^k(t)$. We can use families of orthogonal polynomials to reorganize terms in the Taylor series expansion as $K(t)=f(\hat C(t))=\sum_kf_k/k!\hat C^k(t)=\sum_nw_n(f_1,\cdots,f_n)\phi_n(\hat C(t))$, where $\phi_n(x)$ is a $n$-th order orthogonal polynomial of $x$, and $w_n$ is the corresponding expansion coefficient. This leads to resummed CMZE:
\begin{align}\label{eqn:Com_MEZ_resum1}
\frac{d}{dt}C(t)=\Omega C(t)+\sum_{n}w_n(f_1,\cdots,f_n)\int_0^t\phi_n(\hat C(s))C(t-s)ds.
\end{align}
Since the normalized correlation function $\hat C(t)\in [-1,1]$, Suitable choices of $\phi_n(x)$ can be Chebyshev polynomials, Legendre polynomials or any kind of Jacobi polynomials. Pad\'e resummation of the order $[m/n]$ can also be used, correspondingly we have 
\begin{align}\label{eqn:Com_MEZ_resum2}
\frac{d}{dt}C(t)=\Omega C(t)+\int_0^t\frac{P_m(\hat C(s))}{Q_n(\hat C(s))}C(t-s)ds,
\end{align}
where $P_n(\hat C(t))=\sum_{k=0}^ma_k(f_1,\cdots,f_k)\hat C^k(t)$ and $Q_n(\hat C(t))=1+\sum_{k=1}^nb_k(f_1,\cdots,f_k)\hat C^k(t)$.
\section{Tree representation of the combinatorial Mori-Zwanzig equation}
\label{sec:NCC-MZE_tree}
From a mathematical point of view, the word representation is sufficient for proving the existence of the combinatorial series expansion and deriving the explicit expression of the CMZE. In this section, we introduce the planar tree representation for the noncommutative Bell and bipartition polynomials and use them to derive an equivalent representation of the CMZE using (decorated) trees. The tree representation provides interesting connections between the CMZE and the combinatorial Dyson-Schwinger equation (DSE) \cite{kreimer2006etude} that was recently discussed heavily in combinatorial quantum field theory \cite{foissy2008faa,yeats2008growth,yeats2017combinatorial}. Applying the CMZE to quantum many-body problems, we further point out the connection between the tree diagram and the Feynman diagram, and clearly demonstrate that CMZE leads to self-consistent expansions of the self-energy (under the MZ framework) which is comparable with the skeleton expansion of the self-energy in renormalized Dyson's equation \cite{stefanucci2013nonequilibrium}. This is the connection we emphasized in Section \ref{sec:into_main_result}.
\subsection{Tree representation of the non-commutative Bell and bipartition polynomials}
\label{sec:tree_intro}
A planar tree is an undirected graph of nodes (or vertices) in which any two vertices are connected by exactly one path and can be embedded in a plane. Planar trees can be graphically represented by the dot plots as shown in \eqref{Tree_graphs}. The planar tree we defined can grow upwards by connecting existing nodes in the tree with new nodes. Specifically, the tree can grow in two ways: I. By connecting apical nodes with a new node:
{\fontsize{1}{1}\selectfont
\begin{forest}
 for tree={grow=90, l=1mm}
[]
 \path[fill=black]  (!.parent anchor) circle[radius=2pt];
\end{forest}
}
$
\rightarrow
$
{\fontsize{3}{1}\selectfont
\begin{forest}
 for tree={grow=90, l=1mm}
[[]]
 \path[fill=black]  (!.parent anchor) circle[radius=2pt];
 \path[fill=black] (!1.child anchor) circle[radius=2pt];
\end{forest}
}
.
This will be called {\em germinate}.
II. By adding a new branch to non-apical nodes:
{\fontsize{3}{1}\selectfont
\begin{forest}
 for tree={grow=90, l=1mm}
[[]]
 \path[fill=black]  (!.parent anchor) circle[radius=2pt];
 \path[fill=black] (!1.child anchor) circle[radius=2pt];
\end{forest}
}
$
\rightarrow
$
{\fontsize{3}{1}\selectfont
\begin{forest}
 for tree={grow=90, l=1mm}
[[][]]
 \path[fill=black]  (!.parent anchor) circle[radius=2pt];
 \path[fill=black] (!1.child anchor) circle[radius=2pt]
                   (!2.child anchor) circle[radius=2pt];
\end{forest}
}.
This will be called {\em fork}. Planar trees with $n+1$ nodes is denoted by $T_n$. If it has height $k$, i.e. the highest depth of all branches, it will be further denoted as $T_{n,k}$. Gathering different trees together we get a forest $F$. With a slight abuse of notation, we use the same $F$ to represent the set $F$ and the summation of trees within $F$, such as the expressions in \eqref{Tree_graphs}. A forest that collects planar trees with $n+1$ nodes will be denoted as $F_n$. Any $F_n$, $n\neq 0$ can be decomposed as $F_n=\sum_{k=1}^nF_{n,k}$, where $F_{n,k}$ is the subset of $F_n$ containing trees with height $k$. Here the summation sign $\sum$ means the set operation $\cup$ or the regular summation, depending on the context. We now define a tree (forest) sprouting operator $\S^+:T_n\rightarrow T_{n+1}$ as follows:
\begin{equation}
\begin{aligned}
    \S^+T_n=\sum_{i=0}^{\#f}T_{n+1}^i,\qquad \text{where} \quad
    \begin{dcases}
    T_{n+1}^0 \text{is a new tree when $T_n$'s rightmost apical nodes germinates}\\
    T_{n+1}^i \text{is a new tree when $T_n$'s $i$-th non-apical node forks, $1\leq i\leq\# f$}
    \end{dcases}
\end{aligned},
\end{equation}
where $\#f$ denotes the total number of non-apical nodes in $T_n$. As an example, we have 
\begin{align*}
    \S^+
    {\fontsize{4}{1}\selectfont
\begin{forest}
 for tree={grow=90, l=1mm}
[[[]][[][]]]
\path[fill=black]  (!.parent anchor) circle[radius=2pt];
\path[fill=black]  (!1.child anchor) circle[radius=2pt];
\path[draw]    (!11.child anchor) circle[radius=2pt];
\path[fill=black]  (!2.child anchor) circle[radius=2pt];
\path[draw]   (!21.child anchor) circle[radius=2pt];
\path[draw]   (!22.child anchor) circle[radius=2pt];
\end{forest}
}
=
{\fontsize{4}{1}\selectfont
\begin{forest}
 for tree={grow=90, l=1mm}
[ [[[]]] [[][]]]
\path[fill=black]  (!.parent anchor) circle[radius=2pt];
\path[fill=black]  (!1.child anchor) circle[radius=2pt];
\path[fill=black]  (!11.child anchor) circle[radius=2pt];
\path[draw]  (!111.child anchor) circle[radius=2pt];
\path[fill=black]   (!2.child anchor) circle[radius=2pt];
\path[draw]   (!21.child anchor) circle[radius=2pt];
\path[draw]   (!22.child anchor) circle[radius=2pt];
\end{forest}
}
+
{\fontsize{4}{1}\selectfont
\begin{forest}
 for tree={grow=90, l=1mm}
[[[][]][[][]]]
\path[fill=black]  (!.parent anchor) circle[radius=2pt];
\path[fill=black]   (!1.child anchor) circle[radius=2pt];
\path[draw] (!11.child anchor) circle[radius=2pt];
\path[draw]  (!12.child anchor) circle[radius=2pt];
\path[fill=black]  (!2.child anchor) circle[radius=2pt];
\path[draw]  (!21.child anchor) circle[radius=2pt];
\path[draw] (!22.child anchor) circle[radius=2pt];
\end{forest}
}
+
{\fontsize{4}{1}\selectfont
\begin{forest}
 for tree={grow=90, l=1mm}
[[[]][[][][]]]
 \path[fill=black]  (!.parent anchor) circle[radius=2pt];
 \path[fill=black] (!1.child anchor) circle[radius=2pt];
 \path[draw]  (!11.child anchor) circle[radius=2pt];
\path[fill=black]   (!2.child anchor) circle[radius=2pt];
\path[draw]   (!21.child anchor) circle[radius=2pt];
\path[draw]   (!22.child anchor) circle[radius=2pt];
\path[draw]   (!23.child anchor) circle[radius=2pt];
\end{forest}
}
+
{\fontsize{4}{1}\selectfont
\begin{forest}
 for tree={grow=90, l=1mm}
[[[]][][[][]]]
 \path[fill=black]  (!.parent anchor) circle[radius=2pt];
 \path[fill=black]  (!1.child anchor) circle[radius=2pt];
 \path[draw]  (!11.child anchor) circle[radius=2pt];
 \path[draw]  (!2.child anchor) circle[radius=2pt];
 \path[fill=black] (!3.child anchor) circle[radius=2pt];
 \path[draw]  (!31.child anchor) circle[radius=2pt];
 \path[draw]  (!32.child anchor) circle[radius=2pt];
\end{forest}
},
\end{align*}
where ${\fontsize{1}{1}\selectfont
\begin{forest}
 for tree={grow=90, l=1mm}
[]
 \path[draw]  (!.parent anchor) circle[radius=2pt];
\end{forest}
}
$ represents the apical node and 
${\fontsize{1}{1}\selectfont
\begin{forest}
 for tree={parent anchor=east, child anchor=east, grow=90, l=1mm}
[]
 \path[fill=black]  (!.parent anchor) circle[radius=2pt];
\end{forest}
}$
represents the non-apical node. When applying $\S^+$ to a forest $F$, it applies to all trees within $F$. There are two ways to get a new forest which leads to two types of sprouting operators. The first one is still denoted as $\S^+$ and may be called the sprouting operator without counting. Basically $\S^+$ acts on $F$ as a set operation and returns non-repeatedly a new set of trees that form a forest. The second sprouting operator includes the counting factor and will be re-denoted as $\S^+_{C}$. It is easier to understand the action of $\S^+,\S^+_{C}$ using the following example:
\begin{align}
    \S^+\big(
{\fontsize{4}{1}\selectfont
\begin{forest}
 for tree={grow=90, l=1mm}
[[[]]]
 \path[fill=black]  (!.parent anchor) circle[radius=2pt];
 \path[fill=black] (!1.child anchor) circle[radius=2pt]
                   (!11.child anchor) circle[radius=2pt];
\end{forest}
}
+
{\fontsize{4}{1}\selectfont
\begin{forest}
for tree={grow=90, l=1mm}
[[][]]
\path[fill=black]  (!.parent anchor) circle[radius=2pt];
\path[fill=black] (!1.child anchor) circle[radius=2pt]
                 (!2.child anchor) circle[radius=2pt];
\end{forest}
}
\big)
&=
{\fontsize{4}{1}\selectfont
\begin{forest}
 for tree={grow=90, l=1mm}
[[[[]]]]
 \path[fill=black]  (!.parent anchor) circle[radius=2pt];
 \path[fill=black] (!1.child anchor) circle[radius=2pt]
                    (!11.child anchor) circle[radius=2pt]
                   (!111.child anchor) circle[radius=2pt];
\end{forest}
}
+
{\fontsize{4}{1}\selectfont
\begin{forest}
 for tree={grow=90, l=1mm}
[[[]][]]
 \path[fill=black]  (!.parent anchor) circle[radius=2pt];
 \path[fill=black] (!1.child anchor) circle[radius=2pt]
                  (!11.child anchor) circle[radius=2pt]
                  (!2.child anchor) circle[radius=2pt];
\end{forest}
}
+
{\fontsize{4}{1}\selectfont
\begin{forest}
 for tree={parent anchor=east, child anchor=east, grow=90, l=1mm}
[[[][]]]
 \path[fill=black]  (!.parent anchor) circle[radius=2pt];
 \path[fill=black] (!1.child anchor) circle[radius=2pt]
                   (!11.child anchor) circle[radius=2pt]
                   (!12.child anchor) circle[radius=2pt];
\end{forest}
}
+
{\fontsize{4}{1}\selectfont
\begin{forest}
 for tree={grow=90, l=1mm}
[[][][]]
 \path[fill=black]  (!.parent anchor) circle[radius=2pt];
 \path[fill=black] (!1.child anchor) circle[radius=2pt]
                   (!2.child anchor) circle[radius=2pt]
                   (!3.child anchor) circle[radius=2pt];
\end{forest}
}\\
\S^+_C\big(
{\fontsize{4}{1}\selectfont
\begin{forest}
 for tree={grow=90, l=1mm}
[[[]]]
 \path[fill=black]  (!.parent anchor) circle[radius=2pt];
 \path[fill=black] (!1.child anchor) circle[radius=2pt]
                   (!11.child anchor) circle[radius=2pt];
\end{forest}
}
+
{\fontsize{4}{1}\selectfont
\begin{forest}
for tree={grow=90, l=1mm}
[[][]]
\path[fill=black]  (!.parent anchor) circle[radius=2pt];
\path[fill=black] (!1.child anchor) circle[radius=2pt]
                 (!2.child anchor) circle[radius=2pt];
\end{forest}
}
\big)
&=
{\fontsize{4}{1}\selectfont
\begin{forest}
 for tree={grow=90, l=1mm}
[[[[]]]]
 \path[fill=black]  (!.parent anchor) circle[radius=2pt];
 \path[fill=black] (!1.child anchor) circle[radius=2pt]
                    (!11.child anchor) circle[radius=2pt]
                   (!111.child anchor) circle[radius=2pt];
\end{forest}
}
+2
{\fontsize{4}{1}\selectfont
\begin{forest}
 for tree={grow=90, l=1mm}
[[[]][]]
 \path[fill=black]  (!.parent anchor) circle[radius=2pt];
 \path[fill=black] (!1.child anchor) circle[radius=2pt]
                  (!11.child anchor) circle[radius=2pt]
                  (!2.child anchor) circle[radius=2pt];
\end{forest}
}
+
{\fontsize{4}{1}\selectfont
\begin{forest}
 for tree={parent anchor=east, child anchor=east, grow=90, l=1mm}
[[[][]]]
 \path[fill=black]  (!.parent anchor) circle[radius=2pt];
 \path[fill=black] (!1.child anchor) circle[radius=2pt]
                   (!11.child anchor) circle[radius=2pt]
                   (!12.child anchor) circle[radius=2pt];
\end{forest}
}
+
{\fontsize{4}{1}\selectfont
\begin{forest}
 for tree={grow=90, l=1mm}
[[][][]]
 \path[fill=black]  (!.parent anchor) circle[radius=2pt];
 \path[fill=black] (!1.child anchor) circle[radius=2pt]
                   (!2.child anchor) circle[radius=2pt]
                   (!3.child anchor) circle[radius=2pt];
\end{forest}
}.
\end{align}
Here the counting factor 2 appears since the sprouting operator generates two identical trees as shown above. Obviously, we have $\S^+,\S^+_C:F_n\rightarrow F_{n+1}$. With the sprouting operators, we can construct an infinite sequence of forests $\{F_n\}_{n=1}^{\infty}$ from a seed $T_1=F_1={\fontsize{1}{1}\selectfont
\begin{forest}
 for tree={parent anchor=east, child anchor=east, grow=90, l=1mm}
[]
 \path[fill=black]  (!.parent anchor) circle[radius=2pt];
\end{forest}
}$ as $F_{n+1}=\S^+(F_n)$ or $F_{n+1}=\S_C^+(F_n)$ for $n\geq 1$. The noncommutative Bell polynomials can be represented using such an infinite sequence of forests. In particular, it is generated by $\S^+_C$ as $F^{\tB}_{n+1}=\S_C^+(F^{\tB}_n)$, where the subscript $\tB$ indicates that this is for the Type-I noncommutative Bell polynomials. The first few forests can be explicitly expressed as:
\begin{equation}\label{Tree_graphs}
\begin{aligned}
F_0^{\tB}&=
{\fontsize{4}{1}\selectfont
\begin{forest}
 for tree={grow=90, l=1mm}
[]
 \path[fill=black]  (!.parent anchor) circle[radius=2pt];
\end{forest}
}
\\
F_1^{\tB}&=
{\fontsize{4}{1}\selectfont
\begin{forest}
 for tree={grow=90, l=1mm}
[[]]
 \path[fill=black]  (!.parent anchor) circle[radius=2pt];
 \path[fill=black] (!1.child anchor) circle[radius=2pt];
\end{forest}
}
\\
F_2^{\tB}&=
{\fontsize{4}{1}\selectfont
\begin{forest}
 for tree={grow=90, l=1mm}
[[[]]]
 \path[fill=black]  (!.parent anchor) circle[radius=2pt];
 \path[fill=black] (!1.child anchor) circle[radius=2pt]
                   (!11.child anchor) circle[radius=2pt];
\end{forest}
}
+
{\fontsize{4}{1}\selectfont
\begin{forest}
for tree={grow=90, l=1mm}
[[][]]
\path[fill=black]  (!.parent anchor) circle[radius=2pt];
\path[fill=black] (!1.child anchor) circle[radius=2pt]
                 (!2.child anchor) circle[radius=2pt];
\end{forest}
}
\\
F_3^{\tB}&=
{\fontsize{4}{1}\selectfont
\begin{forest}
 for tree={grow=90, l=1mm}
[[[[]]]]
 \path[fill=black]  (!.parent anchor) circle[radius=2pt];
 \path[fill=black] (!1.child anchor) circle[radius=2pt]
                    (!11.child anchor) circle[radius=2pt]
                   (!111.child anchor) circle[radius=2pt];
\end{forest}
}
+2
{\fontsize{4}{1}\selectfont
\begin{forest}
 for tree={grow=90, l=1mm}
[[[]][]]
 \path[fill=black]  (!.parent anchor) circle[radius=2pt];
 \path[fill=black] (!1.child anchor) circle[radius=2pt]
                  (!11.child anchor) circle[radius=2pt]
                  (!2.child anchor) circle[radius=2pt];
\end{forest}
}
+
{\fontsize{4}{1}\selectfont
\begin{forest}
 for tree={parent anchor=east, child anchor=east, grow=90, l=1mm}
[[[][]]]
 \path[fill=black]  (!.parent anchor) circle[radius=2pt];
 \path[fill=black] (!1.child anchor) circle[radius=2pt]
                   (!11.child anchor) circle[radius=2pt]
                   (!12.child anchor) circle[radius=2pt];
\end{forest}
}
+
{\fontsize{4}{1}\selectfont
\begin{forest}
 for tree={grow=90, l=1mm}
[[][][]]
 \path[fill=black]  (!.parent anchor) circle[radius=2pt];
 \path[fill=black] (!1.child anchor) circle[radius=2pt]
                   (!2.child anchor) circle[radius=2pt]
                   (!3.child anchor) circle[radius=2pt];
\end{forest}
}
\\
F_{4}^{\tB}&=
{\fontsize{4}{1}\selectfont
\begin{forest}
 for tree={grow=90, l=1mm}
 [[[[[]]]]]
 \path[fill=black]  (!.parent anchor) circle[radius=2pt];
 \path[fill=black] (!1.child anchor) circle[radius=2pt]
                  (!11.child anchor) circle[radius=2pt]
                  (!111.child anchor) circle[radius=2pt]
                  (!1111.child anchor)  circle[radius=2pt];
\end{forest}
}
+
{\fontsize{4}{1}\selectfont
\begin{forest}
 for tree={grow=90, l=1mm}
 [[[[][]]]]
 \path[fill=black]  (!.parent anchor) circle[radius=2pt];
 \path[fill=black] (!1.child anchor) circle[radius=2pt]
                  (!11.child anchor) circle[radius=2pt]
                  (!111.child anchor) circle[radius=2pt]
                  (!112.child anchor)  circle[radius=2pt];
\end{forest}
}
+2
{\fontsize{4}{1}\selectfont
\begin{forest}
 for tree={grow=90, l=1mm}
[[[[]][]]]
 \path[fill=black]  (!.parent anchor) circle[radius=2pt];
 \path[fill=black] (!1.child anchor) circle[radius=2pt]
                  (!11.child anchor) circle[radius=2pt]
                  (!111.child anchor) circle[radius=2pt]
                  (!12.child anchor) circle[radius=2pt];
\end{forest}
}
+3
{\fontsize{4}{1}\selectfont
\begin{forest}
 for tree={grow=90, l=1mm}
[[[[]]][]]
 \path[fill=black]  (!.parent anchor) circle[radius=2pt];
 \path[fill=black] (!1.child anchor) circle[radius=2pt]
                  (!11.child anchor) circle[radius=2pt]
                  (!111.child anchor) circle[radius=2pt]
                  (!2.child anchor) circle[radius=2pt];
\end{forest}
}
+3
{\fontsize{4}{1}\selectfont
\begin{forest}
 for tree={grow=90, l=1mm}
 [ [[][]] []]
 \path[fill=black]  (!.parent anchor) circle[radius=2pt];
 \path[fill=black] (!1.child anchor) circle[radius=2pt]
                  (!11.child anchor) circle[radius=2pt]
                  (!12.child anchor) circle[radius=2pt]
                  (!2.child anchor)  circle[radius=2pt];
\end{forest}
}
+3
{\fontsize{4}{1}\selectfont
\begin{forest}
 for tree={grow=90, l=1mm}
 [[[]] [] [] ]
 \path[fill=black]  (!.parent anchor) circle[radius=2pt];
 \path[fill=black] (!1.child anchor) circle[radius=2pt]
                  (!11.child anchor) circle[radius=2pt]
                  (!2.child anchor) circle[radius=2pt]
                  (!3.child anchor)  circle[radius=2pt];
\end{forest}
}
+
{\fontsize{4}{1}\selectfont
\begin{forest}
 for tree={grow=90, l=1mm}
 [ [[][][]] ]
 \path[fill=black]  (!.parent anchor) circle[radius=2pt];
 \path[fill=black] (!1.child anchor) circle[radius=2pt]
                  (!11.child anchor) circle[radius=2pt]
                  (!12.child anchor) circle[radius=2pt]
                  (!13.child anchor)  circle[radius=2pt];
\end{forest}
}
+
{\fontsize{4}{1}\selectfont
\begin{forest}
 for tree={grow=90, l=1mm}
 [ [][][] []]
 \path[fill=black]  (!.parent anchor) circle[radius=2pt];
 \path[fill=black] (!1.child anchor) circle[radius=2pt]
                  (!2.child anchor) circle[radius=2pt]
                  (!3.child anchor) circle[radius=2pt]
                  (!4.child anchor)  circle[radius=2pt];
\end{forest}
}
\\
&\ \ \vdots
\end{aligned}
\end{equation}
The tree representation of the noncommutative Bell polynomials is equivalent to its word representation. In fact, there is a one-to-one correspondence between the trees and the words. One simply needs to translate $F_n^{\tB}$ to words using the following dictionary:
\begin{equation}\label{trans_rule}
\begin{aligned}
    {\fontsize{4}{1}\selectfont
\begin{forest}
 for tree={grow=90, l=1mm}
[]
 \path[fill=black]  (!.parent anchor) circle[radius=2pt];
\end{forest}
}
&
\Rightarrow \I\\
{\fontsize{4}{1}\selectfont
\begin{forest}
 for tree={grow=90, l=1mm}
[[]]
 \path[fill=black]  (!.parent anchor) circle[radius=2pt]
                    (!1.child anchor) circle[radius=2pt];
\end{forest}
}
&\Rightarrow a_1,
\qquad
{\fontsize{4}{1}\selectfont
\begin{forest}
 for tree={grow=90, l=1mm}
[[][]]
 \path[fill=black]  (!.parent anchor) circle[radius=2pt]
                    (!1.child anchor) circle[radius=2pt]
                    (!2.child anchor) circle[radius=2pt];
\end{forest}
}
\Rightarrow a_2,\qquad
{\fontsize{4}{1}\selectfont
\begin{forest}
 for tree={grow=90, l=1mm}
[[][][]]
 \path[fill=black]  (!.parent anchor) circle[radius=2pt]
                    (!1.child anchor) circle[radius=2pt]
                    (!2.child anchor) circle[radius=2pt]
                    (!3.child anchor) circle[radius=2pt];
\end{forest}
}
\Rightarrow a_3,\qquad
{\fontsize{4}{1}\selectfont
\begin{forest}
 for tree={grow=90, l=1mm}
[[][][][]]
 \path[fill=black]  (!.parent anchor) circle[radius=2pt]
                    (!1.child anchor) circle[radius=2pt]
                    (!2.child anchor) circle[radius=2pt]
                    (!3.child anchor) circle[radius=2pt]
                    (!4.child anchor) circle[radius=2pt];
\end{forest}
}
\Rightarrow a_4,\qquad \cdots\\
{\fontsize{4}{1}\selectfont
\begin{forest}
 for tree={grow=90, l=1mm}
[[[]]]
 \path[fill=black]  (!.parent anchor) circle[radius=2pt]
                    (!1.child anchor) circle[radius=2pt]
                    (!11.child anchor) circle[radius=2pt];
\end{forest}
}
&\Rightarrow a_1^2,
\qquad
{\fontsize{4}{1}\selectfont
\begin{forest}
 for tree={grow=90, l=1mm}
[[[[]]]]
 \path[fill=black]  (!.parent anchor) circle[radius=2pt]
                    (!1.child anchor) circle[radius=2pt]
                    (!11.child anchor) circle[radius=2pt]
                    (!111.child anchor) circle[radius=2pt];
\end{forest}
}
\Rightarrow a_1^3,\qquad \cdots.
\end{aligned}
\end{equation}
For any tree $T_{n,k}\in F_n^{\tB}$, counting from the bottom branch $i=1$ to the top branch $i=k$, using the dictionary \eqref{trans_rule} to translate different branches into words and then putting them from right to left, then we get all the words in $\tB_n$. For instance, we have
\begin{align*}
    {\fontsize{4}{1}\selectfont
\begin{forest}
 for tree={grow=90, l=1mm}
 [[[[]]][]]
 \path[fill=black]  (!.parent anchor) circle[radius=2pt];
 \path[fill=black] (!1.child anchor) circle[radius=2pt]
                   (!11.child anchor) circle[radius=2pt]
                  (!111.child anchor) circle[radius=2pt]
                  (!2.child anchor) circle[radius=2pt];
\end{forest}
}\Rightarrow a_1^2a_2,\qquad
{\fontsize{4}{1}\selectfont
\begin{forest}
 for tree={grow=90, l=1mm}
 [ [[][]] []]
 \path[fill=black]  (!.parent anchor) circle[radius=2pt];
 \path[fill=black] (!1.child anchor) circle[radius=2pt]
                  (!11.child anchor) circle[radius=2pt]
                  (!12.child anchor) circle[radius=2pt]
                  (!2.child anchor)  circle[radius=2pt];
\end{forest}
}\Rightarrow a_2^2,
\qquad
{\fontsize{4}{1}\selectfont
\begin{forest}
 for tree={grow=90, l=1mm}
 [[[]] [] [] ]
 \path[fill=black]  (!.parent anchor) circle[radius=2pt];
 \path[fill=black] (!1.child anchor) circle[radius=2pt]
                  (!11.child anchor) circle[radius=2pt]
                  (!2.child anchor) circle[radius=2pt]
                  (!3.child anchor)  circle[radius=2pt];
\end{forest}
}
\Rightarrow a_1a_3.
\end{align*}
This translation rule directly stems from the fact that any word is spelled from left to right non-interchangeably. Equivalently, any tree $T_{n,k}$ can be decomposed as $b_1\lgraf b_2\lgraf\cdots\lgraf b_k=T_{n,k}$, where $b_j$ is the $j$-th branch and $\lgraf$ is the {\em right grafting product} which grafts the LHS branch on the top right of the RHS branch. To be noticed that this product $\lgraf$ is non-commutative but distributive, i.e. $a\lgraf b\neq b\lgraf a$ and $a\lgraf(b+c)=a\lgraf b+a\lgraf c$. Since the sprouting operator always germinates on the rightmost apical nodes, we will not encounter the order problem when translating branches in each layer. All the recurrence relations, decomposition, and combinatorial coefficients that appeared in Section \ref{sec:bell_poly} can find their tree correspondences as well. For instance: the sprouting operation $F^{\tB}_{n+1}=\S_C^+(F^{\tB}_n)$ is the equivalent of the $(a_1+\partial)$ operator, where the multiplication of $a_1$ to different words corresponds to the germinating part of $\S_C^+$ and the action of $\partial$ corresponds to the fork operation of $\S_C^+$. The noncommutative Bell polynomials can be decomposed by the word length as $\tB_n=\sum_{k=1}^n\tB_{n,k}$. Correspondingly, we can decompose a forest by the height $k$ as $F^{\tB}_n=\sum_{k=1}^nF^{\tB}_{n,k}$. For instance, $F_3^{\tB}$ can be decomposed as:
\begin{align*}
F^{\tB}_{3}=\sum_{k=1}^3F^{\tB}_{3,k}=
\underbrace{
{\fontsize{4}{1}\selectfont
\begin{forest}
 for tree={grow=90, l=1mm}
[[[[]]]]
 \path[fill=black]  (!.parent anchor) circle[radius=2pt];
 \path[fill=black] (!1.child anchor) circle[radius=2pt]
                    (!11.child anchor) circle[radius=2pt]
                   (!111.child anchor) circle[radius=2pt];
\end{forest}
}
}_{F_{3,3}}
+
\underbrace{
2
{\fontsize{4}{1}\selectfont
\begin{forest}
 for tree={grow=90, l=1mm}
[[[]][]]
 \path[fill=black]  (!.parent anchor) circle[radius=2pt];
 \path[fill=black] (!1.child anchor) circle[radius=2pt]
                  (!11.child anchor) circle[radius=2pt]
                  (!2.child anchor) circle[radius=2pt];
\end{forest}
}
+
{\fontsize{4}{1}\selectfont
\begin{forest}
 for tree={grow=90, l=1mm}
[[[][]]]
 \path[fill=black]  (!.parent anchor) circle[radius=2pt];
 \path[fill=black] (!1.child anchor) circle[radius=2pt]
                   (!11.child anchor) circle[radius=2pt]
                   (!12.child anchor) circle[radius=2pt];
\end{forest}
}
}_{F_{3,2}}
+
\underbrace{
{\fontsize{4}{1}\selectfont
\begin{forest}
 for tree={grow=90, l=1mm}
[[][][]]
 \path[fill=black]  (!.parent anchor) circle[radius=2pt];
 \path[fill=black] (!1.child anchor) circle[radius=2pt]
                   (!2.child anchor) circle[radius=2pt]
                   (!3.child anchor) circle[radius=2pt];
\end{forest}
}
}_{F_{3,1}}.
\end{align*}
The combinatorial coefficients in \eqref{rec3_non_Bell_poly} can be represented as:
\begin{align}\label{weight_B}
    \frac{n}{\|b_1\|!\cdots\|b_k\|!}
   \frac{\|b_1\|\cdots\|b_k\|}{\|b_1\|(\|b_1+b_2\|)\cdots(\|b_1+\cdots+b_k\|)}.
\end{align}
where $b_j$ is the $j$-th branch of a tree $T_{n,k}$, $\|b_k\|$ gives the number of children nodes of $b_k$. We can similarly define the tree representation for the Type-II Bells polynomials $\hB_n$ and the bipartition polynomials $\tP_n$. Comparing the explicit expression for $\tB_n$, $\hB_n$ and $\tP_n$, we find that these noncommutative polynomials consist of the same set of words and only differ in coefficients. Hence, we can define the tree representation of $\hB_n$ and $\tP_n$ using the non-counted sprouting operator $\S^+$ and two weight scaling operators $\W_{\hB/\tP}$:
\begin{align}\label{inverse_fact_or}
    \W_{\hB}(T_{n,k})= \frac{n}{k!\|b_1\|!\cdots\|b_k\|!}T_{n,k},\quad \W_{\tP}(T_{n,k})=(-1)^{\prod_{j}\|b_j\|}T_{n,k}.
\end{align}
As a result, we find $F^{\hB}_{n+1}=\W_{\hB}\circ\S^+(F^{\hB}_n)$ and $F^{\tP}_{n+1}=\W_{\tP}\circ\S^+(F^{\tP}_n)$, where $\circ$ is the composition. In fact, using the non-counted sprouting operator $\S^+$, $F^{\tB}_{n+1}$ can be rewritten as $F^{\tB}_{n+1}=\S^+_C(F^{\tB}_{n})=\W_{\tB}\circ\S^+(F^{\tB}_n)$, where the weight scaling operator $\W_{\tB}$ is defined similarly using the tree weight \eqref{weight_B}. The tree representation for the first few terms of the Type-II Bell polynomials and bipartition polynomials are provided in \ref{Appendix_tree_repre}.

Algebraically, the associative algebra for words can be endowed with a coproduct $\Delta_{\Dd}$ and hence becomes a bialgebra, which is known as the Dynkin-Fa\`a di Bruno bialgebra \cite{munthe2008hopf}. Similar things can be done for planar trees, where the product is just a way to graft subtrees to form bigger trees and coproduct is a way to decompose a tree into subtrees. In fact, both the associative rule and coproduct for planar trees are not unique, which leads to various algebraic structures for the planar trees. Some of them can be further equipped with a Hopf algebraic structure after introducing an additional antipode operation $S_N$. These algebraic structures for planar trees have found various applications in different branches of mathematics and physics such as the renormalization theory for quantum fields \cite{connes1999hopf,connes2000renormalization} and stochastic partial differential equations \cite{bruned2019algebraic}, the Runge-Kutta method on the differential manifold \cite{munthe1995lie,munthe1998runge,lundervold2011hopf}, and the index theory in noncommutative geometry \cite{connes1995local}. In this work, we will not further discuss the algebraic structures associated with the combinatorial class (see the definition in the following subsection) of the noncommutative Bell/bipartition polynomials. The Hopf algebra theory for the Type-I noncommutative Bell polynomials has been discussed in  \cite{munthe1998runge,lundervold2011hopf}. We leave other parts as an open topic and provide in-depth discussions should we need them in the future. 

\subsection{Tree representation of the combinatorial MZE}
The tree representation of the noncommutative polynomials enables a tree representation for the CMZE, which leads us to find a functional equation for the tree-generating function that is similar to the combinatorial Dyson-Schwinger equation (DSE). To begin, we briefly review some combinatorial concepts introduced by Yeats in \cite{yeats2017combinatorial} to facilitate further discussion. For planar rooted trees defined in Section \ref{sec:tree_intro}, we define a combinatorial class $\mathfrak{C}$, which is made of a (probably infinite) set of trees graded by a size function $|\cdot|:\mathfrak{C}\rightarrow\Z_{\geq 0}$. A common choice of $|\cdot|$ is the total number of nodes (or vertices) of the tree. Given a combinatorial class $\mathfrak{C}$, the {\em argumented exponential generating function} $C(\cdot):\R\rightarrow\mathfrak{C}$ is given by the formal Taylor series:
\begin{align}\label{Com_class_GF}
    C(t)=\sum_{c\in\mathfrak{C}}c\frac{t^{|c|}}{|c|!}=\sum_{k=0}f_{k}\frac{t^k}{k!},
\end{align}
where $\sum_{k=0}f_k\frac{t^k}{k!}$ is the resummation of series with respect to the grade $k$, and $f_{k}\subset\mathfrak{C}$ is a forest of trees of the size $k$. Exponential generating functions written in terms of $f_k$ will be called the {\em graded exponential generating function}. A combinatorial Dyson-Schwinger equation is a functional equation for the generating function $C(t)$. For the simplest case \cite{foissy2008faa}, the equation can be written as:
\begin{align}\label{DSE}
    C(t)=t\B^+(g(C(t))).
\end{align}
Here $g=g(u)$ is a scalar function and admits a formal series expansion $g(u)=\sum_{n=0}^{\infty}g_ku^k/k!$ with $g_0=1$. When $u=u(t):\R\rightarrow \mathfrak{C}$ is chosen to be an exponential generating function, we still assume that this formal series expansion is valid and naturally we have $g(\cdot):\mathfrak{C}\rightarrow\mathfrak{C}$.  $\B^+=\B^+(c_1,c_2,\cdots,c_n)$ in Eqn \eqref{DSE} is known as the {\em grafting operator} which connects subtrees $c_1,c_2,\cdots,c_n$ to a common root. For example, 
\begin{align*}
  \B^+(
{\fontsize{4}{1}\selectfont
\begin{forest}
 for tree={grow=90, l=1mm}
[]
 \path[fill=black]  (!.parent anchor) circle[radius=2pt];
\end{forest}
} 
{\fontsize{4}{1}\selectfont
\begin{forest}
 for tree={grow=90, l=1mm}
[[]]
 \path[fill=black]  (!.parent anchor) circle[radius=2pt];
 \path[fill=black] (!1.child anchor) circle[radius=2pt];
\end{forest}
})
=
{\fontsize{4}{1}\selectfont
\begin{forest}
 for tree={grow=90, l=1mm}
[[][[]]]
 \path[fill=black]  (!.parent anchor) circle[radius=2pt];
 \path[fill=black] (!1.child anchor) circle[radius=2pt];
 \path[fill=black] (!2.child anchor) circle[radius=2pt];
 \path[fill=black] (!21.child anchor) circle[radius=2pt];
\end{forest}
}.
\end{align*}
To be noticed that different from the previously defined right grafting product $\lgraf$, the grafting defined in $\B^+$ has no order and introduces a {\em new} vertex as the common root. In \eqref{DSE}, after taking the formal series expansion of $g(C(t))$, the grafting operator $\B^+$ will combine trees in the expansion series to form bigger trees, which are then matched with trees of different sizes corresponding to the exponential generating function $C(t)$ on the RHS of \eqref{DSE}. This leads to the recursive way to construct the unique solution to \eqref{DSE}
\footnote{Solution \eqref{com_class_eqn_sol} is given in \cite{yeats2017combinatorial} and \cite{foissy2008faa}. We present it with slight modifications adapted to our definition of the generating function $C(t)$ and $\{g_k\}_{k=0}^{\infty}$.}
:
\begin{equation}\label{com_class_eqn_sol}
\begin{aligned}
\begin{dcases}
\bm c_1&=
{\fontsize{4}{1}\selectfont
\begin{forest}
 for tree={grow=90, l=1mm}
[]
 \path[fill=black]  (!.parent anchor) circle[radius=2pt];
\end{forest}
}
,\\
\bm c_{n+1}&=\sum_{k=1}^n\sum_{p_1+p_2+\cdots+p_k=n}\frac{g_k}{k!}\B^+(\bm c_{p_1},\bm c_{p_1},\cdots,\bm c_{p_k})
\end{dcases}
\end{aligned}
\end{equation}
where $\bm c_j=\{c_j^1/|c_j^1|,c_j^2/|c_j^2|,\cdots\}\subset\mathfrak{C}$ is a collection of (normalized) trees with {\em different} sizes, which distinguishes it from the same-size forest $f_k$ defined in \eqref{Com_class_GF}. Given a function $g$, or equivalently all its expansion coefficients $\{g_k\}_{k=0}^{\infty}$, using \eqref{com_class_eqn_sol}, we can get the exact exponential generating function $C(t)$ and hence the combinatorial class $\mathfrak{C}$. 
The combinatorial DSE can be generalized to the multidimensional case and we get a system of combinatorial DSEs. Moreover, the grafting operator $\B^+$ can also be multi-degree. All these problems were discussed extensively in the work of Foissy \cite{foissy2008faa,foissy2010classification,foissy2014general,foissy2020multigraded}. The application to quantum field theory can be found in the groundbreaking work of Kreimer, Connes and Yeats \cite{connes1999hopf,connes2000renormalization,kreimer2006etude,yeats2008growth,yeats2017combinatorial}. CMZE can also be derived using the functional equation for exponential generating functions. This new set of equations corresponds to the combinatorial expansion for the memory kernel of the CMZE, and its form is similar to but different from the combinatorial DSE \eqref{DSE} due to the non-commutativity. Now we detail its construction.  

\paragraph{Time-independent CMZE}
We first focus on the tree representation of time-independent CMZE and show how its construction boils down to the reformulation of a functional equation for an exponential generating $C(t)$. To this end, we introduce two combinatorial classes $\mathfrak{C}$ and $\mathfrak{C}_{\tP}$. The size of trees within $\mathfrak{C}$ and $\mathfrak{C}_{\tP}$ is chosen to be $|\cdot|=\#(nodes)-1$, i.e. the total number of vertices (nodes) minus one. $\mathfrak{C}$ and $\mathfrak{C}_{\tP}$ are generated by the following graded exponential generating functions respectively:
\begin{align}
C(t)&=\sum_{k=0}^{\infty}f_k\frac{t^k}{k!}=
t{\fontsize{4}{1}\selectfont
\begin{forest}
 for tree={grow=90, l=1mm}
[[]]
 \path[fill=black]  (!.parent anchor) circle[radius=2pt]
                    (!1.child anchor) circle[radius=2pt];
\end{forest}
}
+\frac{t^2}{2!}
{\fontsize{4}{1}\selectfont
\begin{forest}
 for tree={grow=90, l=1mm}
[[][]]
 \path[fill=black]  (!.parent anchor) circle[radius=2pt]
                    (!1.child anchor) circle[radius=2pt]
                    (!2.child anchor) circle[radius=2pt];
\end{forest}
}
+\frac{t^3}{3!}
{\fontsize{4}{1}\selectfont
\begin{forest}
 for tree={grow=90, l=1mm}
[[][][]]
 \path[fill=black]  (!.parent anchor) circle[radius=2pt]
                    (!1.child anchor) circle[radius=2pt]
                    (!2.child anchor) circle[radius=2pt]
                    (!3.child anchor) circle[radius=2pt];
\end{forest}
}+\cdots\\
C_{\tP}(t)&=\sum_{k=0}^{\infty}f^{\tP}_k\frac{t^k}{k!}
=-{\fontsize{4}{1}\selectfont
\begin{forest}
 for tree={grow=90, l=1mm}
[[[]]]
 \path[fill=black]  (!.parent anchor) circle[radius=2pt];
 \path[fill=black] (!1.child anchor) circle[radius=2pt]
                   (!11.child anchor) circle[radius=2pt];
\end{forest}
}
+
{\fontsize{4}{1}\selectfont
\begin{forest}
for tree={grow=90, l=1mm}
[[][]]
\path[fill=black]  (!.parent anchor) circle[radius=2pt];
\path[fill=black] (!1.child anchor) circle[radius=2pt]
                 (!2.child anchor) circle[radius=2pt];
\end{forest}
}
+t\biggl[
{\fontsize{4}{1}\selectfont
\begin{forest}
 for tree={grow=90, l=1mm}
[[[[]]]]
 \path[fill=black]  (!.parent anchor) circle[radius=2pt];
 \path[fill=black] (!1.child anchor) circle[radius=2pt]
                    (!11.child anchor) circle[radius=2pt]
                   (!111.child anchor) circle[radius=2pt];
\end{forest}
}
-
{\fontsize{4}{1}\selectfont
\begin{forest}
 for tree={grow=90, l=1mm}
[[[]][]]
 \path[fill=black]  (!.parent anchor) circle[radius=2pt];
 \path[fill=black] (!1.child anchor) circle[radius=2pt]
                  (!11.child anchor) circle[radius=2pt]
                  (!2.child anchor) circle[radius=2pt];
\end{forest}
}
-
{\fontsize{4}{1}\selectfont
\begin{forest}
 for tree={parent anchor=east, child anchor=east, grow=90, l=1mm}
[[[][]]]
 \path[fill=black]  (!.parent anchor) circle[radius=2pt];
 \path[fill=black] (!1.child anchor) circle[radius=2pt]
                   (!11.child anchor) circle[radius=2pt]
                   (!12.child anchor) circle[radius=2pt];
\end{forest}
}
+
{\fontsize{4}{1}\selectfont
\begin{forest}
 for tree={grow=90, l=1mm}
[[][][]]
 \path[fill=black]  (!.parent anchor) circle[radius=2pt];
 \path[fill=black] (!1.child anchor) circle[radius=2pt]
                   (!2.child anchor) circle[radius=2pt]
                   (!3.child anchor) circle[radius=2pt];
\end{forest}
}
\biggr]
+\cdots
\label{CP(t)}
\end{align}
In fact, $f_k^{\tP}$ is $F_{k+1}^{\tP}$ in \eqref{Tree_graphs_bipart}. The subscript index $\tP$ implies that this generation function corresponds to the noncommutative bipartition polynomials. Now consider a function $F(\cdot):\mathfrak{C}\rightarrow\mathfrak{C}_{\tP}$ which admits a formal series expansion $F(u(t))=\sum_{k=0}^{\infty}F_k\lgraf\frac{(u(t))^k}{k!}$, where $F_k$ is a forest, not necessarily graded by $k$, and $u(t):\R\rightarrow\mathfrak{C}$ is an exponential generating function. Then the combinatorial expansion for the memory kernel of the time-independent CMZE can be formally induced by solving a functional equation:
\begin{align}\label{CMZE-DSE}
   C_{\tP}(t)=F(C(t)). 
\end{align}
This functional equation is similar to the combinatorial DSE \eqref{DSE} and can be solved recursively. However, unlike the DSE \eqref{DSE} in which one knows $g_k$ a prior and solves for the forests $\{\bm c_j\}_{j=1}^{\infty}$. For CMZE \eqref{CMZE-DSE}, it is just the opposite. i.e. we know explicitly the generating functions $C(t)$ and $C_{\tP}(t)$ and solve for the forests $\{F_k\}_{k=0}^{\infty}$. Specifically, following a similar procedure we use the formal Taylor series expansion to get the explicit expression of $F(C(t))$ and then match $t^n$ terms on the LHS of \eqref{CMZE-DSE}. Eventually, this leads to the tree solution:
\begin{equation}\label{tree_sol_CMZE}
\begin{aligned}
\begin{dcases}
    f_0^{\tP}&=F^{\tP}_1\\
    f_k^{\tP}&=\sum_{k=1}^n\sum_{p_1+p_2+\cdots+p_k=n}\frac{n!}{p_1!p_2!\cdots p_k!}F_k\lgraf\B^{\lgraf}(f_{p_1},f_{p_2},\cdots,f_{p_k}),\qquad k\geq 1.
\end{dcases}
\end{aligned}
\end{equation}
Here $\B^{\lgraf}$ is a right-grafting operator defined in terms of the right-grafting product $\lgraf$, which acts on trees as $\B^{\lgraf}(f_1,f_2,\cdots,f_m)=f_1\lgraf f_2\lgraf\cdots\lgraf f_m$. Note that the distributive property of the product $\lgraf$ enables further decomposition of the forest into trees and $\lgraf$ will combine them into new trees. As a result, we find that the tree representation of the first few $F_k$s is given by
\begin{equation}\label{first_few_tree_sol}
\begin{aligned}
F_0&={\fontsize{1}{1}\selectfont
\begin{forest}
 for tree={grow=90, l=1mm}
[[][]]
 \path[fill=black]  (!.parent anchor) circle[radius=2pt]
                    (!1.child anchor) circle[radius=2pt]
                    (!2.child anchor) circle[radius=2pt];
\end{forest}
}
-
{\fontsize{1}{1}\selectfont
\begin{forest}
 for tree={grow=90, l=1mm}
[[[]]]
 \path[fill=black]  (!.parent anchor) circle[radius=2pt]
                    (!1.child anchor) circle[radius=2pt]
                    (!11.child anchor) circle[radius=2pt];
\end{forest}
}\\
F_1&=
{\fontsize{1}{1}\selectfont
\begin{forest}
 for tree={grow=90, l=1mm}
[[[]]]
 \path[fill=black]  (!.parent anchor) circle[radius=2pt]
                    (!1.child anchor) circle[radius=2pt]
                    (!11.child anchor) circle[radius=2pt];
\end{forest}
}
-
{\fontsize{1}{1}\selectfont
\begin{forest}
 for tree={grow=90, l=1mm}
[[[]][]]
 \path[fill=black]  (!.parent anchor) circle[radius=2pt]
                    (!1.child anchor) circle[radius=2pt]
                    (!11.child anchor) circle[radius=2pt]
                    (!2.child anchor) circle[radius=2pt];
\end{forest}
}
\lgraf
{\fontsize{1}{1}\selectfont
\begin{forest}
 for tree={grow=90, l=1mm}
[[]]
 \path[draw]  (!.parent anchor) circle[radius=2pt]
              (!1.child anchor) circle[radius=2pt];
\end{forest}
}
-{\fontsize{1}{1}\selectfont
\begin{forest}
 for tree={grow=90, l=1mm}
[[][]]
 \path[fill=black]  (!.parent anchor) circle[radius=2pt]
                    (!1.child anchor) circle[radius=2pt]
                    (!2.child anchor) circle[radius=2pt];
\end{forest}
}
+
{\fontsize{1}{1}\selectfont
\begin{forest}
 for tree={grow=90, l=1mm}
[[][][]]
 \path[fill=black]  (!.parent anchor) circle[radius=2pt]
                    (!1.child anchor) circle[radius=2pt]
                    (!2.child anchor) circle[radius=2pt]
                    (!3.child anchor) circle[radius=2pt];
\end{forest}
}
\lgraf
{\fontsize{1}{1}\selectfont
\begin{forest}
 for tree={grow=90, l=1mm}
[[[]]]
 \path[draw]  (!.parent anchor) circle[radius=2pt]
              (!1.child anchor) circle[radius=2pt]
              (!11.child anchor) circle[radius=2pt];
\end{forest}
}\\
F_2&=
{\fontsize{1}{1}\selectfont
\begin{forest}
 for tree={grow=90, l=1mm}
[[[]][]]
 \path[fill=black]  (!.parent anchor) circle[radius=2pt]
                    (!1.child anchor) circle[radius=2pt]
                    (!11.child anchor) circle[radius=2pt]
                    (!2.child anchor) circle[radius=2pt];
\end{forest}
}
\lgraf
{\fontsize{1}{1}\selectfont
\begin{forest}
 for tree={grow=90, l=1mm}
[[]]
 \path[draw]  (!.parent anchor) circle[radius=2pt]
              (!1.child anchor) circle[radius=2pt];
\end{forest}
}
\lgraf
{\fontsize{1}{1}\selectfont
\begin{forest}
 for tree={grow=90, l=1mm}
[[][]]
 \path[fill=black]  (!.parent anchor) circle[radius=2pt]
                    (!1.child anchor) circle[radius=2pt]
                    (!2.child anchor) circle[radius=2pt];
\end{forest}
}
\lgraf
{\fontsize{1}{1}\selectfont
\begin{forest}
 for tree={grow=90, l=1mm}
[[[]]]
 \path[draw]  (!.parent anchor) circle[radius=2pt]
              (!1.child anchor) circle[radius=2pt]
              (!11.child anchor) circle[radius=2pt];
\end{forest}
}
+
{\fontsize{1}{1}\selectfont
\begin{forest}
 for tree={grow=90, l=1mm}
[[][]]
 \path[fill=black]  (!.parent anchor) circle[radius=2pt]
                    (!1.child anchor) circle[radius=2pt]
                    (!2.child anchor) circle[radius=2pt];
\end{forest}
}
-
{\fontsize{1}{1}\selectfont
\begin{forest}
 for tree={grow=90, l=1mm}
[[][][]]
 \path[fill=black]  (!.parent anchor) circle[radius=2pt]
                    (!1.child anchor) circle[radius=2pt]
                    (!2.child anchor) circle[radius=2pt]
                    (!3.child anchor) circle[radius=2pt];
\end{forest}
}
\lgraf
{\fontsize{1}{1}\selectfont
\begin{forest}
 for tree={grow=90, l=1mm}
[[]]
 \path[draw]  (!.parent anchor) circle[radius=2pt]
              (!1.child anchor) circle[radius=2pt];
\end{forest}
}
\lgraf
{\fontsize{1}{1}\selectfont
\begin{forest}
 for tree={grow=90, l=1mm}
[[][]]
 \path[fill=black]  (!.parent anchor) circle[radius=2pt]
                    (!1.child anchor) circle[radius=2pt]
                    (!2.child anchor) circle[radius=2pt];
\end{forest}
}
\lgraf
{\fontsize{1}{1}\selectfont
\begin{forest}
 for tree={grow=90, l=1mm}
[[[]]]
 \path[draw]  (!.parent anchor) circle[radius=2pt]
              (!1.child anchor) circle[radius=2pt]
              (!11.child anchor) circle[radius=2pt];
\end{forest}
}\\
&\ \ \ 
-
{\fontsize{1}{1}\selectfont
\begin{forest}
 for tree={grow=90, l=1mm}
[[[]]]
 \path[fill=black]  (!.parent anchor) circle[radius=2pt]
                    (!1.child anchor) circle[radius=2pt]
                    (!11.child anchor) circle[radius=2pt];
\end{forest}
}
+
{\fontsize{1}{1}\selectfont
\begin{forest}
 for tree={grow=90, l=1mm}
[[[]][]]
 \path[fill=black]  (!.parent anchor) circle[radius=2pt]
                    (!1.child anchor) circle[radius=2pt]
                    (!11.child anchor) circle[radius=2pt]
                    (!2.child anchor) circle[radius=2pt];
\end{forest}
}
\lgraf
{\fontsize{1}{1}\selectfont
\begin{forest}
 for tree={grow=90, l=1mm}
[[]]
 \path[draw]  (!.parent anchor) circle[radius=2pt]
              (!1.child anchor) circle[radius=2pt];
\end{forest}
}
-
{\fontsize{1}{1}\selectfont
\begin{forest}
 for tree={grow=90, l=1mm}
[[[]][][]]
 \path[fill=black]  (!.parent anchor) circle[radius=2pt]
                    (!1.child anchor) circle[radius=2pt]
                    (!11.child anchor) circle[radius=2pt]
                    (!2.child anchor) circle[radius=2pt]
                    (!3.child anchor) circle[radius=2pt];
\end{forest}
}
\lgraf
{\fontsize{1}{1}\selectfont
\begin{forest}
 for tree={grow=90, l=1mm}
[[[]]]
 \path[draw]  (!.parent anchor) circle[radius=2pt]
              (!1.child anchor) circle[radius=2pt]
              (!11.child anchor) circle[radius=2pt];
\end{forest}
}
-
{\fontsize{1}{1}\selectfont
\begin{forest}
 for tree={grow=90, l=1mm}
[[][][]]
 \path[fill=black]  (!.parent anchor) circle[radius=2pt]
                    (!1.child anchor) circle[radius=2pt]
                    (!2.child anchor) circle[radius=2pt]
                    (!3.child anchor) circle[radius=2pt];
\end{forest}
}
\lgraf
{\fontsize{1}{1}\selectfont
\begin{forest}
 for tree={grow=90, l=1mm}
[[]]
 \path[draw]  (!.parent anchor) circle[radius=2pt]
              (!1.child anchor) circle[radius=2pt];
\end{forest}
}
+
{\fontsize{1}{1}\selectfont
\begin{forest}
 for tree={grow=90, l=1mm}
[[][][][]]
 \path[fill=black]  (!.parent anchor) circle[radius=2pt]
                    (!1.child anchor) circle[radius=2pt]
                    (!2.child anchor) circle[radius=2pt]
                    (!3.child anchor) circle[radius=2pt]
                    (!4.child anchor) circle[radius=2pt];
\end{forest}
}
\lgraf
{\fontsize{1}{1}\selectfont
\begin{forest}
 for tree={grow=90, l=1mm}
[[[]]]
 \path[draw]  (!.parent anchor) circle[radius=2pt]
              (!1.child anchor) circle[radius=2pt]
              (!11.child anchor) circle[radius=2pt];
\end{forest}
}
\end{aligned}
\end{equation}
Here we have introduced the {\em inverse} trees:
\begin{align}
{\fontsize{1}{1}\selectfont
\begin{forest}
 for tree={grow=90, l=1mm}
[[]]
 \path[draw]  (!.parent anchor) circle[radius=2pt]
              (!1.child anchor) circle[radius=2pt];
\end{forest}
},\quad
{\fontsize{1}{1}\selectfont
\begin{forest}
 for tree={grow=90, l=1mm}
[[[]]]
 \path[draw]  (!.parent anchor) circle[radius=2pt]
              (!1.child anchor) circle[radius=2pt]
              (!11.child anchor) circle[radius=2pt];
\end{forest}
}
,\quad
{\fontsize{1}{1}\selectfont
\begin{forest}
 for tree={grow=90, l=1mm}
[[[[]]]]
 \path[draw]  (!.parent anchor) circle[radius=2pt]
              (!1.child anchor) circle[radius=2pt]
              (!11.child anchor) circle[radius=2pt]
              (!111.child anchor) circle[radius=2pt];
\end{forest}
}
\quad\cdots,
\end{align}
which acts on the regular trees as:
\begin{align*}
 {\fontsize{1}{1}\selectfont
\begin{forest}
 for tree={grow=90, l=1mm}
[[]]
 \path[draw]  (!.parent anchor) circle[radius=2pt]
              (!1.child anchor) circle[radius=2pt];
\end{forest}
}\lgraf
{\fontsize{1}{1}\selectfont
\begin{forest}
 for tree={grow=90, l=1mm}
[[]]
 \path[fill=black]  (!.parent anchor) circle[radius=2pt]
              (!1.child anchor) circle[radius=2pt];
\end{forest}
}
={\fontsize{1}{1}\selectfont
\begin{forest}
 for tree={grow=90, l=1mm}
[[]]
 \path[fill=black]  (!.parent anchor) circle[radius=2pt]
              (!1.child anchor) circle[radius=2pt];
\end{forest}
}\lgraf
 {\fontsize{1}{1}\selectfont
\begin{forest}
 for tree={grow=90, l=1mm}
[[]]
 \path[draw]  (!.parent anchor) circle[radius=2pt]
              (!1.child anchor) circle[radius=2pt];
\end{forest}
}=\I=\blkdbig,\quad 
{\fontsize{1}{1}\selectfont
\begin{forest}
 for tree={grow=90, l=1mm}
[[]]
 \path[draw]  (!.parent anchor) circle[radius=2pt]
              (!1.child anchor) circle[radius=2pt];
\end{forest}
}\lgraf
{\fontsize{1}{1}\selectfont
\begin{forest}
 for tree={grow=90, l=1mm}
[[[]]]
 \path[fill=black]  (!.parent anchor) circle[radius=2pt]
              (!1.child anchor) circle[radius=2pt]
              (!11.child anchor) circle[radius=2pt];
\end{forest}
}
={\fontsize{1}{1}\selectfont
\begin{forest}
 for tree={grow=90, l=1mm}
[[[]]]
 \path[fill=black]  (!.parent anchor) circle[radius=2pt]
              (!1.child anchor) circle[radius=2pt]
              (!11.child anchor) circle[radius=2pt];
\end{forest}
}\lgraf
 {\fontsize{1}{1}\selectfont
\begin{forest}
 for tree={grow=90, l=1mm}
[[]]
 \path[draw]  (!.parent anchor) circle[radius=2pt]
              (!1.child anchor) circle[radius=2pt];
\end{forest}
}=
{\fontsize{1}{1}\selectfont
\begin{forest}
 for tree={grow=90, l=1mm}
[[]]
 \path[fill=black]  (!.parent anchor) circle[radius=2pt]
              (!1.child anchor) circle[radius=2pt];
\end{forest}
}.
\end{align*}
Now, it is fairly simple to use the tree solution \eqref{tree_sol_CMZE} to get the combinatorial MZE. One only needs to use the following dictionary to translate trees into operators:
\begin{equation}\label{trans_rule_2}
\begin{aligned}
{\fontsize{4}{1}\selectfont
\begin{forest}
 for tree={grow=90, l=1mm}
[[]]
 \path[fill=black]  (!.parent anchor) circle[radius=2pt]
                    (!1.child anchor) circle[radius=2pt];
\end{forest}
}
&
\Rightarrow \P\L\P,
\qquad
{\fontsize{4}{1}\selectfont
\begin{forest}
 for tree={grow=90, l=1mm}
[[]]
 \path[draw]  (!.parent anchor) circle[radius=2pt]
              (!1.child anchor) circle[radius=2pt];
\end{forest}
}
\Rightarrow (\P\L\P|_V)^{-1},
\\
{\fontsize{4}{1}\selectfont
\begin{forest}
 for tree={grow=90, l=1mm}
[[][]]
 \path[fill=black]  (!.parent anchor) circle[radius=2pt]
                    (!1.child anchor) circle[radius=2pt]
                    (!2.child anchor) circle[radius=2pt];
\end{forest}
}
&\Rightarrow \P\L^2\P,\qquad
{\fontsize{4}{1}\selectfont
\begin{forest}
 for tree={grow=90, l=1mm}
[[][][]]
 \path[fill=black]  (!.parent anchor) circle[radius=2pt]
                    (!1.child anchor) circle[radius=2pt]
                    (!2.child anchor) circle[radius=2pt]
                    (!3.child anchor) circle[radius=2pt];
\end{forest}
}
\Rightarrow \P\L^3\P,\qquad
{\fontsize{4}{1}\selectfont
\begin{forest}
 for tree={grow=90, l=1mm}
[[][][][]]
 \path[fill=black]  (!.parent anchor) circle[radius=2pt]
                    (!1.child anchor) circle[radius=2pt]
                    (!2.child anchor) circle[radius=2pt]
                    (!3.child anchor) circle[radius=2pt]
                    (!4.child anchor) circle[radius=2pt];
\end{forest}
}
\Rightarrow \P\L^4\P,\qquad \cdots\\
{\fontsize{4}{1}\selectfont
\begin{forest}
 for tree={grow=90, l=1mm}
[[[]]]
 \path[fill=black]  (!.parent anchor) circle[radius=2pt]
                    (!1.child anchor) circle[radius=2pt]
                    (!11.child anchor) circle[radius=2pt];
\end{forest}
}
&\Rightarrow (\P\L\P)^2,
\qquad
{\fontsize{4}{1}\selectfont
\begin{forest}
 for tree={grow=90, l=1mm}
[[[[]]]]
 \path[fill=black]  (!.parent anchor) circle[radius=2pt]
                    (!1.child anchor) circle[radius=2pt]
                    (!11.child anchor) circle[radius=2pt]
                    (!111.child anchor) circle[radius=2pt];
\end{forest}
}
\Rightarrow (\P\L\P)^3,\qquad \cdots\\
{\fontsize{4}{1}\selectfont
\begin{forest}
 for tree={grow=90, l=1mm}
[[[]]]
 \path[draw]  (!.parent anchor) circle[radius=2pt]
                    (!1.child anchor) circle[radius=2pt]
                    (!11.child anchor) circle[radius=2pt];
\end{forest}
}
&\Rightarrow (\P\L\P|_V)^{-2},
\qquad
{\fontsize{4}{1}\selectfont
\begin{forest}
 for tree={grow=90, l=1mm}
[[[[]]]]
 \path[draw]  (!.parent anchor) circle[radius=2pt]
                    (!1.child anchor) circle[radius=2pt]
                    (!11.child anchor) circle[radius=2pt]
                    (!111.child anchor) circle[radius=2pt];
\end{forest}
}
\Rightarrow (\P\L\P|_V)^{-3},
\qquad \cdots.
\end{aligned}
\end{equation}
As a result of the homomorphism induced by the translation, we find that the exponential generating function $C(t)$ and $C_{\tP}(t)$ are mapped into evolution operators as:
\begin{align}
    C(t) \Rightarrow \P e^{t\L}\P -\P,\qquad C_{\tP}(t)\Rightarrow \P\L e^{t\Q\L}\Q\L\P.
\end{align}
Moreover, the recursive equation \eqref{tree_sol_CMZE} is mapped into \eqref{eqn:rec_op_nonc}, and the first few tree solution \eqref{first_few_tree_sol} is merely a translation of the operator series \eqref{first_few_f_n}. At this point, since we know all $F_k$s and the translation rule between trees and operators, it is easy to reformulate the time-independent CMZE \eqref{op_id_1_nonc} as the differential equation for the generating function $C(t)$: 
\begin{align}\label{CMZE-tree}
   \frac{d}{dt}C(t)=(C(t)+C(0))\lgraf F_0+\int_0^t(C(t-s)+C(0))\lgraf F(C(s))ds
\end{align}
Eqn \eqref{CMZE-tree} is equivalent to \eqref{op_id_1_nonc} but the form of it is slightly different since we defined $C(t)$ as $\P e^{t\L}\P -\P$ instead of $\P e^{t\L}\P$. As we have seen from the case study of Section \ref{sec:main_thm}, $C(t)+C(0)$ corresponds to the time correlation function or Green's function of an interactive many-body system. Hence \eqref{CMZE-tree} is just a tree representation of the self-consistent EOM for the correlation (Green's) function.
\paragraph{Time-dependent CMZE} In the derivation of the time-dependent CMZE, we have used different alphabets to derive the expansion series. Correspondingly, the tree representation of the time-dependent CMZE requires the usage of decorated planar trees, i.e. trees with different colors. With the decorated trees, the combinatorial expansion for the time-dependent CMZE can be reformulated into a system of combinatorial DSEs, which can be solved recursively. To begin, we introduce the following graded exponential generating function for decorated trees.
\begin{equation}\label{NMZE-EGF_noneq}
\begin{aligned}
C_{\tB,\blkdbig}^{\ydbig}(t)
&=\sum_{k=0}f_k^{\tB}\frac{t^k}{k!}
=
\ydbig
\lgraf
\biggl[
\blkdbig
+
t
{\fontsize{1}{1}\selectfont
\begin{forest}
 for tree={grow=90, l=1mm}
[[]]
 \path[fill=black]  (!.parent anchor) circle[radius=2pt]
                    (!1.child anchor) circle[radius=2pt];
\end{forest}
}
+
\frac{t^2}{2}
\biggl(
{\fontsize{1}{1}\selectfont
\begin{forest}
 for tree={grow=90, l=1mm}
[[[]]]
 \path[fill=black]  (!.parent anchor) circle[radius=2pt];
 \path[fill=black] (!1.child anchor) circle[radius=2pt]
                   (!11.child anchor) circle[radius=2pt];
\end{forest}
}
+
{\fontsize{1}{1}\selectfont
\begin{forest}
for tree={grow=90, l=1mm}
[[][]]
\path[fill=black]  (!.parent anchor) circle[radius=2pt];
\path[fill=black] (!1.child anchor) circle[radius=2pt]
                 (!2.child anchor) circle[radius=2pt];
\end{forest}
}
\biggr)
+\cdots
\biggr]
\lgraf
\ydbig\\
C_{\tB_1,\rdbig}^{\gdbig}(t,s)&=\sum_{k=0}f_k^{\tB_1}(s)\frac{t^k}{k!}
=
\gdbig\lgraf
\biggl[
\rdbig+
t
{\fontsize{1}{1}\selectfont
\begin{forest}
 for tree={grow=90, l=1mm}
[[]]
 \path[fill=red]  (!.parent anchor) circle[radius=2pt]
                  (!1.child anchor) circle[radius=2pt];
\end{forest}
}
+\frac{t^2}{2}
\biggl(
{\fontsize{1}{1}\selectfont
\begin{forest}
 for tree={grow=90, l=1mm}
[[[]]]
 \path[fill=red]  (!.parent anchor) circle[radius=2pt];
 \path[fill=red] (!1.child anchor) circle[radius=2pt]
                   (!11.child anchor) circle[radius=2pt];
\end{forest}
}
+
{\fontsize{1}{1}\selectfont
\begin{forest}
for tree={grow=90, l=1mm}
[[][]]
\path[fill=red]  (!.parent anchor) circle[radius=2pt];
\path[fill=red] (!1.child anchor) circle[radius=2pt]
                 (!2.child anchor) circle[radius=2pt];
\end{forest}
}
\biggr)
+\cdots\biggr]\\
C_{\tB_2,\rdbig}^{\pdbig}(s,t)&=\sum_{k=0}f_k^{\tB_1}(t)\frac{s^k}{k!}
=
\biggl[
\rdbig-
s
{\fontsize{1}{1}\selectfont
\begin{forest}
 for tree={grow=90, l=1mm}
[[]]
 \path[fill=red]  (!.parent anchor) circle[radius=2pt]
                  (!1.child anchor) circle[radius=2pt];
\end{forest}
}
+\frac{s^2}{2}
\biggl(
{\fontsize{1}{1}\selectfont
\begin{forest}
 for tree={grow=90, l=1mm}
[[[]]]
 \path[fill=red]  (!.parent anchor) circle[radius=2pt];
 \path[fill=red] (!1.child anchor) circle[radius=2pt]
                   (!11.child anchor) circle[radius=2pt];
\end{forest}
}
-
{\fontsize{1}{1}\selectfont
\begin{forest}
for tree={grow=90, l=1mm}
[[][]]
\path[fill=red]  (!.parent anchor) circle[radius=2pt];
\path[fill=red] (!1.child anchor) circle[radius=2pt]
                 (!2.child anchor) circle[radius=2pt];
\end{forest}
}
\biggr)
+\cdots\biggr]\lgraf\pdbig
\end{aligned}
\end{equation}
Specifically, $C_{\tB,\blkdbig}^{\ydbig}(t)$, $C_{\tB_1,\rdbig}^{\gdbig}(t,s)$ and $C_{\tB_2,\rdbig}^{\pdbig}(s,t)$ are generating functions for combinatorial classes $\mathfrak{C}$, $\mathfrak{C}_{\tB_1}$ and $\mathfrak{C}_{\tB_2}$. In $\mathfrak{C}$, $\mathfrak{C}_{\tB_1}$ and $\mathfrak{C}_{\tB_2}$, the size functions are defined as $|\cdot|=\#(nodes)-3$, $|\cdot|=\#(nodes)-2$ and $|\cdot|=\#(nodes)-2$ respectively. These three combinatorial classes all correspond to the Type-I noncommutative Bell polynomials with additional decorations of one or two vertices with different colors. This fact is reflected in the subscript indices $\tB,\tB_1,\tB_2$ and the additional colored dots in expression of $C_{\tB,\blkdbig}^{\ydbig}(t)$, $C_{\tB_1,\rdbig}^{\gdbig}(t,s)$ and $C_{\tB_2,\rdbig}^{\pdbig}(s,t)$. Now consider a function $F(s,\cdot):\R\times\mathfrak{C}\times\mathfrak{C}_{F}\rightarrow\mathfrak{C}_{\tB_1}$ which admits a formal series expansion $F(s,u(t))=\sum_{k=0}^{\infty}\frac{(u(t))^k}{k!}\lgraf F_k(s)$. Here $u(t):\R\rightarrow\mathfrak{C}$ is a graded exponential generating function. $F_k(s)$ is a forest depending on $s$, and for any fixed $s$, $\{F_k(s)\}_{k=1}^{\infty}$ forms another combinatorial class $\mathfrak{C}_{F}$ (currently unknown). Similarly, we define 
$G(t,\cdot):\R\times\mathfrak{C}\times\mathfrak{C}_G\rightarrow\mathfrak{C}_{\tB_2}$, which is induced by a formal series expansion $G(t,u(s))=\sum_{k=0}^{\infty}G_k(t)\lgraf\frac{(u(s))^k}{k!}$. 
With all these definitions, the combinatorial expansion for the memory kernel of the time-dependent CMZE can be obtained by solving the following decoupled system of functional equations for decorated trees: 
\begin{equation}\label{CMZE-DSE_noneq}
    \begin{aligned}
    \begin{dcases}
    C_{\tB_1,\rdbig}^{\gdbig}(t,s)&=F(s,C_{\tB,\blkdbig}^{\ydbig}(t))\\
    C_{\tB_2,\rdbig}^{\pdbig}(s,t)&=G(t,C_{\tB,\blkdbig}^{\ydbig}(s))
    \end{dcases}
    \end{aligned}
\end{equation}
Here, one needs to solve for $F_k(s)$ and $G_k(t)$ separately. which can be obtained using the recurrence relations:
\begin{equation}\label{tree_sol_CMZE_FG}
\begin{aligned}
&
\begin{dcases}
    f_0^{\tB_1}(s)&=\gdbig\\
    f_k^{\tB_1}(s)&=\sum_{k=1}^n\sum_{p_1+p_2+\cdots+p_k=n}\frac{n!}{p_1!p_2!\cdots p_k!}\B^{\lgraf}(f^{\tB}_{p_1},f^{\tB}_{p_2},\cdots,f^{\tB}_{p_k})\lgraf F_k(s),\qquad k\geq 1
\end{dcases}
\\
&
\begin{dcases}
    f_0^{\tB_2}(t)&=\pdbig\\
    f_0^{\tB_2}(t)&=\sum_{k=1}^n\sum_{p_1+p_2+\cdots+p_k=n}\frac{n!}{p_1!p_2!\cdots p_k!}G_k(t)\lgraf\B^{\lgraf}(f^{\tB}_{p_1},f^{\tB}_{p_2},\cdots,f^{\tB}_{p_k}),\qquad k\geq 1
\end{dcases}
\end{aligned}
\end{equation}
where $f_k^{\tB}$, $f_k^{\tB_1}$ and $f_k^{\tB_2}$ are forest of the size $k$ defined in \eqref{NMZE-EGF_noneq}. In the course of solving for $F_k(s)$ and $G_k(t)$, we need to introduce inverse trees: $\bledbig,
{\fontsize{1}{1}\selectfont
\begin{forest}
 for tree={grow=90, l=1mm}
[[]]
 \path[fill=blue]  (!.parent anchor) circle[radius=2pt];
 \path[fill=blue] (!1.child anchor) circle[radius=2pt];
\end{forest}
}
,\cdots$. As a result, we find that the first few tree solutions are given by:
\begin{equation}\label{tree_Fs}
\begin{aligned}
    F_0(s)&=\gdbig\\
    F_1(s)&=\gdbig\lgraf
{\fontsize{1}{1}\selectfont
\begin{forest}
 for tree={grow=90, l=1mm}
[[]]
 \path[fill=red]  (!.parent anchor) circle[radius=2pt];
 \path[fill=red] (!1.child anchor) circle[radius=2pt];
\end{forest}
}
\lgraf\bledbig\\
F_2(s)&=\gdbig\lgraf
{\fontsize{1}{1}\selectfont
\begin{forest}
 for tree={grow=90, l=1mm}
[[[]]]
 \path[fill=red]  (!.parent anchor) circle[radius=2pt];
 \path[fill=red] (!1.child anchor) circle[radius=2pt];
  \path[fill=red] (!11.child anchor) circle[radius=2pt];
\end{forest}
}\lgraf
{\fontsize{1}{1}\selectfont
\begin{forest}
 for tree={grow=90, l=1mm}
[[]]
 \path[fill=blue]  (!.parent anchor) circle[radius=2pt];
 \path[fill=blue] (!1.child anchor) circle[radius=2pt];
\end{forest}
}
+
\gdbig\lgraf
{\fontsize{1}{1}\selectfont
\begin{forest}
 for tree={grow=90, l=1mm}
[[][]]
 \path[fill=red]  (!.parent anchor) circle[radius=2pt];
 \path[fill=red] (!1.child anchor) circle[radius=2pt];
  \path[fill=red] (!2.child anchor) circle[radius=2pt];
\end{forest}
}\lgraf
{\fontsize{1}{1}\selectfont
\begin{forest}
 for tree={grow=90, l=1mm}
[[]]
 \path[fill=blue]  (!.parent anchor) circle[radius=2pt];
 \path[fill=blue] (!1.child anchor) circle[radius=2pt];
\end{forest}
}
-
\gdbig\lgraf
{\fontsize{1}{1}\selectfont
\begin{forest}
 for tree={grow=90, l=1mm}
[[]]
 \path[fill=red]  (!.parent anchor) circle[radius=2pt];
 \path[fill=red] (!1.child anchor) circle[radius=2pt];
\end{forest}
}
\lgraf
\bledbig
\lgraf
\ydbig
\lgraf
{\fontsize{1}{1}\selectfont
\begin{forest}
 for tree={grow=90, l=1mm}
[[[]]]
 \path[fill=black]  (!.parent anchor) circle[radius=2pt];
 \path[fill=black] (!1.child anchor) circle[radius=2pt];
 \path[fill=black] (!11.child anchor) circle[radius=2pt];
\end{forest}
}
\lgraf
{\fontsize{1}{1}\selectfont
\begin{forest}
 for tree={grow=90, l=1mm}
[[]]
 \path[fill=blue]  (!.parent anchor) circle[radius=2pt];
 \path[fill=blue] (!1.child anchor) circle[radius=2pt];
\end{forest}
}
-
\gdbig\lgraf
{\fontsize{1}{1}\selectfont
\begin{forest}
 for tree={grow=90, l=1mm}
[[]]
 \path[fill=red]  (!.parent anchor) circle[radius=2pt];
 \path[fill=red] (!1.child anchor) circle[radius=2pt];
\end{forest}
}
\lgraf
\bledbig
\lgraf
\ydbig
\lgraf
{\fontsize{1}{1}\selectfont
\begin{forest}
 for tree={grow=90, l=1mm}
[[][]]
 \path[fill=black]  (!.parent anchor) circle[radius=2pt];
 \path[fill=black] (!1.child anchor) circle[radius=2pt];
 \path[fill=black] (!2.child anchor) circle[radius=2pt];
\end{forest}
}
\lgraf
{\fontsize{1}{1}\selectfont
\begin{forest}
 for tree={grow=90, l=1mm}
[[]]
 \path[fill=blue]  (!.parent anchor) circle[radius=2pt];
 \path[fill=blue] (!1.child anchor) circle[radius=2pt];
\end{forest}
},
\end{aligned}
\end{equation}
and
\begin{equation}\label{tree_Gt}
\begin{aligned}
    G_0(t)&=\pdbig\\
    G_1(t)&=-
{\fontsize{1}{1}\selectfont
\begin{forest}
 for tree={grow=90, l=1mm}
[[]]
 \path[fill=red]  (!.parent anchor) circle[radius=2pt];
 \path[fill=red] (!1.child anchor) circle[radius=2pt];
\end{forest}
}
\lgraf\pdbig\lgraf\bledbig\\
G_2(t)&=
{\fontsize{1}{1}\selectfont
\begin{forest}
 for tree={grow=90, l=1mm}
[[[]]]
 \path[fill=red]  (!.parent anchor) circle[radius=2pt];
 \path[fill=red] (!1.child anchor) circle[radius=2pt];
  \path[fill=red] (!11.child anchor) circle[radius=2pt];
\end{forest}
}\lgraf\pdbig\lgraf
{\fontsize{1}{1}\selectfont
\begin{forest}
 for tree={grow=90, l=1mm}
[[]]
 \path[fill=blue]  (!.parent anchor) circle[radius=2pt];
 \path[fill=blue] (!1.child anchor) circle[radius=2pt];
\end{forest}
}
-
{\fontsize{1}{1}\selectfont
\begin{forest}
 for tree={grow=90, l=1mm}
[[][]]
 \path[fill=red]  (!.parent anchor) circle[radius=2pt];
 \path[fill=red] (!1.child anchor) circle[radius=2pt];
  \path[fill=red] (!2.child anchor) circle[radius=2pt];
\end{forest}
}\lgraf\pdbig\lgraf
{\fontsize{1}{1}\selectfont
\begin{forest}
 for tree={grow=90, l=1mm}
[[]]
 \path[fill=blue]  (!.parent anchor) circle[radius=2pt];
 \path[fill=blue] (!1.child anchor) circle[radius=2pt];
\end{forest}
}
+
{\fontsize{1}{1}\selectfont
\begin{forest}
 for tree={grow=90, l=1mm}
[[]]
 \path[fill=red]  (!.parent anchor) circle[radius=2pt];
 \path[fill=red] (!1.child anchor) circle[radius=2pt];
\end{forest}
}\lgraf
\pdbig
\lgraf
\bledbig
\lgraf
\ydbig
\lgraf
{\fontsize{1}{1}\selectfont
\begin{forest}
 for tree={grow=90, l=1mm}
[[[]]]
 \path[fill=black]  (!.parent anchor) circle[radius=2pt];
 \path[fill=black] (!1.child anchor) circle[radius=2pt];
 \path[fill=black] (!11.child anchor) circle[radius=2pt];
\end{forest}
}
\lgraf
{\fontsize{1}{1}\selectfont
\begin{forest}
 for tree={grow=90, l=1mm}
[[]]
 \path[fill=blue]  (!.parent anchor) circle[radius=2pt];
 \path[fill=blue] (!1.child anchor) circle[radius=2pt];
\end{forest}
}
+
{\fontsize{1}{1}\selectfont
\begin{forest}
 for tree={grow=90, l=1mm}
[[]]
 \path[fill=red]  (!.parent anchor) circle[radius=2pt];
 \path[fill=red] (!1.child anchor) circle[radius=2pt];
\end{forest}
}\lgraf
\pdbig
\lgraf
\bledbig
\lgraf
\ydbig
\lgraf
{\fontsize{1}{1}\selectfont
\begin{forest}
 for tree={grow=90, l=1mm}
[[][]]
 \path[fill=black]  (!.parent anchor) circle[radius=2pt];
 \path[fill=black] (!1.child anchor) circle[radius=2pt];
 \path[fill=black] (!2.child anchor) circle[radius=2pt];
\end{forest}
}
\lgraf
{\fontsize{1}{1}\selectfont
\begin{forest}
 for tree={grow=90, l=1mm}
[[]]
 \path[fill=blue]  (!.parent anchor) circle[radius=2pt];
 \path[fill=blue] (!1.child anchor) circle[radius=2pt];
\end{forest}
}.
\end{aligned}
\end{equation}
The operator-form time-dependent CMZE can be easily obtained using the dictionary of translation for decorated trees. Specifically, the simplest decorated trees are translated as follows:
\begin{equation}\label{trans_rule_3.0}
\begin{aligned}
    {\fontsize{4}{1}\selectfont
\begin{forest}
 for tree={grow=90, l=1mm}
[]
 \path[fill=yellow]  (!.parent anchor) circle[radius=2pt];
\end{forest}
}
\Rightarrow \P,
\qquad
{\fontsize{4}{1}\selectfont
\begin{forest}
 for tree={grow=90, l=1mm}
[]
 \path[fill=green]  (!.parent anchor) circle[radius=2pt];
\end{forest}
}
\Rightarrow \P\L(s),
\qquad
{\fontsize{4}{1}\selectfont
\begin{forest}
 for tree={grow=90, l=1mm}
[]
 \path[fill=pink]  (!.parent anchor) circle[radius=2pt];
\end{forest}
}
\Rightarrow \Q\L(t),
\qquad
{\fontsize{4}{1}\selectfont
\begin{forest}
 for tree={grow=90, l=1mm}
[]
 \path[fill=blue]  (!.parent anchor) circle[radius=2pt];
\end{forest}
}
\Rightarrow (\P\L_1\P|_V)^{-1}.
\end{aligned}
\end{equation}
The height-1 decorated trees are translated as:
\begin{equation}\label{trans_rule_3.1}
\begin{aligned}
    {\fontsize{1}{1}\selectfont
\begin{forest}
 for tree={grow=90, l=1mm}
[[]]
 \path[fill=black]  (!.parent anchor) circle[radius=2pt]
                    (!1.child anchor) circle[radius=2pt];
\end{forest}
}
\Rightarrow \L_1,
\qquad
{\fontsize{1}{1}\selectfont
\begin{forest}
 for tree={grow=90, l=1mm}
[[]]
 \path[fill=red]  (!.parent anchor) circle[radius=2pt]
                  (!1.child anchor) circle[radius=2pt];
\end{forest}
}
\Rightarrow \Q\L_1.
\end{aligned}
\end{equation}
Trees of different colors are translated in the same way. For instance, the black-decorated trees are translated to operators as:
\begin{equation}\label{trans_rule_3.2}
\begin{aligned}
{\fontsize{4}{1}\selectfont
\begin{forest}
 for tree={grow=90, l=1mm}
[[]]
 \path[fill=black]  (!.parent anchor) circle[radius=2pt]
                    (!1.child anchor) circle[radius=2pt];
\end{forest}
}
&\Rightarrow \L_1,
\qquad
{\fontsize{4}{1}\selectfont
\begin{forest}
 for tree={grow=90, l=1mm}
[[][]]
 \path[fill=black]  (!.parent anchor) circle[radius=2pt]
                    (!1.child anchor) circle[radius=2pt]
                    (!2.child anchor) circle[radius=2pt];
\end{forest}
}
\Rightarrow \L_2,\qquad
{\fontsize{4}{1}\selectfont
\begin{forest}
 for tree={grow=90, l=1mm}
[[][][]]
 \path[fill=black]  (!.parent anchor) circle[radius=2pt]
                    (!1.child anchor) circle[radius=2pt]
                    (!2.child anchor) circle[radius=2pt]
                    (!3.child anchor) circle[radius=2pt];
\end{forest}
}
\Rightarrow \L_3,\qquad
{\fontsize{4}{1}\selectfont
\begin{forest}
 for tree={grow=90, l=1mm}
[[][][][]]
 \path[fill=black]  (!.parent anchor) circle[radius=2pt]
                    (!1.child anchor) circle[radius=2pt]
                    (!2.child anchor) circle[radius=2pt]
                    (!3.child anchor) circle[radius=2pt]
                    (!4.child anchor) circle[radius=2pt];
\end{forest}
}
\Rightarrow \L_4,\qquad \cdots\\
{\fontsize{4}{1}\selectfont
\begin{forest}
 for tree={grow=90, l=1mm}
[[[]]]
 \path[fill=black]  (!.parent anchor) circle[radius=2pt]
                    (!1.child anchor) circle[radius=2pt]
                    (!11.child anchor) circle[radius=2pt];
\end{forest}
}
&\Rightarrow (\L_1)^2,
\qquad
{\fontsize{4}{1}\selectfont
\begin{forest}
 for tree={grow=90, l=1mm}
[[[[]]]]
 \path[fill=black]  (!.parent anchor) circle[radius=2pt]
                    (!1.child anchor) circle[radius=2pt]
                    (!11.child anchor) circle[radius=2pt]
                    (!111.child anchor) circle[radius=2pt];
\end{forest}
}
\Rightarrow (\L_1)^3,\qquad \cdots.
\end{aligned}
\end{equation}
We similarly translate ${\fontsize{1}{1}\selectfont
\begin{forest}
 for tree={grow=90, l=1mm}
[[][]]
 \path[fill=red]  (!.parent anchor) circle[radius=2pt]
                    (!1.child anchor) circle[radius=2pt]
                    (!2.child anchor) circle[radius=2pt];
\end{forest}
}\Rightarrow \Q\L_2$, ${\fontsize{1}{1}\selectfont
\begin{forest}
 for tree={grow=90, l=1mm}
[[][][]]
 \path[fill=red]  (!.parent anchor) circle[radius=2pt]
                    (!1.child anchor) circle[radius=2pt]
                    (!2.child anchor) circle[radius=2pt]
                    (!3.child anchor) circle[radius=2pt];;
\end{forest}
}
\Rightarrow \Q\L_3$, and so on. Correspondingly, the recursive equation \eqref{tree_sol_CMZE_FG} is mapped into \eqref{47} and \eqref{52}, and the first few tree solutions \eqref{tree_Fs} and \eqref{tree_Gt} are merely a translation of the operator series \eqref{first_few_F_s} and \eqref{first_few_G_t}. After solving \eqref{CMZE-DSE_noneq} for $F_k(s)$ and $G_k(t)$ and acknowledging the translation rules, the time-dependent CMZE \eqref{eqn:rec_op_nonc_type21} can be rewritten as the following evolution equation for the exponential generating function $C_{\tB,\blkdbig}^{\ydbig}(t)$:
\begin{align}
    \partial_tC_{\tB,\blkdbig}^{\ydbig}(t)
    =C_{\tB,\blkdbig}^{\ydbig}(t)\lgraf\gdbig(t)
    +\int_0^tC_{\tB,\blkdbig}^{\ydbig}(s)\lgraf F(s,C_{\tB,\blkdbig}^{\ydbig}(t))
    \lgraf G(t,C_{\tB,\blkdbig}^{\ydbig}(s))ds,
\end{align}
where $\gdbig(t)\Rightarrow\P\L(t)$.
\subsection{Feynman diagram versus Tree diagram}
\label{sec:tree_feynman_diag}
The tree representation of the CMZE and the associated functional equations for the CMZE memory kernel actually provide a diagrammatic method to get the evolution equation of the correlation (Green's) function for many-body systems. Hence, it is constructive to compare this new tree diagrammatic method with the well-established Feynman diagrammatic method for many-body systems. As we will see,  Both methods can be categorized as (renormalized) perturbation expansion. The difference is that the Feynman diagram is a bookkeeping device for tracking the series expansion of Dyson's equation with respect to the interaction strength $v$, while the tree diagram is for tracking the series expansion of the MZE with respect to the evolution time $t$.  The following discussion is based on the quantum many-body theory for condensed matters \cite{stefanucci2013nonequilibrium}. Hence, we will use the terminology of condensed matter physics and re-denote $C(t)$ as an interactive Green's function $G(t)$ and call the memory kernel $K(t)$ the self-energy $\hat{\Sigma}(t)$. The reader can also refer to Section \ref{sec:app_hubbard} for a specific example of how to apply the CMZE into quantum many-body systems to get the evolution equation for $G(t)$ and how the MZ memory kernel function $K(t)$ becomes the self-energy $\hat{\Sigma}(t)$ \footnote{We note that the self-energy $\hat{\Sigma}(t)$ in the Mori-Zwanzig framework has a similar function but different from the $\Sigma(t)$ used in the regular many-body perturbation theory \cite{mahan2013many,stefanucci2013nonequilibrium}.}. In this section, all the discussion will be quite formal. Our goal is to show the common feature between the tree diagram and the Feynman diagram, and the similarity between the combinatorial expansion for the CMZE self-energy $\hat{\Sigma}(t)$ and the skeleton expansion for the self-energy $\Sigma(t)$ used in the regular many-body perturbation theory. 

As we mentioned in Section \ref{sec:into_main_result}, the evolution equation for the interactive Green's function $G=G(p,i\omega)$ of a quantum many-body system is given by Dyson's equation:
\begin{equation}\label{Dyson_eqn}
G=G_0+G\Sigma G_0=G_0+G_0\Sigma G_0+G_0\Sigma G_0\Sigma G_0+\cdots
\end{equation}
This series admits a natural diagrammatic representation:
\begin{equation}\label{Greenfdiagram}
\begin{aligned}
\begin{fmffile}{Geenfgraph1}
\parbox{40pt}{
\begin{fmfgraph*}(40,40)
      \fmfleft{v1}
      \fmfright{v2}
      \fmf{dbl_plain_arrow}{v2,v1}
\end{fmfgraph*}}=\quad
\parbox{40pt}{
\begin{fmfgraph*}(40,40)
      \fmfleft{v1}
      \fmfright{v2}
      \fmf{fermion}{v2,v1}
\end{fmfgraph*}}
\quad +\quad\parbox{50pt}{
    \begin{fmfgraph*}(60,40)
      \fmfleft{v1}
      \fmfright{v2}
      \fmfblob{.32w}{g1}
      \fmf{fermion}{v2,g1}
      \fmf{fermion}{g1,v1}
    \end{fmfgraph*}}
\quad+\quad\parbox{50pt}{
    \begin{fmfgraph*}(100,40)
       \fmfleft{v1}
       \fmfright{v3}
       \fmfblob{.2w}{g1,g2}
       \fmf{fermion}{v3,g2}
       \fmf{fermion}{g2,v2}
       \fmf{fermion}{v2,g1}
       \fmf{fermion}{g1,v1}
    \end{fmfgraph*}}
\end{fmffile}
\qquad\qquad\qquad
+
\cdots
\end{aligned}
\end{equation}
where the first, double-line-arrow diagram represents $G$, single-line-arrow diagram represents $G_0$, and the blob diagram corresponds to the self-energy $\Sigma$. This diagram can translate back to Dyson's equation for Green's function using Feynman rule \cite{stefanucci2013nonequilibrium}. Dyson equation indicates that the interactive Green's function $G$ can be constructed by inserting the self-energy $\Sigma$ into $G_0$ recursively and summing the first-order insertion, the second-order insertion, up to infinity. Hence, to solve for $G$ (at least formally), we only need to know the self-energy $\Sigma$. In many-body perturbation theory, $\Sigma$ can be calculated using another diagrammatic perturbation expansion. For instance, if the quantum many-body system only involves electron interactions, say the Hubbard model, then the diagrammatic perturbation expansion for $\Sigma$ can be written as \cite{stefanucci2013nonequilibrium}:   
\begin{equation}\label{Sigma_manybody}
\begin{aligned}
\Sigma&=\Sigma^1[G_0,v]+\Sigma^2[G_0,v]+\cdots\\[10pt]
\vspace{2cm}
&=
\begin{fmffile}{feyngraph1}
\parbox{30pt}{
    \begin{fmfgraph*}(30,30)
      \fmftop{v1}
      \fmfbottom{v3}
      \fmf{plain,left,tension=0.3}{v2,v1}
      \fmf{plain_arrow,left,tension=0.3}{v1,v2}
      \fmf{photon}{v2,v3}
    \end{fmfgraph*}}
    +\quad\parbox{30pt}{
    \begin{fmfgraph*}(30,30)
      \fmfleft{v1}
      \fmfright{v2}
      \fmf{fermion}{v2,v1}
      \fmf{photon,left,tension=0.2}{v1,v2}
    \end{fmfgraph*}}
\end{fmffile}
\quad
\Biggr\}
\Rightarrow O(v)
\\
&
\begin{fmffile}{feyngraph2}
+\quad\parbox{30pt}{
    \begin{fmfgraph*}(40,40)
      \fmfleft{v1}
      \fmfright{v4}
      \fmf{fermion}{v1,v2}
      \fmf{fermion}{v2,v3}
      \fmf{fermion}{v3,v4}
      \fmf{photon,left,tension=0.2}{v1,v3}
      \fmf{photon,right,tension=0.2}{v2,v4}
    \end{fmfgraph*}}
    \quad\quad+\quad\parbox{30pt}{
    \begin{fmfgraph*}(30,25)
      \fmftop{v1,v3}
      \fmfbottom{v2,v4}
      \fmf{photon}{v1,v2}
      \fmf{photon}{v3,v4}
      \fmf{plain_arrow,left=0.5,tension=0.5}{v1,v3,v1}
      \fmf{fermion}{v2,v4}
    \end{fmfgraph*}}
    \quad +\quad\parbox{30pt}{
    \begin{fmfgraph*}(40,50)
      \fmftop{v1}
      \fmfbottom{v5}
      \fmf{plain,left,tension=0.3}{v2,v1}
      \fmf{plain_arrow,left,tension=0.3}{v1,v2}
      \fmf{photon}{v2,v3}
      \fmf{plain_arrow,left,tension=0.3}{v3,v4}
      \fmf{plain_arrow,left,tension=0.3}{v4,v3}
      \fmf{photon}{v4,v5}
    \end{fmfgraph*}}
    \quad +\quad\parbox{30pt}{
    \begin{fmfgraph*}(30,30)
      \fmfleft{v1}
      \fmfright{v4}
      \fmf{fermion}{v1,v2}
      \fmf{fermion}{v2,v3}
      \fmf{fermion}{v3,v4}
      \fmf{photon,left,tension=0.2}{v1,v4}
      \fmf{photon,left,tension=0.05}{v2,v3}
    \end{fmfgraph*}}
    \quad +\quad\parbox{30pt}{
    \begin{fmfgraph*}(30,30)
      \fmftop{v1}
      \fmfleft{v4}
      \fmfright{v5}
      \fmfbottom{v4,v5}
      \fmf{plain,left,tension=0.3}{v2,v1}
      \fmf{plain_arrow,left,tension=0.3}{v1,v2}
      \fmf{photon}{v2,v3}
      \fmf{fermion}{v4,v3}
      \fmf{fermion}{v3,v5}
      \fmf{photon,right=0.5,tension=0.5}{v4,v5}
    \end{fmfgraph*}}
    \quad +\quad\parbox{30pt}{
    \begin{fmfgraph*}(30,50)
      \fmfleft{v1}
      \fmfright{v2}
      \fmfbottom{v4}
      \fmf{photon,left=0.7,tension=0.5}{v1,v2}
      \fmf{fermion}{v1,v2}
      \fmf{fermion}{v1,v3}
      \fmf{fermion}{v3,v2}
      \fmf{photon}{v3,v4}
    \end{fmfgraph*}}
\end{fmffile}
\quad
\Biggr\}\Rightarrow O(v^2)\\
&\cdots
\end{aligned}
\end{equation}
where the arrow lines and wiggly lines have their normal meanings in the Feynman diagram, and we refer to \cite{stefanucci2013nonequilibrium,mahan2013many} for technical details. \eqref{Sigma_manybody} is a perturbation series with respect to the modeling parameter $v$. When $v$ is small, higher-order terms (diagrams) may be truncated and we obtain a good approximation to the self-energy $\Sigma$ and hence the Green's function $G$. In electron systems, $v$ is normally chosen to be the magnitude of the interaction Hamiltonian which indicates the strength of scattering interactions between electrons. As a perturbation series, when $v$ is large, it is expected that \eqref{Sigma_manybody} no longer serves as a good starting point of approximation. To this end, a {\em renormalized perturbation technique}, which is often called as the skeleton expansion \cite{stefanucci2013nonequilibrium}, is often used to reconstruct expansion series based on \eqref{Sigma_manybody}. In essence, the skeleton expansion is just a resummation technique that groups diagrams and insert them back into themselves. For examples, the second diagram in \eqref{Sigma_manybody} can be renormalized as:
\begin{equation}\label{feyngraphresum_ex}
\begin{aligned}
\begin{fmffile}{feyngraphresum}
\parbox{30pt}{
    \begin{fmfgraph*}(30,40)
      \fmfleft{v1}
      \fmfright{v2}
      \fmf{dbl_plain_arrow}{v2,v1}
      \fmf{photon,left,tension=0.1}{v1,v2}
    \end{fmfgraph*}}
\quad =
\quad\parbox{30pt}{
    \begin{fmfgraph*}(30,30)
      \fmfleft{v1}
      \fmfright{v2}
      \fmf{fermion}{v2,v1}
      \fmf{photon,left,tension=0.2}{v1,v2}
    \end{fmfgraph*}}
    \quad+
\quad\parbox{30pt}{
    \begin{fmfgraph*}(30,30)
      \fmftop{v1}
      \fmfleft{v4}
      \fmfright{v5}
      \fmfbottom{v4,v5}
      \fmf{plain,left,tension=0.3}{v2,v1}
      \fmf{plain_arrow,left,tension=0.3}{v1,v2}
      \fmf{photon}{v2,v3}
      \fmf{fermion}{v4,v3}
      \fmf{fermion}{v3,v5}
      \fmf{photon,right=0.5,tension=0.5}{v4,v5}
    \end{fmfgraph*}}
\quad +
\quad\parbox{30pt}{
    \begin{fmfgraph*}(30,30)
      \fmfleft{v1}
      \fmfright{v4}
      \fmf{fermion}{v1,v2}
      \fmf{fermion}{v2,v3}
      \fmf{fermion}{v3,v4}
      \fmf{photon,left,tension=0.2}{v1,v4}
      \fmf{photon,left,tension=0.05}{v2,v3}
\end{fmfgraph*}}
\quad +\quad\parbox{30pt}{
    \begin{fmfgraph*}(40,40)
      \fmfleft{p1}
      \fmfright{p2}
      \fmf{fermion}{p1,v1}
      \fmf{fermion}{v1,v2}
      \fmf{fermion}{v2,v3}
      \fmf{fermion}{v3,v4}
      \fmf{fermion}{v4,p2}
      \fmf{photon,left,tension=0.2}{v1,v3}
      \fmf{photon,right,tension=0.2}{v2,v4}
      \fmf{photon,left,tension=0.1}{p1,p2}
    \end{fmfgraph*}}
\end{fmffile}
\quad
+\quad \cdots,
\end{aligned}
\end{equation}
where on the RHS of \eqref{feyngraphresum_ex}, we only explicitly show the first three terms of the first-order self-energy insertion. The summation actually goes over all possible self-energy insertions, thereby yielding the interactive Green's function $G$ on the LHS of \eqref{feyngraphresum_ex}. After the renormalization, only the {\em skeleton} diagrams which cannot be further decomposed into the combination of bare self-energy diagrams retain. Eventually, we get the skeleton expansion of the self-energy:
\begin{equation}\label{renormalized_sigma}
\begin{aligned}
\Sigma&=\Sigma^1_s[G,v]+\Sigma^2_s[G,v]+\cdots\\[10pt]
&=
\underbrace{
\begin{fmffile}{feyngraph3}
\quad\parbox{30pt}{
    \begin{fmfgraph*}(30,40)
      \fmftop{v1}
      \fmfbottom{v3}
      \fmf{dbl_plain,left,tension=0.3}{v2,v1}
      \fmf{dbl_plain_arrow,left,tension=0.3}{v1,v2}
      \fmf{photon}{v2,v3}
    \end{fmfgraph*}}
    \quad +\quad\parbox{30pt}{
    \begin{fmfgraph*}(30,40)
      \fmfleft{v1}
      \fmfright{v2}
      \fmf{dbl_plain_arrow}{v2,v1}
      \fmf{photon,left,tension=0.1}{v1,v2}
    \end{fmfgraph*}}
\end{fmffile}
}_{O(v)}
\underbrace{
\begin{fmffile}{feyngraph4}
    \quad+\quad\parbox{30pt}{
    \begin{fmfgraph*}(40,50)
      \fmfleft{v1}
      \fmfright{v4}
      \fmf{dbl_plain_arrow}{v1,v2}
      \fmf{dbl_plain_arrow}{v2,v3}
      \fmf{dbl_plain_arrow}{v3,v4}
      \fmf{photon,left,tension=0.1}{v1,v3}
      \fmf{photon,right,tension=0.1}{v2,v4}
    \end{fmfgraph*}}
    \qquad +\quad\parbox{30pt}{
    \begin{fmfgraph*}(30,25)
      \fmftop{v1,v3}
      \fmfbottom{v2,v4}
      \fmf{photon}{v1,v2}
      \fmf{photon}{v3,v4}
      \fmf{dbl_plain_arrow,left=0.5,tension=0.3}{v1,v3,v1}
      \fmf{dbl_plain_arrow}{v2,v4}
    \end{fmfgraph*}}
    \end{fmffile}
\quad +\quad \cdots
}_{O(v^2)}.
\end{aligned}
\end{equation}
Renormalized self-energy expansion \eqref{renormalized_sigma} now becomes self-consistent since it contains diagrams that only involve the interactive Green's function $G$. Formally, it still is a series expansion with respect to $v$, while from its construction, we know that low-order terms after the dressing replacement $G_0\rightarrow G$ actually contain infinite diagrams. This is the reason why even the low-order truncation of this series often leads to meaningful physical predictions for strongly interactive systems with large $v$. In comparison, we recall what we have done in the previous section for the tree diagrams and one can easily see the similarities. First, we note that for time-independent equilibrium systems, the tree representation of the self-energy $\hat\Sigma$ in the Mori-Zwanzig framework is just the exponential generating function $C_{\tP}(t)$ \eqref{CP(t)}, Writing more terms explicitly in the expansion series \eqref{CP(t)}, we have the tree diagram representation:
\begin{equation}\label{bare_expansion_CMZE_sigma}
\begin{aligned}
\hat\Sigma
&=\hat\Sigma^0[G(0),t]+\hat\Sigma^1[G(0),t]+\hat\Sigma^2[G(0),t]+\cdots \\
&=
-{\fontsize{1}{1}\selectfont
\begin{forest}
 for tree={grow=90, l=1mm}
[[[]]]
 \path[fill=black]  (!.parent anchor) circle[radius=2pt];
 \path[fill=black] (!1.child anchor) circle[radius=2pt]
                  (!11.child anchor) circle[radius=2pt];
\end{forest}
}
+
{\fontsize{1}{1}\selectfont
\begin{forest}
for tree={grow=90, l=1mm}
[[][]]
\path[fill=black]  (!.parent anchor) circle[radius=2pt];
\path[fill=black] (!1.child anchor) circle[radius=2pt]
                 (!2.child anchor) circle[radius=2pt];
\end{forest}
}
\quad\bigr\}\Rightarrow O(1)\\
&\ \
+t\biggl[
{\fontsize{1}{1}\selectfont
\begin{forest}
 for tree={grow=90, l=1mm}
[[[[]]]]
 \path[fill=black]  (!.parent anchor) circle[radius=2pt];
 \path[fill=black] (!1.child anchor) circle[radius=2pt]
                    (!11.child anchor) circle[radius=2pt]
                  (!111.child anchor) circle[radius=2pt];
\end{forest}
}
-
{\fontsize{1}{1}\selectfont
\begin{forest}
 for tree={grow=90, l=1mm}
[[[]][]]
 \path[fill=black]  (!.parent anchor) circle[radius=2pt];
 \path[fill=black] (!1.child anchor) circle[radius=2pt]
                  (!11.child anchor) circle[radius=2pt]
                  (!2.child anchor) circle[radius=2pt];
\end{forest}
}
-
{\fontsize{1}{1}\selectfont
\begin{forest}
 for tree={parent anchor=east, child anchor=east, grow=90, l=1mm}
[[[][]]]
 \path[fill=black]  (!.parent anchor) circle[radius=2pt];
 \path[fill=black] (!1.child anchor) circle[radius=2pt]
                  (!11.child anchor) circle[radius=2pt]
                  (!12.child anchor) circle[radius=2pt];
\end{forest}
}
+
{\fontsize{1}{1}\selectfont
\begin{forest}
 for tree={grow=90, l=1mm}
[[][][]]
 \path[fill=black]  (!.parent anchor) circle[radius=2pt];
 \path[fill=black] (!1.child anchor) circle[radius=2pt]
                  (!2.child anchor) circle[radius=2pt]
                  (!3.child anchor) circle[radius=2pt];
\end{forest}
}
\biggr]
\quad\Biggr\}\Rightarrow O(t)
\\
&\ \ \ \ \ \ \ \ \ 
+\frac{t^2}{2}
\Biggl[
-
{\fontsize{1}{1}\selectfont
\begin{forest}
 for tree={grow=90, l=1mm}
 [[[[[]]]]]
 \path[fill=black]  (!.parent anchor) circle[radius=2pt];
 \path[fill=black] (!1.child anchor) circle[radius=2pt]
                  (!11.child anchor) circle[radius=2pt]
                  (!111.child anchor) circle[radius=2pt]
                  (!1111.child anchor)  circle[radius=2pt];
\end{forest}
}
+
{\fontsize{1}{1}\selectfont
\begin{forest}
for tree={grow=90, l=1mm}
 [[[[][]]]]
 \path[fill=black]  (!.parent anchor) circle[radius=2pt];
 \path[fill=black] (!1.child anchor) circle[radius=2pt]
                  (!11.child anchor) circle[radius=2pt]
                  (!111.child anchor) circle[radius=2pt]
                  (!112.child anchor)  circle[radius=2pt];
\end{forest}
}
+
{\fontsize{1}{1}\selectfont
\begin{forest}
 for tree={grow=90, l=1mm}
[[[[]][]]]
 \path[fill=black]  (!.parent anchor) circle[radius=2pt];
 \path[fill=black] (!1.child anchor) circle[radius=2pt]
                  (!11.child anchor) circle[radius=2pt]
                  (!111.child anchor) circle[radius=2pt]
                  (!12.child anchor) circle[radius=2pt];
\end{forest}
}
+
{\fontsize{1}{1}\selectfont
\begin{forest}
 for tree={grow=90, l=1mm}
[[[[]]][]]
 \path[fill=black]  (!.parent anchor) circle[radius=2pt];
 \path[fill=black] (!1.child anchor) circle[radius=2pt]
                  (!11.child anchor) circle[radius=2pt]
                  (!111.child anchor) circle[radius=2pt]
                  (!2.child anchor) circle[radius=2pt];
\end{forest}
}
+
{\fontsize{1}{1}\selectfont
\begin{forest}
 for tree={grow=90, l=1mm}
 [ [[][]] []]
 \path[fill=black]  (!.parent anchor) circle[radius=2pt];
 \path[fill=black] (!1.child anchor) circle[radius=2pt]
                  (!11.child anchor) circle[radius=2pt]
                  (!12.child anchor) circle[radius=2pt]
                  (!2.child anchor)  circle[radius=2pt];
\end{forest}
}
-
{\fontsize{1}{1}\selectfont
\begin{forest}
 for tree={grow=90, l=1mm}
 [[[]] [] [] ]
 \path[fill=black]  (!.parent anchor) circle[radius=2pt];
 \path[fill=black] (!1.child anchor) circle[radius=2pt]
                  (!11.child anchor) circle[radius=2pt]
                  (!2.child anchor) circle[radius=2pt]
                  (!3.child anchor)  circle[radius=2pt];
\end{forest}
}
-
{\fontsize{1}{1}\selectfont
\begin{forest}
 for tree={grow=90, l=1mm}
 [ [[][][]] ]
 \path[fill=black]  (!.parent anchor) circle[radius=2pt];
 \path[fill=black] (!1.child anchor) circle[radius=2pt]
                  (!11.child anchor) circle[radius=2pt]
                  (!12.child anchor) circle[radius=2pt]
                  (!13.child anchor)  circle[radius=2pt];
\end{forest}
}
+
{\fontsize{1}{1}\selectfont
\begin{forest}
 for tree={grow=90, l=1mm}
 [ [][][] []]
 \path[fill=black]  (!.parent anchor) circle[radius=2pt];
 \path[fill=black] (!1.child anchor) circle[radius=2pt]
                  (!2.child anchor) circle[radius=2pt]
                  (!3.child anchor) circle[radius=2pt]
                  (!4.child anchor)  circle[radius=2pt];
\end{forest}
}
\Biggr]
\quad\Biggr\}\Rightarrow O(t^2)
\\
&\quad \cdots.
\end{aligned}
\end{equation}
This is a bare expansion series with respect to time $t$, and each diagram is a function of the initial condition $G(0)$ \footnote{In fact, $\hat{\Sigma}^i[G(0),v]$ depends on the initial condition of {\em many-particle} Green's function even when the $G(t)$ under the investigation is only a one-particle Green's function, we will explain this in details in Section \ref{sec:app_hubbard}. } By introducing the combinatorial expansion for the self-energy and solving the functional equation \eqref{CMZE-DSE}, we find that the above diagrams can be decomposed and re-summarized as:
\begin{equation}\label{Sigma_tree_CMZE}
\begin{aligned}
\hat\Sigma
&=\hat\Sigma^0[G(0),t]+\hat\Sigma^1[G(0),t]+\hat\Sigma^2[G(0),t]+\cdots\\
&=
-{\fontsize{1}{1}\selectfont
\begin{forest}
 for tree={grow=90, l=1mm}
[[[]]]
 \path[fill=black]  (!.parent anchor) circle[radius=2pt];
 \path[fill=black] (!1.child anchor) circle[radius=2pt]
                  (!11.child anchor) circle[radius=2pt];
\end{forest}
}
+
{\fontsize{1}{1}\selectfont
\begin{forest}
for tree={grow=90, l=1mm}
[[][]]
\path[fill=black]  (!.parent anchor) circle[radius=2pt];
\path[fill=black] (!1.child anchor) circle[radius=2pt]
                 (!2.child anchor) circle[radius=2pt];
\end{forest}
}
\quad\bigr\}\Rightarrow O(1)\\
&\ 
+t\biggl[
{\fontsize{1}{1}\selectfont
\begin{forest}
 for tree={grow=90, l=1mm}
[[[]]]
 \path[fill=black]  (!.parent anchor) circle[radius=2pt]
                    (!1.child anchor) circle[radius=2pt]
                    (!11.child anchor) circle[radius=2pt];
\end{forest}
}
-
{\fontsize{1}{1}\selectfont
\begin{forest}
 for tree={grow=90, l=1mm}
[[[]][]]
 \path[fill=black]  (!.parent anchor) circle[radius=2pt]
                    (!1.child anchor) circle[radius=2pt]
                    (!11.child anchor) circle[radius=2pt]
                    (!2.child anchor) circle[radius=2pt];
\end{forest}
}
\lgraf
{\fontsize{1}{1}\selectfont
\begin{forest}
 for tree={grow=90, l=1mm}
[[]]
 \path[draw]  (!.parent anchor) circle[radius=2pt]
              (!1.child anchor) circle[radius=2pt];
\end{forest}
}
-{\fontsize{1}{1}\selectfont
\begin{forest}
 for tree={grow=90, l=1mm}
[[][]]
 \path[fill=black]  (!.parent anchor) circle[radius=2pt]
                    (!1.child anchor) circle[radius=2pt]
                    (!2.child anchor) circle[radius=2pt];
\end{forest}
}
+
{\fontsize{1}{1}\selectfont
\begin{forest}
 for tree={grow=90, l=1mm}
[[][][]]
 \path[fill=black]  (!.parent anchor) circle[radius=2pt]
                    (!1.child anchor) circle[radius=2pt]
                    (!2.child anchor) circle[radius=2pt]
                    (!3.child anchor) circle[radius=2pt];
\end{forest}
}
\lgraf
{\fontsize{1}{1}\selectfont
\begin{forest}
 for tree={grow=90, l=1mm}
[[]]
 \path[draw]  (!.parent anchor) circle[radius=2pt]
              (!1.child anchor) circle[radius=2pt];
\end{forest}
}
\biggr]\lgraf
{\fontsize{1}{1}\selectfont
\begin{forest}
 for tree={grow=90, l=1mm}
[[]]
 \path[fill=black]  (!.parent anchor) circle[radius=2pt]
              (!1.child anchor) circle[radius=2pt];
\end{forest}
}
\quad\Biggr\}\Rightarrow O(t)
\\
&\ \ \
\begin{rcases}
&\ \ \
+\frac{t^2}{2}\biggl[
{\fontsize{1}{1}\selectfont
\begin{forest}
 for tree={grow=90, l=1mm}
[[[]]]
 \path[fill=black]  (!.parent anchor) circle[radius=2pt]
                    (!1.child anchor) circle[radius=2pt]
                    (!11.child anchor) circle[radius=2pt];
\end{forest}
}
-
{\fontsize{1}{1}\selectfont
\begin{forest}
 for tree={grow=90, l=1mm}
[[[]][]]
 \path[fill=black]  (!.parent anchor) circle[radius=2pt]
                    (!1.child anchor) circle[radius=2pt]
                    (!11.child anchor) circle[radius=2pt]
                    (!2.child anchor) circle[radius=2pt];
\end{forest}
}
\lgraf
{\fontsize{1}{1}\selectfont
\begin{forest}
 for tree={grow=90, l=1mm}
[[]]
 \path[draw]  (!.parent anchor) circle[radius=2pt]
              (!1.child anchor) circle[radius=2pt];
\end{forest}
}
-{\fontsize{1}{1}\selectfont
\begin{forest}
 for tree={grow=90, l=1mm}
[[][]]
 \path[fill=black]  (!.parent anchor) circle[radius=2pt]
                    (!1.child anchor) circle[radius=2pt]
                    (!2.child anchor) circle[radius=2pt];
\end{forest}
}
+
{\fontsize{1}{1}\selectfont
\begin{forest}
 for tree={grow=90, l=1mm}
[[][][]]
 \path[fill=black]  (!.parent anchor) circle[radius=2pt]
                    (!1.child anchor) circle[radius=2pt]
                    (!2.child anchor) circle[radius=2pt]
                    (!3.child anchor) circle[radius=2pt];
\end{forest}
}
\lgraf
{\fontsize{1}{1}\selectfont
\begin{forest}
 for tree={grow=90, l=1mm}
[[]]
 \path[draw]  (!.parent anchor) circle[radius=2pt]
              (!1.child anchor) circle[radius=2pt];
\end{forest}
}
\biggr]
\lgraf
{\fontsize{1}{1}\selectfont
\begin{forest}
 for tree={grow=90, l=1mm}
[[][]]
 \path[fill=black]  (!.parent anchor) circle[radius=2pt]
                    (!1.child anchor) circle[radius=2pt]
                    (!2.child anchor) circle[radius=2pt];
\end{forest}
}
\\
&\ \ \ 
+\frac{t^2}{2}\biggl[{\fontsize{1}{1}\selectfont
\begin{forest}
 for tree={grow=90, l=1mm}
[[[]][]]
 \path[fill=black]  (!.parent anchor) circle[radius=2pt]
                    (!1.child anchor) circle[radius=2pt]
                    (!11.child anchor) circle[radius=2pt]
                    (!2.child anchor) circle[radius=2pt];
\end{forest}
}
\lgraf
{\fontsize{1}{1}\selectfont
\begin{forest}
 for tree={grow=90, l=1mm}
[[]]
 \path[draw]  (!.parent anchor) circle[radius=2pt]
              (!1.child anchor) circle[radius=2pt];
\end{forest}
}
\lgraf
{\fontsize{1}{1}\selectfont
\begin{forest}
 for tree={grow=90, l=1mm}
[[][]]
 \path[fill=black]  (!.parent anchor) circle[radius=2pt]
                    (!1.child anchor) circle[radius=2pt]
                    (!2.child anchor) circle[radius=2pt];
\end{forest}
}
\lgraf
{\fontsize{1}{1}\selectfont
\begin{forest}
 for tree={grow=90, l=1mm}
[[[]]]
 \path[draw]  (!.parent anchor) circle[radius=2pt]
              (!1.child anchor) circle[radius=2pt]
              (!11.child anchor) circle[radius=2pt];
\end{forest}
}
+
{\fontsize{1}{1}\selectfont
\begin{forest}
 for tree={grow=90, l=1mm}
[[][]]
 \path[fill=black]  (!.parent anchor) circle[radius=2pt]
                    (!1.child anchor) circle[radius=2pt]
                    (!2.child anchor) circle[radius=2pt];
\end{forest}
}
-
{\fontsize{1}{1}\selectfont
\begin{forest}
 for tree={grow=90, l=1mm}
[[][][]]
 \path[fill=black]  (!.parent anchor) circle[radius=2pt]
                    (!1.child anchor) circle[radius=2pt]
                    (!2.child anchor) circle[radius=2pt]
                    (!3.child anchor) circle[radius=2pt];
\end{forest}
}
\lgraf
{\fontsize{1}{1}\selectfont
\begin{forest}
 for tree={grow=90, l=1mm}
[[]]
 \path[draw]  (!.parent anchor) circle[radius=2pt]
              (!1.child anchor) circle[radius=2pt];
\end{forest}
}
\lgraf
{\fontsize{1}{1}\selectfont
\begin{forest}
 for tree={grow=90, l=1mm}
[[][]]
 \path[fill=black]  (!.parent anchor) circle[radius=2pt]
                    (!1.child anchor) circle[radius=2pt]
                    (!2.child anchor) circle[radius=2pt];
\end{forest}
}
\lgraf
{\fontsize{1}{1}\selectfont
\begin{forest}
 for tree={grow=90, l=1mm}
[[[]]]
 \path[draw]  (!.parent anchor) circle[radius=2pt]
              (!1.child anchor) circle[radius=2pt]
              (!11.child anchor) circle[radius=2pt];
\end{forest}
}\\
&\ \ \ \ \ \ \ \ \ \ \ \ \ \
-{\fontsize{1}{1}\selectfont
\begin{forest}
 for tree={grow=90, l=1mm}
[[[]]]
 \path[fill=black]  (!.parent anchor) circle[radius=2pt]
                    (!1.child anchor) circle[radius=2pt]
                    (!11.child anchor) circle[radius=2pt];
\end{forest}
}
+
{\fontsize{1}{1}\selectfont
\begin{forest}
 for tree={grow=90, l=1mm}
[[[]][]]
 \path[fill=black]  (!.parent anchor) circle[radius=2pt]
                    (!1.child anchor) circle[radius=2pt]
                    (!11.child anchor) circle[radius=2pt]
                    (!2.child anchor) circle[radius=2pt];
\end{forest}
}
\lgraf
{\fontsize{1}{1}\selectfont
\begin{forest}
 for tree={grow=90, l=1mm}
[[]]
 \path[draw]  (!.parent anchor) circle[radius=2pt]
              (!1.child anchor) circle[radius=2pt];
\end{forest}
}
-
{\fontsize{1}{1}\selectfont
\begin{forest}
 for tree={grow=90, l=1mm}
[[[]][][]]
 \path[fill=black]  (!.parent anchor) circle[radius=2pt]
                    (!1.child anchor) circle[radius=2pt]
                    (!11.child anchor) circle[radius=2pt]
                    (!2.child anchor) circle[radius=2pt]
                    (!3.child anchor) circle[radius=2pt];
\end{forest}
}
\lgraf
{\fontsize{1}{1}\selectfont
\begin{forest}
 for tree={grow=90, l=1mm}
[[[]]]
 \path[draw]  (!.parent anchor) circle[radius=2pt]
              (!1.child anchor) circle[radius=2pt]
              (!11.child anchor) circle[radius=2pt];
\end{forest}
}
-
{\fontsize{1}{1}\selectfont
\begin{forest}
 for tree={grow=90, l=1mm}
[[][][]]
 \path[fill=black]  (!.parent anchor) circle[radius=2pt]
                    (!1.child anchor) circle[radius=2pt]
                    (!2.child anchor) circle[radius=2pt]
                    (!3.child anchor) circle[radius=2pt];
\end{forest}
}
\lgraf
{\fontsize{1}{1}\selectfont
\begin{forest}
 for tree={grow=90, l=1mm}
[[]]
 \path[draw]  (!.parent anchor) circle[radius=2pt]
              (!1.child anchor) circle[radius=2pt];
\end{forest}
}
+
{\fontsize{1}{1}\selectfont
\begin{forest}
 for tree={grow=90, l=1mm}
[[][][][]]
 \path[fill=black]  (!.parent anchor) circle[radius=2pt]
                    (!1.child anchor) circle[radius=2pt]
                    (!2.child anchor) circle[radius=2pt]
                    (!3.child anchor) circle[radius=2pt]
                    (!4.child anchor) circle[radius=2pt];
\end{forest}
}
\lgraf
{\fontsize{1}{1}\selectfont
\begin{forest}
 for tree={grow=90, l=1mm}
[[[]]]
 \path[draw]  (!.parent anchor) circle[radius=2pt]
              (!1.child anchor) circle[radius=2pt]
              (!11.child anchor) circle[radius=2pt];
\end{forest}
}
\biggr]
\lgraf
{\fontsize{1}{1}\selectfont
\begin{forest}
 for tree={grow=90, l=1mm}
[[[]]]
 \path[fill=black]  (!.parent anchor) circle[radius=2pt]
                    (!1.child anchor) circle[radius=2pt]
                    (!11.child anchor) circle[radius=2pt];
\end{forest}
}\quad
\end{rcases}
\quad\Rightarrow O(t^2)\\
&\quad \cdots
\end{aligned}
\end{equation}
Grouping terms on the RHS of \eqref{Sigma_tree_CMZE}, eventually we obtain the following self-consistent expansion for the Mori-Zwanzig self-energy $\hat{\Sigma}$:
\begin{equation}\label{Sigma_tree_renormalized}
\begin{aligned}
\hat\Sigma
&=\hat\Sigma_s^0[G(t),G(0)]+\hat\Sigma_s^1[G(t),G(0)]+\hat\Sigma_s^2[G(t),G(0)]+\cdots\\
&=
-{\fontsize{1}{1}\selectfont
\begin{forest}
 for tree={grow=90, l=1mm}
[[[]]]
 \path[fill=black]  (!.parent anchor) circle[radius=2pt];
 \path[fill=black] (!1.child anchor) circle[radius=2pt]
                  (!11.child anchor) circle[radius=2pt];
\end{forest}
}
+
{\fontsize{1}{1}\selectfont
\begin{forest}
for tree={grow=90, l=1mm}
[[][]]
\path[fill=black]  (!.parent anchor) circle[radius=2pt];
\path[fill=black] (!1.child anchor) circle[radius=2pt]
                 (!2.child anchor) circle[radius=2pt];
\end{forest}
}\\
&+
\biggl[
{\fontsize{1}{1}\selectfont
\begin{forest}
 for tree={grow=90, l=1mm}
[[[]]]
 \path[fill=black]  (!.parent anchor) circle[radius=2pt]
                    (!1.child anchor) circle[radius=2pt]
                    (!11.child anchor) circle[radius=2pt];
\end{forest}
}
-
{\fontsize{1}{1}\selectfont
\begin{forest}
 for tree={grow=90, l=1mm}
[[[]][]]
 \path[fill=black]  (!.parent anchor) circle[radius=2pt]
                    (!1.child anchor) circle[radius=2pt]
                    (!11.child anchor) circle[radius=2pt]
                    (!2.child anchor) circle[radius=2pt];
\end{forest}
}
\lgraf
{\fontsize{1}{1}\selectfont
\begin{forest}
 for tree={grow=90, l=1mm}
[[]]
 \path[draw]  (!.parent anchor) circle[radius=2pt]
              (!1.child anchor) circle[radius=2pt];
\end{forest}
}
-{\fontsize{1}{1}\selectfont
\begin{forest}
 for tree={grow=90, l=1mm}
[[][]]
 \path[fill=black]  (!.parent anchor) circle[radius=2pt]
                    (!1.child anchor) circle[radius=2pt]
                    (!2.child anchor) circle[radius=2pt];
\end{forest}
}
+
{\fontsize{1}{1}\selectfont
\begin{forest}
 for tree={grow=90, l=1mm}
[[][][]]
 \path[fill=black]  (!.parent anchor) circle[radius=2pt]
                    (!1.child anchor) circle[radius=2pt]
                    (!2.child anchor) circle[radius=2pt]
                    (!3.child anchor) circle[radius=2pt];
\end{forest}
}
\lgraf
{\fontsize{1}{1}\selectfont
\begin{forest}
 for tree={grow=90, l=1mm}
[[]]
 \path[draw]  (!.parent anchor) circle[radius=2pt]
              (!1.child anchor) circle[radius=2pt];
\end{forest}
}
\biggr]\lgraf
\biggl[t
{\fontsize{1}{1}\selectfont
\begin{forest}
 for tree={grow=90, l=1mm}
[[]]
 \path[fill=black]  (!.parent anchor) circle[radius=2pt]
              (!1.child anchor) circle[radius=2pt];
\end{forest}
}
+\frac{t^2}{2}
{\fontsize{1}{1}\selectfont
\begin{forest}
 for tree={grow=90, l=1mm}
[[][]]
 \path[fill=black]  (!.parent anchor) circle[radius=2pt]
              (!1.child anchor) circle[radius=2pt]
              (!2.child anchor) circle[radius=2pt];
\end{forest}
}
+\cdots
\biggr]
\\
&\ \ \
+\frac{1}{2}\biggl[
{\fontsize{1}{1}\selectfont
\begin{forest}
 for tree={grow=90, l=1mm}
[[[]][]]
 \path[fill=black]  (!.parent anchor) circle[radius=2pt]
                    (!1.child anchor) circle[radius=2pt]
                    (!11.child anchor) circle[radius=2pt]
                    (!2.child anchor) circle[radius=2pt];
\end{forest}
}
\lgraf
{\fontsize{1}{1}\selectfont
\begin{forest}
 for tree={grow=90, l=1mm}
[[]]
 \path[draw]  (!.parent anchor) circle[radius=2pt]
              (!1.child anchor) circle[radius=2pt];
\end{forest}
}
\lgraf
{\fontsize{1}{1}\selectfont
\begin{forest}
 for tree={grow=90, l=1mm}
[[][]]
 \path[fill=black]  (!.parent anchor) circle[radius=2pt]
                    (!1.child anchor) circle[radius=2pt]
                    (!2.child anchor) circle[radius=2pt];
\end{forest}
}
\lgraf
{\fontsize{1}{1}\selectfont
\begin{forest}
 for tree={grow=90, l=1mm}
[[[]]]
 \path[draw]  (!.parent anchor) circle[radius=2pt]
              (!1.child anchor) circle[radius=2pt]
              (!11.child anchor) circle[radius=2pt];
\end{forest}
}
+
{\fontsize{1}{1}\selectfont
\begin{forest}
 for tree={grow=90, l=1mm}
[[][]]
 \path[fill=black]  (!.parent anchor) circle[radius=2pt]
                    (!1.child anchor) circle[radius=2pt]
                    (!2.child anchor) circle[radius=2pt];
\end{forest}
}
-
{\fontsize{1}{1}\selectfont
\begin{forest}
 for tree={grow=90, l=1mm}
[[][][]]
 \path[fill=black]  (!.parent anchor) circle[radius=2pt]
                    (!1.child anchor) circle[radius=2pt]
                    (!2.child anchor) circle[radius=2pt]
                    (!3.child anchor) circle[radius=2pt];
\end{forest}
}
\lgraf
{\fontsize{1}{1}\selectfont
\begin{forest}
 for tree={grow=90, l=1mm}
[[]]
 \path[draw]  (!.parent anchor) circle[radius=2pt]
              (!1.child anchor) circle[radius=2pt];
\end{forest}
}
\lgraf
{\fontsize{1}{1}\selectfont
\begin{forest}
 for tree={grow=90, l=1mm}
[[][]]
 \path[fill=black]  (!.parent anchor) circle[radius=2pt]
                    (!1.child anchor) circle[radius=2pt]
                    (!2.child anchor) circle[radius=2pt];
\end{forest}
}
\lgraf
{\fontsize{1}{1}\selectfont
\begin{forest}
 for tree={grow=90, l=1mm}
[[[]]]
 \path[draw]  (!.parent anchor) circle[radius=2pt]
              (!1.child anchor) circle[radius=2pt]
              (!11.child anchor) circle[radius=2pt];
\end{forest}
}
\\
&\ \ \ \ \ \ \ \ \ \ \ \ \ \
-
{\fontsize{1}{1}\selectfont
\begin{forest}
 for tree={grow=90, l=1mm}
[[[]]]
 \path[fill=black]  (!.parent anchor) circle[radius=2pt]
                    (!1.child anchor) circle[radius=2pt]
                    (!11.child anchor) circle[radius=2pt];
\end{forest}
}
+
{\fontsize{1}{1}\selectfont
\begin{forest}
 for tree={grow=90, l=1mm}
[[[]][]]
 \path[fill=black]  (!.parent anchor) circle[radius=2pt]
                    (!1.child anchor) circle[radius=2pt]
                    (!11.child anchor) circle[radius=2pt]
                    (!2.child anchor) circle[radius=2pt];
\end{forest}
}
\lgraf
{\fontsize{1}{1}\selectfont
\begin{forest}
 for tree={grow=90, l=1mm}
[[]]
 \path[draw]  (!.parent anchor) circle[radius=2pt]
              (!1.child anchor) circle[radius=2pt];
\end{forest}
}
-
{\fontsize{1}{1}\selectfont
\begin{forest}
 for tree={grow=90, l=1mm}
[[[]][][]]
 \path[fill=black]  (!.parent anchor) circle[radius=2pt]
                    (!1.child anchor) circle[radius=2pt]
                    (!11.child anchor) circle[radius=2pt]
                    (!2.child anchor) circle[radius=2pt]
                    (!3.child anchor) circle[radius=2pt];
\end{forest}
}
\lgraf
{\fontsize{1}{1}\selectfont
\begin{forest}
 for tree={grow=90, l=1mm}
[[[]]]
 \path[draw]  (!.parent anchor) circle[radius=2pt]
              (!1.child anchor) circle[radius=2pt]
              (!11.child anchor) circle[radius=2pt];
\end{forest}
}
-
{\fontsize{1}{1}\selectfont
\begin{forest}
 for tree={grow=90, l=1mm}
[[][][]]
 \path[fill=black]  (!.parent anchor) circle[radius=2pt]
                    (!1.child anchor) circle[radius=2pt]
                    (!2.child anchor) circle[radius=2pt]
                    (!3.child anchor) circle[radius=2pt];
\end{forest}
}
\lgraf
{\fontsize{1}{1}\selectfont
\begin{forest}
 for tree={grow=90, l=1mm}
[[]]
 \path[draw]  (!.parent anchor) circle[radius=2pt]
              (!1.child anchor) circle[radius=2pt];
\end{forest}
}
+
{\fontsize{1}{1}\selectfont
\begin{forest}
 for tree={grow=90, l=1mm}
[[][][][]]
 \path[fill=black]  (!.parent anchor) circle[radius=2pt]
                    (!1.child anchor) circle[radius=2pt]
                    (!2.child anchor) circle[radius=2pt]
                    (!3.child anchor) circle[radius=2pt]
                    (!4.child anchor) circle[radius=2pt];
\end{forest}
}
\lgraf
{\fontsize{1}{1}\selectfont
\begin{forest}
 for tree={grow=90, l=1mm}
[[[]]]
 \path[draw]  (!.parent anchor) circle[radius=2pt]
              (!1.child anchor) circle[radius=2pt]
              (!11.child anchor) circle[radius=2pt];
\end{forest}
}
\biggr]
\lgraf
\biggl[t
{\fontsize{1}{1}\selectfont
\begin{forest}
 for tree={grow=90, l=1mm}
[[]]
 \path[fill=black]  (!.parent anchor) circle[radius=2pt]
              (!1.child anchor) circle[radius=2pt];
\end{forest}
}
+\frac{t^2}{2}
{\fontsize{1}{1}\selectfont
\begin{forest}
 for tree={grow=90, l=1mm}
[[][]]
 \path[fill=black]  (!.parent anchor) circle[radius=2pt]
              (!1.child anchor) circle[radius=2pt]
              (!2.child anchor) circle[radius=2pt];
\end{forest}
}
+\cdots
\biggr]^2
\\
&\quad \cdots
\end{aligned}
\end{equation}
To be noticed that the above series expansion is just $F(C(t))=\sum_{k=0}^{\infty}F_k\lgraf \frac{C^k(t)}{k!}$ which equals to $C_{\tP}(t)$ according to the functional equation \eqref{CMZE-DSE}. In conclusion, the bare expansion \eqref{Sigma_manybody} of the many-body self-energy $\Sigma$ is similar to the bare expansion \eqref{bare_expansion_CMZE_sigma} of the Mori-Zwanzig self-energy $\hat{\Sigma}$. The skeleton expansion is a resummation of the bare expansion \eqref{Sigma_manybody} which leads to a self-consistent expansion \eqref{renormalized_sigma} involving only the interactive Green's function $G(t)$. In comparison, the combinatorial expansion is a resummation of the bare expansion \eqref{bare_expansion_CMZE_sigma} which leads to a self-consistent expansion \eqref{Sigma_tree_renormalized} involving only $G(t)$. For time-dependent nonequilibrium systems, similar connections can be established. However, this part will not be detailed in this paper. 
\paragraph{Remark} As mentioned in the introduction, the skeleton diagrammatic expansion was widely used in the physics community to construct efficient numerical approximation schemes to Dyson's equation for strongly interactive many-body systems. However, as recently pointed out by Lin and Lindsey \cite{lin2021bold,lin2021bold2} among many others, the very existence of such a skeleton expansion for the {\em fermionic} system is still under debate since the resummation technique used in deriving \eqref{renormalized_sigma} is only a persuasive combinatorial {\em argument}, and not yet a rigorous combinatorial {\em proof}. One may need complicated combinatorial and analytical techniques, like the ones used in \cite{lin2021bold,lin2021bold2}, to rigorously justify such a renormalization procedure. In contrast, since essentially the CMZE is derived using a series of recurrence relations which were proved rigorously, the existence of the combinatorial (renormalized) expansion \eqref{Sigma_tree_renormalized} is theoretically guaranteed by Lemma \ref{lemma1} through homomorphisms that map words into trees and operators.
\subsection{A short summary}
Up to this point, we have laid down the theoretical foundation of the combinatorial Mori-Zwanzig theory. Before we enter the next section which mainly addresses different applications of the theory, it is worthy to provide a short summary on what we have done and comment on  the motivation of the theory, the methodology we have adopted, possible computational strategies, and further development of the theory. As we mentioned in the introduction, the primary goal of developing such a combinatorial expansion of the Mori-Zwanzig equation is to find a systematic way to derive a self-consistent equation of motion for the correlation (Green's) function of the many-body system. Ideally, the derivation has to be exact, directly comes from the modeling Hamiltonian, and does not rely on approximations other than the truncation of the expansion series. The simple case studies provided in Section \ref{sec:main_thm} and \ref{sec:Time-d-CMZE} have shown that the derived CMZE satisfies all these criteria. From this perspective, CMZE resembles the combinatorial Dyson-Schwinger equation (DSE) for quantum electrodynamics (QED) \cite{yeats2017combinatorial,yeats2008growth} and the skeleton expansions for quantum many-body systems \cite{stefanucci2013nonequilibrium} since the latter two are also formally exact for quantum fields with arbitrarily large interaction strength. In fact, the resemblance between functional equations \eqref{CMZE-DSE}, \eqref{CMZE-DSE_noneq} and \eqref{DSE}, as well as the similarities between tree diagram and Feynman diagrams clearly illustrates such connections. 

On the other hand, the difference is also worth noticing. First, as we already pointed out, CMZE is a temporal-domain renormalized perturbation theory, while the combinatorial DSE for QED \cite{yeats2008growth} and the skeleton expansions are perturbation theory renormalized with respect to parameters such as the interaction strength. Eventually, this is due to the fact that MBPT is built upon the Dyson series expansion in the interaction picture, while the CMZE is built on the Taylor series expansion in the Heisenberg picture.
Secondly, the methodology is different. The Feynman-graph-based method can be viewed as a jigsaw puzzle game where the first step of solving the puzzle is to find skeleton diagrams and the second step is to follow the combinatorial rule to assemble them together. The CMZE, on the other hand, sets up an ansatz claiming that two jigsaw puzzles are equivalent (e.g. $C_{\tP}(t)=F(C(t))$), and one finds the combinatorial rule to dissemble the first puzzle into pieces and assemble them together to get the other one (i.e. finding $F$). As we will see, the basic building block of these two puzzles in CMZE are trees with branches corresponding to $\P\L^i\P$, which can be further specified to be statistical moments, or equivalently, the initial condition of many-body Green's function, at the statistical equilibrium (see Section \ref{sec:app_mod_coup} and \ref{sec:app_hubbard}). Thirdly, CMZE is more flexible as a computational framework for two reasons. On one hand, since the projection operator $\P$ can be any finite-rank projection operator, we can define CMZE-induced endomorphism $\text{Ran}(\P)\rightarrow\text{Ran}(\P)$ in submanifold $\text{Ran}(\P)$, where $\text{dim}(\text{Ran}(\P))$ can be an arbitrary positive integer (see examples in Section \ref{sec:app_hubbard}). With small $\text{dim}(\text{Ran}(\P))$, one may be able to compute the combinatorial expansion for $\hat{\Sigma}$ to high orders with relatively low cost, which exceeds the commonly used Born or GW approximation to the self-energy $\Sigma$. On the other hand, as pointed out by P.Fulde \cite{fulde1995electron}, the MZ framework is {\em free-Wick's theorem}. We remind that Wick's theorem was not used when deriving the tree diagrams. Hence the CMZE equation applies to virtually all lattice fields following all kinds of algebra. This includes the statistical fields, random fields, and fermionic fields that will be considered in Section \ref{sec:app}, and in principle should include other lattice fields such as the Ising model, quantum spin chains \cite{mahan2013many} and the $t-J$ model \cite{fulde1995electron}. Moreover, the resummation technique we briefly sketched at the end of Section \ref{sec:Time-d-CMZE} hinted the existence of more variants of the self-consistent EOMs for the correlation (Green's) function. 

In conclusion, CMZE introduces a {\em family} of new self-consistent EOMs for the correlation (Green's) function. Eventually, their usefulness has to be evaluated based on numerical results. Here we only briefly comment on the computational strategy and leave the specific implementation as our future work. We first note that all the combinatorial expansions can be obtained by solving recurrence equations. In principle, this part can be done automatically using symbolic code. The building blocks $\P\L^i\P$ correspond to the equilibrium statistical moments, which can also be calculated using an enumerative combinatorial algorithm we developed in \cite{zhu2020generalized,zhu2021effective}. Lastly, one may already notice that for the time-dependent system, our CMZE \eqref{op_id_1_nonc_type2} is an {\em incomplete} combinatorial expansion since $\F_n(s)$ and $\G_m(t)$ depend on $\Q\L_1,\Q\L_2,\cdots$. Complete combinatorial anatomy of the time-dependent MZE should further decompose them into operator polynomials involving $\P\L_1\P,\P\L_2\P,\cdots,\P\L_1^2\P,\cdots$. We will discuss this extension and the symbolic code realization of the combinatorial expansion in the coming work \cite{zhu2022complete}. 
\section{Physics applications}
\label{sec:app}
In this section, we focus on the application of CMZE in different many-body systems, especially for those strongly interactive or correlated ones where perturbation theory generally fails. We will consider three applications, which exemplify the usage of CMZE in the classical Hamiltonian system, stochastic dynamical systems, and quantum many-body systems. More specifically, CMZE will be used to derive a self-consistent evolution equation for the density fluctuation in liquids. This equation can be viewed as a generalized mode-coupling equation and may be useful in predicting the glass transition in supercooled liquids. The application to a stochastic particle system yields novel expansion ansatzes of the memory kernel for the generalized Langevin equation of colloidal particles. Lastly, we apply the CMZE in electron systems and get new evolution equations of the one-particle Green's function that are different from the Kadanoff-Baym equation. Numerical studies of these new equations will be our future work. 

\subsection{Generalized mode-coupling equation for glassy dynamics}
\label{sec:app_mod_coup}
Consider a classical Hamiltonian system $\H$ for $N$ identical interactive particles with mass $m$, positions $\{r_i\}_{i=1}^N$ and momenta $\{p_i\}_{i=1}^N$. The Hamiltonian system equation of motion for any observable function $A=A(r_i,p_i)$ is given by:
\begin{align*}
    \frac{d}{dt}A(t)=\{A(t),\H\}=i\L A(t),
\end{align*}
where $\{\cdot,\cdot\}$ the classical Poisson bracket and $\L$ is known as the Liouville operator. Note that, to be consistent with the literature \cite{reichman2005mode}, this definition of the Liouville operator is slightly different from what we have seen in Section \ref{sec:CCMZE}. As a result, we can derive MZEs using semigroups $e^{it\L}$ and $e^{it\Q\L}$ with loss of generality. For liquid systems, the Hamiltonian often takes the form of 
\begin{align}\label{glass_hamiltonian}
    \H=\sum_i\frac{p_i^2}{2m}+\frac{1}{2}\sum_{i,j\neq i}\phi(r_{ij}),
\end{align}
where $\phi$ is the pairwise particle interaction potential and $r_{ij}=r_i-r_j$ defines the relative displacement between $i$-th and $j$-th particle. Liquid particles often have non-harmonic interactive potential energy whose magnitude is comparable with the kinetic energy. Hence the defined Hamiltonian system is a classical strongly interactive many-body system that cannot be solved analytically. When rapidly cooling a liquid such as water below its melting point, particles in the liquid will enter a glass state which looks like being "trapped" by surrounding particles \cite{gotze2009complex}. Trying to understand this phenomenon on the microscopic scale posed long-during challenges to physicists. Although still not perfect, the mode-coupling theory and its various generalization gained much success in terms of the prediction of this glass transition. Detailed discussions can be found in the physics literature \cite{gotze2009complex}. In this subsection, we will follow the derivation in \cite{reichman2005mode} to derive the classical MZE for the fluid density fluctuation, and then show that the combinatorial expansion of it will lead to EOM that is similar to but different from any existing (generalized) mode-coupling equations. 

To study the glass transition, we choose an observable pair $\{\delta\rho_{q},j_{q}^L\}$ from the Hamiltonian system \eqref{glass_hamiltonian}. Here $\delta\rho_{q}$ is the $q$-th Fourier mode of the fluid density fluctuation defined by:
\begin{align*}
   \delta\rho_{q}:=\sum_{i}e^{iq\cdot r}-(2\pi)^3\rho\delta(q),
\end{align*}
where $\rho=N/V$ is the average density of the system. $j_{q}^L$ is the $q$-th Fourier mode of the longitudinal current. Assuming the spatial homogeneity of the liquid, it is related to the time derivative of  $\delta\rho_{q}$ by 
\begin{align*}
   \dot\delta\rho_{q}=i|q|j_{q}^L=\frac{i|q|}{m}\sum_i(\hat q\cdot p_i)e^{-q\cdot r_i},
\end{align*}
where $\cdot$ denotes the vector inner product, and $\hat q$ is the unified $q$ vector. The indicator for the glass transition is the dynamic behavior of the density fluctuation. As the system temperature decreases, the equilibrium time autocorrelation function for the fluid density fluctuation will not decrease to 0 exponentially but wander around a non-zero plateau for a rather long time, which is a clear sign of non-ergodicity \cite{reichman2005mode,gotze2009complex}. To quantitatively study this dynamic process, we use the Mori-Zwanzig equation and introduce the following Mori-type projection operator:
\begin{align}\label{Mori_P_MCT}
    \P
\begin{bmatrix}
u\\
v
\end{bmatrix}
=
\left\langle
\begin{bmatrix}
\delta \rho_{-q}\\
j^L_{-q}
\end{bmatrix}
[u,v]
\right\rangle
\begin{bmatrix}
\frac{1}{NS(|q|)} & 0\\
0 & \frac{m}{Nk_BT}
\end{bmatrix}
\begin{bmatrix}
\delta \rho_{q}\\
j^L_{q}
\end{bmatrix}.
\end{align}
Here $\langle,\rangle$ is the Hilbert space inner product with respect to the equilibrium density $e^{-\beta\H}$, where $\beta= 1/k_BT$. At the statistical equilibrium, $S(|q|)=\langle\delta \rho_{-q}\delta \rho_{q}\rangle/N$ is known as the static structure factor for the liquid and it only depends on the Fourier mode modulus $|q|$. $\langle j^L_{-q}j^L_{q}\rangle=Nk_BT/m$ and this comes from the equipartition theorem. Using the projected time-independent MZE \eqref{eqn:time_ind_PMZE_OP}, we get the following EOM for the time autocorrelation of the density fluctuation $F(q,t)=\langle\delta\rho_{-q},\delta\rho_{q}(t)\rangle$ \cite{reichman2005mode}:
\begin{align}\label{MCT_exact}
    \frac{d^2}{dt^2}F(q,t)+\frac{|q|^2k_BT}{mS(|q|)}F(q,t)+\frac{m}{Nk_BT}\int_0^tds\langle R_{-q}R_{q}(s)\rangle\frac{d}{dt}F(q,t-s).
\end{align}
This EOM is only formally exact since the MZ memory kernel $K(s)=\langle R_{-q}R_{q}(s)\rangle$ is defined {\em formally} using the orthogonal semigroup $e^{is\Q\L}$:
\begin{align*}
  K(s)=\langle R_{-q}R_{q}(s)\rangle=  \langle R_{-q} e^{is\Q\L}R_{q}\rangle,\qquad \text{where}\quad R_{q}=\frac{d}{dt}j_q^L-i\frac{|q|k_BT}{mS(|q|)}\delta\rho_{q}.
\end{align*}
Different approximations to the memory kernel lead to various evolution equations for predicting the glass transition. The classical mode-coupling theory uses several (uncontrollable) assumptions \cite{reichman2005mode}, such as the convolution approximation, and eventually arrives at a self-consistent, nonlinear evolution equation for $F(q,t)$:
\begin{align}\label{MCT_schematic}
    \frac{d^2}{dt^2}F(q,t)+\Omega F(q,t)+\lambda\int_0^tdsF^2(q,s)\frac{d}{dt}F(q,t-s)=0,
\end{align}
where $\Omega,\lambda$ depends on the fluid density $\rho$ and temperature $T$ among many other parameters. By varying them, the mode-coupling equation \eqref{MCT_schematic} predicts a glass transition \cite{bengtzelius1984dynamics}, although it underestimates the true transition temperature \cite{reichman2005mode}. Starting from the formally exact MZE \eqref{MCT_exact}, instead of introducing these uncontrollable approximations, we propose to use the time-independent CMZE to derive a combinatorial series expansion to $K(s)$ and hence a self-consistent EOM for $F(q,t)$. Since the underlying Hamiltonian system \eqref{glass_hamiltonian} is time-invariant, we retain the series expansion \eqref{op_id_1_nonc} up to the second order and get the following generalized mode-coupling equation for predicting the glassy dynamics:
\begin{equation}\label{CMZE-MCT}
\begin{aligned}
\frac{d^2}{dt^2}F(q,t)&+\frac{|q|^2k_BT}{mS(|q|)}F(q,t)-(\omega^2_0(q)+\frac{1}{2}\omega_2^2(q))\int_0^tds\frac{d}{dt}F(q,t-s)ds
\\
&-\frac{m\omega_2^2(q)}{|q|^2k_BT}\int_0^tds\frac{d^2}{ds^2}F(q,s)\frac{d}{dt}F(q,t-s)\\
&-\frac{1}{2}\frac{m\omega_2^2(q)}{|q|^2S(|q|)k_BT}\int_0^tds
\left[\frac{d}{ds}F(q,s)\right]^2
\frac{d}{dt}F(q,t-s)\\
&+\frac{1}{2}\frac{m^2N^2\omega_2^2(q)}{|q|^4k^2_BT^2}\int_0^tds
\left[\frac{d^2}{ds^2}F(q,s)\right]^2
\frac{d}{dt}F(q,t-s)=0.
\end{aligned}
\end{equation}
The derivation of \eqref{CMZE-MCT} is given in \ref{app:MCT_derivation}. In Eqn \eqref{CMZE-MCT}, $\omega^2_0(q)$ and $\omega_2^2(q)$ are functions of the Fourier mode $q$ whose specific form can be explicitly determined given the high-order static moment in the statistical equilibrium (details provided in \ref{app:MCT_derivation}). Retaining more terms in the combinatorial expansion, we can get higher-order mode coupling equations that contain nonlinear terms such as the quartic of $F(q,s)$ or its derivatives. We particularly note that the {\em only} approximation we used in the derivation is the truncation of the combinatorial expansion series.
\subsection{Self-consistent generalized Langevin equation for coarse-grained particles}
\label{sec:app_GLE}
CMZE also applies to non-Hamiltonian systems generated by non-Liouvillian operators. Consider the Langevin dynamics for a $d$-dimensional interactive particle system with total $N$ interacting particles \cite{zhu2021generalized}:
\begin{align}\label{eqn:LE}
\begin{dcases}
d q_i=\frac{1}{m_i} p_idt\\
d p_i=-\nabla_iV(q)dt-\frac{\gamma}{m_i}p_idt+\sigma d \W_{i}(t)
\end{dcases}.
\end{align}
Here $m_i$ is the mass of each particle, $V$ is the interaction potential energy and $\{q_j, p_j\}$ are, respectively, the generalized coordinate and momentum of the $j$-th particle.
$\W_{i}(t)$ is a $d$-dimensional Wiener process which satisfies $d\W_i(t)=\xi_i(t)dt$ with
\begin{align*}
\langle \xi_{ij}(t)\rangle=0,\qquad 
\langle \xi_{ij}(t)\xi_{i'j'}(s)\rangle=2\gamma k_{B}T\delta_{ii'}\delta_{jj'}\delta(t-s),\qquad 1\leq j\leq d,
\end{align*}   
where $\xi(t)$ is the Gaussian white noise. The parameter $\gamma$ is the friction coefficient which is related to $\sigma$ through the fluctuation-dissipation relation $\sigma=(2\gamma/\beta)^{1/2}$, where $\beta=1/k_BT$, $k_B$ is the Boltzmann constant and $T$ the temperature of the equilibrium system. 
The stochastic dynamical system 
\eqref{eqn:LE} is widely used in the mesoscopic modeling of liquids and gases. The Kolmogorov backward operator associated with the stochastic differential equation \eqref{eqn:LE} is given by
\begin{align}\label{Kolmo:LE}
\K=\sum_{i=1}^N\frac{p_i}{m_i}\cdot\partial_{q_i}+
\sum_{i=1}^N\left[-\partial_{q_i}V(q)-
\frac{\gamma p_i}{m_i}\right]\cdot\partial_{p_i}+\sum_i^N\frac{\gamma}{\beta}\partial_{p_i}\cdot\partial_{p_i},
\end{align}
where ``$\cdot$'' denotes the standard dot product. 
If the interaction potential $V(q)$ is strictly positive 
at infinity then the Langevin equation \eqref{eqn:LE} 
admits a unique invariant Gibbs measure $e^{-\beta H}/Z$, where $H(p,q)=\sum_{i=1}^N\|p_i\|_2^2/(2m_i)+V(q)$ is the Hamiltonian and $Z$ is the partition function. Suppose we are interested in the $x$-directional self-diffusion of a coarse-grained particles in the equilibrium. The MZE can be used to derive the EOM for the tagged particle momentum in the $x$-direction, i.e. $p_{jx}$. To this end, we choose $u(0)=p_{jx}$ and introduce Mori's projection operator:
\begin{align}\label{example2_Mori_P}
  \P(\cdot)=\frac{\langle\cdot,p_{jx}(0)\rangle}{\langle  p_{jx}^2(0)\rangle} p_{jx}(0)
\end{align}
For stochastic systems, the dynamics is generated by the Kolmogorov  backward operator \eqref{Kolmo:LE}. Hence in MZEs, all $\L$ operator has to be replaced by $\K$. With this modification, the resulting EOMs, in particular, MZE \eqref{eqn:time_ind_PMZE_POP} and \eqref{eqn:time_ind_MZE_u_t}, can be interpreted as the exact evolution equation for the time 
autocorrelation function $C(t)=\langle p_{jx}(t)p_{jx}(0)\rangle/\langle p_{jx}^2(0)\rangle$ and the full dynamics $p_{jx}(t)$. More detailed discussion on the MZE for stochastic systems can be found in \cite{zhu2021effective,zhu2021generalized}. Specifically, using \eqref{example2_Mori_P} and Eqn \eqref{eqn:time_ind_PMZE_POP}, we can get the following EOM for the time autocorrelation function:
\begin{align}\label{exp2_GLEC}
  \frac{d}{dt}C(t)=\Omega C(t)+\int_0^tK(t-s)C(s)ds.
\end{align}
where $\Omega=-\gamma/m_i$. We can also obtain the full EOM for $p_{jx}$ using Eqn \eqref{eqn:time_ind_MZE_u_t}. This equation is also known as the {\em generalized Langevin equation} and has the following form:
\begin{align}\label{exp2_GLE}
    \frac{d}{dt}p_{jx}(t)=\Omega p_{jx}(t)+\int_0^tK(t-s)p_{jx}(s)ds+g(t),
\end{align}
where $g(t)=e^{t\Q\K}\Q\K p_{jx}(0)$. In the reduced-order modeling for the coarse-grained particle, we normally seek an approximation scheme to calculate the memory kernel $K(t-s)$ in \eqref{exp2_GLE} and then use the second fluctuation-dissipation theorem to build effective models for the noise term $g(t)$ \cite{lei2016data,zhu2020generalized}. In this section, we propose to use the CMZE to finish the first task (the more difficult one), where the combinatorial expansion can be employed to provide a first-principle method or data-driven approximation ansatz for calculating $K(t-s)$. Since \eqref{example2_Mori_P} is a rank-$1$ projection operator, the CMZE is commutative and hence the result in Section \ref{sec:CCMZE} directly applies. Specifically, with the combinatorial expansion \eqref{K_combi_expansion}, we get the following self-consistent EOM for the correlation function:
\begin{align}
\frac{d}{dt}C(t)=-\frac{\gamma}{m_i} C(t)+\sum_{n=0}^{\infty}\sum_{k=0}^n\frac{(-1)^{n-k}}{(n-k)!k!}f_n\int_0^tC^{k}(s)C(t-s)ds.\label{exp2_GLE_CMZEC}
\end{align}
Application to Eqn \eqref{eqn:time_ind_MZE_u_t} leads to the self-consistent generalized Langevin equations for the coarse-grained particle:
\begin{equation}\label{exp2_GLE_CMZE}
\begin{aligned}
\begin{dcases}
\frac{d}{dt}q_{jx}(t)&=\frac{1}{m_i}p_{jx}(t)\\
\frac{d}{dt}p_{jx}(t)&=-\frac{\gamma}{m_i} p_{jx}(t)+\sum_{n=0}^{\infty}\sum_{k=1}^n\frac{(-1)^{n-k}}{(n-k)!k!}f_n\int_0^tC^k(t-s)p_{jx}(s)ds+g(t),
\end{dcases}
\end{aligned}
\end{equation}
where $g(t)$ can be constructed using stochastic process series expansion \cite{zhu2021generalized,zhu2020generalized} once knowing $K(t)$. By direct computation, we find that the first few $f_n$s are given by: 
\begin{align*}
    f_0&=-\frac{1}{m_i}\langle\partial_{q_{ix}}^2V(q)\rangle\\
    f_1&=0\\
    f_2&=\left(-\frac{1}{m_i^2}+\frac{1}{\gamma}-\frac{1}{m_i}\right)\langle\partial_{q_{ix}}^2V(q)\rangle
    +\left(\frac{2}{m_i\gamma}+\frac{1}{\gamma^2}\right)\langle\partial_{q_{ix}}^2V(q)\rangle^2\\
    &\ \ \ \ 
    +\frac{m_i}{\gamma^2}\langle\partial_{q_{ix}}V(q)\partial_{q_{ix}}^2V(q)\rangle
    +\frac{1}{m_i\gamma}\langle\partial_{q_{ix}}^3V(q)\rangle+2\frac{\gamma^2}{m_i^2}-\frac{\gamma}{m_i^2}.
\end{align*}
With these three terms, we can construct a second-order approximation to Eqns \eqref{exp2_GLE_CMZEC} and then solve \eqref{exp2_GLE_CMZE}. Higher-order combinatorial expansions can be obtained in a similar way which leads to more accurate approximations of the generalized Langevin equation. For general stochastic systems, it is possible to use a field-theoretical approach and Feynman-like diagrams to get the evolution equation for the correlation function as shown in \cite{cardy2008non,reichman2005mode,bouchaud1996mode}. Interested readers may compare the method adopted therein with the CMZE in terms of the EOM and the graphical representation.     
\subsection{Self-consistent EOM for strongly correlated electron systems}
\label{sec:app_hubbard}
Lastly, we consider the application of CMZE in quantum many-body systems and show that by choosing different projection operators, we get various  self-consistent EOMs for Green's function. These new equations are compared with the Kadanoff-Baym equation in the Keldysh formalism \cite{stefanucci2013nonequilibrium}. To this end, we consider the tight-binding Hubbard model for $N$ interacting electrons:
\begin{align}\label{H_hubbard}
    \H=\H_0+\H_I=
    \underbrace{\sum_{i,\sigma}(\epsilon_0-\mu)n_{i\sigma}+t\sum_{\langle i,j\rangle,\sigma}c_{i\sigma}^{\dagger}c_{j\sigma}}_{\H_0}+\underbrace{U\sum_{i}n_{i\uparrow}n_{i\downarrow}}_{\H_I},
\end{align}
where $\epsilon_0$ is the atomic level, $t$ is the hopping integral between sites, and $\langle i,j\rangle$ indicates the hopping is restricted to neighboring sites. $U$ is the strength of the Coulomb interaction potential. $c^{\d}_{i\sigma},c_{i\sigma}$ is the creation and annihilation operator for an electron with spin $\sigma$ on-site $i$. A chemical potential $\mu$ is added to the atomic level $\epsilon_0$. Furthermore, we restrict ourselves to the paramagnetic case with an equal number of spin-up ($\uparrow$) and spin-down ($\downarrow$) electrons in the lattice. We study the single-particle excitation of the Hubbard model \eqref{H_hubbard} with the retarded Green's function defined by:
\begin{align*}
    G^{R}_{i\sigma,j\sigma'}(t,t'):=-i\theta(t-t')\langle[c_{i\sigma}(t),c^{\d}_{j\sigma'}(t')]_+\rangle.
\end{align*}
Here $\theta(t)$ is the step function, $c_{i\sigma}(t),c_{j\sigma'}(t')$ are the creation and annihilation operators in the Heisenberg picture. $[,]_+$ is the anticommutator. The ensemble average $\langle\cdot\rangle$ is normally taken with respect to the grand canonical ensemble. To get the evolution equation for the retarded Green's function, we introduce the Liouville superoperator $\L:=[\H,\cdot]$ and an inner product between operators $\A$ and $\B$: $(\A|\B)=\langle[\A^{\d},\B]_+\rangle$. Since at the statistical equilibrium, we have $G^{R}_{i\sigma,j\sigma'}(t,t')=G^{R}_{i\sigma,j\sigma'}(t-t')$. Without loss of generality, we may set $t>0,t'=0$ and work on the retarded Green's function $G^{R}_{i\sigma,j\sigma'}(t)=-i\langle [c_{i\sigma}(t),c^{\d}_{j\sigma'}(0)]_+\rangle=-i(c_{j\sigma'}|c_{i\sigma}(t))$. In the Heisenberg picture, the EOM for any operator $\A$ is given by $\A(t)=e^{it\H}\A e^{-it\H}=e^{it\L}\A=\U(t,0)\A$. Hence we prompt to use the time-independent MZE:
\begin{align}\label{eqn:time_ind_PMZE_OP_hubbard}
\frac{d}{dt}\P e^{it\L}\P=i\P e^{it\L}\P\L\P
+\int_0^t\P e^{i(t-s)\L}i\P\L e^{is\Q\L}i\Q\L\P ds.
\end{align}
Depending on the choice of $\P$, different EOMs for the Green's function $G^{R}_{i\sigma,j\sigma'}(t)$ can be obtained. Here are two typical examples.
\paragraph{One-dimensional $\P$}
If we are only interested in the evolution equation for the onsite Green's function $G_{i\sigma,i\sigma}^R(t)$, we can define a one-dimensional Mori's projection operator:
\begin{align}\label{Mori_P_hubbard_one}
    \P=|c_{i\sigma})(c_{i\sigma}|.
\end{align}
Note that this is indeed a projection operator since $(c_{i\sigma}|c_{i\sigma})=1$. Since the range space of \eqref{Mori_P_hubbard_one} is one-dimensional, with this projection operator, the combinatorial expansion for \eqref{eqn:time_ind_PMZE_OP_hubbard} yields a commutative CMZE. Applying the result in Section \ref{sec:CCMZE}, we can get that  
\begin{align}\label{Green_f_CMZE}
    i\frac{d}{dt}G^R(t)=\Omega G^R(t)+\sum_{n=0}^{\infty}\sum_{k=0}^n\frac{(-1)^{n-k}}{k!(n-k)!}f_n\int_0^t[iG^{R}(s)]^kiG(t-s)ds,
\end{align}
where $G^R(t)=G^{R}_{i\sigma,i\sigma}(t)$, $\Omega=(\epsilon_0-\mu)+U\langle n_{i\bar\sigma}\rangle$, and the first two $f_n$s are given by:
\begin{equation}\label{f_n_hubbard_1P}
\begin{aligned}
f_0&=U^2(\langle n_{i\bar\sigma}\rangle^2-\langle n_{i\bar\sigma}\rangle)+(\epsilon_0-\mu)U(\langle n_{i\bar\sigma}\rangle-1)-2t^2\\
f_1&=-\frac{1}{(\epsilon_0-\mu)+U\langle n_{i\bar\sigma}\rangle}[(\epsilon_0-\mu)^3+3U(\epsilon_0-\mu)^2\langle n_{i\bar\sigma}\rangle+6t^2(\epsilon_0-\mu)+3U^2(\epsilon_0-\mu)\langle n_{i\bar\sigma}\rangle\\
&\ \ \ 
+t^2U(4\langle n_{i\bar\sigma}\rangle+\langle n_{i-1\bar\sigma}\rangle
+\langle n_{i+1\bar\sigma}\rangle)
+U^3\langle n_{i\bar\sigma}\rangle
]\\
&\ \ \
-U^2\langle n_{i\bar\sigma}\rangle^2+2U^2\langle n_{i\bar\sigma}\rangle+2U(\epsilon_0-\mu)+4t^2+(\epsilon_0-\mu)^2.
\end{aligned}    
\end{equation}
The derivation of $\Omega$ and \eqref{f_n_hubbard_1P} are provided in \ref{app:hubbard}. According to Theorem \ref{thm_combi_expansion}, we know $f_n=L_n(\gamma_0,\cdots,\gamma_{n-1})$, where  $\gamma_j=(c_{i\sigma}|(i\L)^j c_{i\sigma})$ are polynomials of statistical moments of the form:
\begin{align}\label{formula}
    \langle n^{j_{1}}_{1,\sigma_{1}}\cdots n^{j_{i-1}}_{i-1,\sigma_{i-1}}n^{j_{i}}_{i,\sigma_i}n^{j_{i+1}}_{i+1,\sigma_{i+1}}\cdots n^{j_{N}}_{N,\sigma_{N}},
    n^{j'_{1}}_{1,\bar\sigma_{1}}\cdots n^{j'_{i-1}}_{i-1,\bar\sigma_{i-1}}n^{j'_{i}}_{i,\bar\sigma_i}n^{j'_{i+1}}_{i+1,\bar\sigma_{i+1}}\cdots n^{j'_{N}}_{N,\bar\sigma_{N}}\rangle,
\end{align} 
where the power index $j_1,j_2\cdots \in\{0,1\}$, $j'_1,j'_2\cdots \in\{0,1\}$, $\sigma_1,\sigma_2\cdots \in\{\uparrow,\downarrow\}$ and $\bar \sigma_1,\bar \sigma_2\cdots \in\{\downarrow,\uparrow\}$. To be noticed that \eqref{formula} can also be understood as a many-particle Green's function at the equilibrium. To solve Eqn \eqref{Green_f_CMZE}, the problem boils down to the determination of $f_n$s and the evaluation of statistical moments of the form \eqref{formula}. For the first task, we have already shown in the \ref{app:hubbard} that low-order $f_n$s can be calculated manually by hand. High-order $f_n$s are better to be calculated using a quantum analogue of the enumerative combinatorics algorithm developed in \cite{amati2019memory,zhu2020generalized,zhu2021effective}.  The ensemble average \eqref{formula} is hard to evaluate for a general grand canonical ensemble. However, if we are interested in the extremely correlated case where $U\gg t$, they may be calculated approximated using $e^{-\beta(\H-\mu\N)}/Z\approx e^{-\beta(\H_I+(\epsilon-\mu)\N)}/Z_{e}$, where the non-hopping Hamiltonian $\H_I+(\epsilon-\mu)\N$ is diagonalized. Once $f_n$s are determined, we can truncate the expansion series in \eqref{Green_f_CMZE} to a certain order and solve the equation using any accurate time integrator.

\paragraph{$N$-dimensional $\P$}
To get the EOM for {\em all} one-particle Green's function, we introduce a multi-dimensional Mori's projection operator:
\begin{align}\label{Mori_P_hubbard_nd}
    \P=\sum_{j\sigma'}|c_{j\sigma'})(c_{j\sigma'}|,
\end{align}
where the summation $\sum_{j\sigma'}$ runs over all possible states $\sigma'$ for all electrons in the lattice. This is indeed a projection operator since $(c_{i\sigma}|c_{j\sigma'})=\delta_{ij}\delta_{\sigma\sigma'}$. Due to the symmetry of the Hubbard model, we have $G^R_{i\sigma,j\sigma}(t)=G^R_{i\bar \sigma,j\bar \sigma}(t)=G^R_{ij}(t)$ \cite{schuler2020nessi}. Hence without loss of generality, we can drop the spin index and restrict ourselves to Green's function $G^R_{ij}(t)$. The projection operator \eqref{Mori_P_hubbard_nd} naturally becomes $\P=\sum_{j}|c_{j\sigma})(c_{j\sigma}|$ and we get a $N$-dimensional MZE for the retarded Green's function: 
\begin{align}\label{MZE_nd_matrix_nd}
    i\frac{d}{dt} G^{R}(t)=\Omega G^{R}(t)+i\int_0^t K^{R}(s) G^{R}(t-s)ds.
\end{align}
In Eqn \eqref{MZE_nd_matrix_nd}, $G^R(t),\Omega$ and $K(s)$ are $N\times N$ matrices with entries: 
\begin{equation}\label{MZE_nd_hubbard_nd1}
    \begin{aligned}
    G^{R}_{ij}(t)&=(G^{R})_{ij}(t)\\
   \Omega^{R}_{ij}&=(\Omega^{R})_{ij}=(c_{j\sigma}|i\L c_{i\sigma})\\
   K^{R}_{ij}(s)&=(K^{R})_{ij}(s)
   =\left(c_{j\sigma}\bigg|i\L e^{is\Q\L}i\Q\L c_{i\sigma}\right).
    \end{aligned}
\end{equation}
For the Hubbard model, the frequency matrix $\Omega_{ij}$ is explicitly given by:
\begin{align}\label{omega_hubbard_MZ}
    \Omega_{ij}&=(H_0)_{_{ij}}+U\langle n_{i\bar\sigma}\rangle\delta_{ij}=[(\epsilon_0-\mu)\delta_{ij}+t(\delta_{i-1j}+\delta_{i+1j})]+U\langle n_{i\bar\sigma}\rangle\delta_{ij}.
\end{align}
The combinatorial expansion can be applied to approximate the memory matrix $K(s)$. Since the range space of the projection operator is $N$-dimensional, the resulting CMZE is non-commutative and admits a general form:  
\begin{align}\label{Green_f_CMZE_nd}
    i\frac{d}{dt}G^R(t)=\Omega G^R(t)+\sum_{n=0}^{\infty}\sum_{k=0}^n\frac{(-1)^{n-k}}{k!(n-k)!}\int_0^t[iG^{R}(s)\Omega_n]^kiG^R(t-s)ds.
\end{align}
Truncating the combinatorial expansion at the first order, we get the following self-consistent EOM for the retarded Green's function of the Hubbard model:
\begin{align}\label{1st-order_CMZE_hubbard}
i\frac{d}{dt} G^{R}(t)=\Omega G^{R}(t)+i\int_0^t(\Omega_0-\Omega_1) G^{R}(t-s)ds-\int_0^tG^R(s)\Omega_1G^R(t-s)ds,
\end{align}
where the explicit expression for $\Omega_0,\Omega_1$ is given in \ref{app:hubbard}. Due to the symmetry of the hopping term in the Hubbard Hamiltonian, all $\Omega_n$s are Toeplitz matrices. For large $n$, the descending diagonal element in $\Omega_n$ can be calculated like $f_n$s using the same method we proposed in the previous paragraph. 

The above Mori's projection operators and the resulting CMZEs are two extreme cases corresponding to $\text{dim}(\text{Ran}(\P))=1$ and $\text{dim}(\text{Ran}(\P))=N$ respectively. Since $\text{dim}(\text{Ran}(\P))$ can be specified to be any $n$ for $1\leq n\leq N$, we actually obtain a family of CMZEs for the {\em resolved} Green's function, i.e. the Green's function on restricted sites of the whole lattice (See schematic illustration in Figure \ref{fig:Variant_P_Hubbard}). In applications, if one is only required to know a low-dimensional resolved Green's function that of theoretical interest, say $\text{dim}(\text{Ran}(\P))=3$, we may use the aforementioned combinatorial algorithm to get a high-order approximation to the CMZE. Presumably, it can be solved with relatively low computational cost due to the low dimensionality of the equation. This is in contrast with the traditional field-theoretical approach adopted in the Kadanoff-Baym equation (KBE) since for the latter such a dimension-reduction is not easily obtained due to the structure of the Feynman diagram used in approximating the self-energy. More explanations on the derivation of KBE can be found in the following paragraph.     
\begin{figure}
\centering
\includegraphics[height=2.5cm]{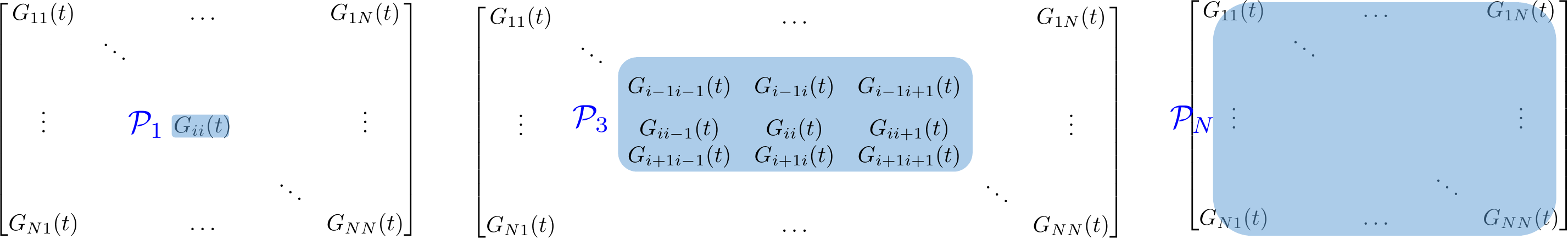}
\caption{Schematic illustration for the resolved one-particle Green's function under different Mori's projection operator $\P$. Here the projection operator $\P$ is re-denoted as $\P_n$, where $n=\text{dim}(\text{Ran}(\P))$ indicates the dimension of the range space. As $n\rightarrow N$, the resolved Green's function that is included in CMZE increases as $O(n^2)$. This scaling also applies to the time-dependent CMZE for non-equilibrium systems.}
\label{fig:Variant_P_Hubbard}
\end{figure}
\paragraph{Comparison with the Kadanoff-Baym equation}
The CMZE can be compared with the self-consistent approximation to the Kadanoff-Baym equation (KBE). The KBE is a general ansatz to derive the EOM for the Green's function \cite{stefanucci2013nonequilibrium}. In the statistical equilibrium when $G^{R}(t,t')=G^{R}(t),t'=0$, using the Langreth rule to simplify the Keldysh contour integral, the KBE for the Hubbard model \eqref{H_hubbard} takes the form (see e.g. Eqn (151) in \cite{schuler2020nessi}):
\begin{align}\label{KBE_gl_G}
    i\partial_t G^{R}(t)&= H_0(t) G^{R}(t)
    +\int_0^t\Sigma^R(t,s) G^{R}(s)ds, 
\end{align}
where $G^{R}(t)$, $H_0(t)$ and $\Sigma^R(t,s)$ are $N\times N$ matrices. $\Sigma^R(t,s)$ is identified as the self-energy of the system. The approximation and computation of $\Sigma^R(t,s)$ is the main technical difficulty in the whole scheme. The commonly adopted approach to this end is the many-body perturbation theory and Feynman diagrammatic expansion we briefly sketched in Section \ref{sec:tree_feynman_diag}. Here we only state the result and refer interested readers to the standard textbook \cite{stefanucci2013nonequilibrium} for technical details. Specifically, the first two Feynman diagrams in \eqref{renormalized_sigma} represent the first-order Hartree-Fock approximation to the self-energy. The Hartree-Fock term is time-local. Its contribution to the memory integral makes it convolutionless therefore the resulting term is normally merged into $ H_0(t)$. By direct computation, we find that after this step, $H_0(t)$ is explicitly given by
\begin{align}\label{H_0_hubbard_KBE}
   (H_0)_{ij}(t)=[(\epsilon_0-\mu)\delta_{ij}+t(\delta_{i-1j}+\delta_{i+1j})]+U\langle n_{i\bar\sigma}(t)\rangle\delta_{ij}.
\end{align}
To be noticed that the first $t$ on the RHS represents the hopping integral. The last two diagrams \eqref{renormalized_sigma} correspond to the second-order Born approximation to the self-energy:
\begin{align}\label{2nd_Born_app_hubbard}
    \Sigma_{ij}(t,s)=U^2G^R_{ij}(t-s)G^R_{ij}(t-s)G^R_{ji}(s-t).
\end{align}
Here we used equality $G_{ij,\sigma}(t)=G_{ij,\bar\sigma}(t)=G_{ij}(t)$ to simplify the notation. Using Born's approximation, we get the commonly used self-consistent EOM for the retarded Green's function:
\begin{align}\label{KBE_gl_G_2ndBorn}
    i\partial_t G^{R}(t)&= H_0(t) G^{R}(t)
    +U^2\int_0^t(G^R(t-s)\odot G^R(t-s)\odot [G^{R}(s-t)]^T)G^{R}(s)ds,
\end{align}
where $\odot$ is the Hadamard product.
Comparing \eqref{omega_hubbard_MZ} and \eqref{H_0_hubbard_KBE}, we find that at $t=0$, the frequency matrix $\Omega$ in CMZE \eqref{1st-order_CMZE_hubbard} equals to the Hartree-Fock streaming matrix $H_0(t=0)$ in \eqref{KBE_gl_G_2ndBorn}. The memory kernel matrix $K(t-s)$ can be understood as the self-energy for the Hubbard model under the  Mori-Zwanzig framework. Moreover, the combinatorial expansion to the memory kernel yields self-consistent EOM \eqref{1st-order_CMZE_hubbard} for the Green's function that is similar to \eqref{KBE_gl_G_2ndBorn}. 

Lastly, we note that the above derivation can be readily generalized to get the EOM for other Green's functions in the Keldysh contour. Moreover, using CMZE \eqref{op_id_1_nonc_type2}, we can get similar equations for the non-equilibrium Green's function for time-dependent many-body systems. All these extensions will be addressed in a coming paper \cite{zhu2022complete}.  
\section{Summary}
\label{sec:summary}
In this paper, we developed a theoretical framework to get the combinatorial series expansion for the Mori-Zwanzig equation. For the simplest commutative case, this expansion is directly derived from the classical Fa\`a di Bruno formula. This method is then generalized to obtain the noncommutative expansion series for operators. To this end, we introduced several new noncommutative polynomials of words and a symbolic equation for these polynomials. As a result, the noncommutative combinatorial expansion for the time-independent and time-dependent Mori-Zwanzig equation can be derived from the abstract word equation after introducing several algebraic homomorphisms that map the associative algebra of words into the algebra of operators. 

We also introduced an equivalent, tree representation of the combinatorial Mori-Zwanzig equation (CMZE). In this representation, the word equation that defines the combinatorial expansion for the Mori-Zwanzig memory kernel is reformulated into the functional equation for the graded exponential generating function of trees. This enables us to make direct connections between the CMZE and the combinatorial Dyson-Schwinger equation. In addition, we demonstrated the similarities as well as the differences between the tree diagrams and the Feynman diagrams in terms of the representation of the self-consistent expansions for the self-energy (memory kernel).  

The combinatorial Mori-Zwanzig theory provides a systematical way to get the self-consistent equation of motions (EOM) for many-body systems. The generality of this newly developed framework makes it widely applicable to classical, stochastic, and quantum mechanical many-body systems. We provide example applications to all three cases. In particular, we have shown how CMZE leads to a general mode-coupling equation that can be used to predict glass transition. For the stochastic particle system, the CMZE yields novel series expansions of the generalized Langevin equation for coarse-grained particles. Lastly, the application of CMZE into the Hubbard model of interacting electrons introduces a new EOM for Green's function that is comparable with the self-consistent approximation to the Kadanoff-Baym equation. These examples demonstrated that the combinatorial Mori-Zwanzig theory provides a possible alternative framework to get the EOM for the correlation (Green's) function that is generally different from the renormalized many-body perturbation theory commonly adopted in many-body physics and quantum field theory. With that being said, a detailed assessment of its practical utility awaits future numerical investigations. 

\section{Acknowledgement} This material is based upon work supported by the U.S. Department of Energy, Office of Science, Office of Advanced Scientific Computing Research and Office of Basic Energy Sciences, Scientific Discovery through Advanced Computing (SciDAC) program under Award Number DE-SC0022198 (Contract No. DE-AC02-05CH11231).


\appendix
\section{Table for $k_{n,m}(t,s)$}
\label{Appendix_table}
Setting $v=u(0)$, the first few $k_{n,m}(t,s)$ for $0 \leq n,m\leq 2$ is summarized in Table \ref{Tab:1}.
\begin{landscape}
\begin{table}
\centering
\caption{The first few $k_{n,m}(t,s)$}
\begin{tabular}{|c|c|c|c|}\hline
\multirowcell{1}{}& 
\multirowcell{1}{$m=0$}&
\multirowcell{1}{$m=1$}&
\multirowcell{1}{$m=2$}\\
\hline
\multirowcell{5}{$n=0$}&
\multirowcell{5}{$\langle\L(s)\Q\L(t)v,v\rangle$}&
\multirowcell{5}{$\displaystyle-\frac{\langle\L(s)\Q\L_1\Q\L(t)v,v\rangle}{\langle\L_1v,v\rangle}$}&
\multirowcell{5}{$\displaystyle\frac{\langle\L(s)[(\Q\L_1)^2-\Q\L_2]\Q\L(t)v,v\rangle}{\langle\L_1v,v\rangle^2}$\\
$\displaystyle
+\frac{\langle(\L_1^2+\L_2)v,v\rangle\langle\L(s)\Q\L_1\Q\L(t)v,v\rangle}{\langle\L_1v,v\rangle^3}$}\\
& & &\\
& & &\\
& & &\\
& & &\\ 
\hline
\multirowcell{5}{$n=1$}&
\multirowcell{5}{$\displaystyle\frac{\langle\L(s)\Q\L_1\Q\L(t)v,v\rangle}{\langle\L_1v,v\rangle}$}&
\multirowcell{5}{$\displaystyle-\frac{\langle\L(s)(\Q\L_1)^2\Q\L(t)v,v}{\langle\L_1v,v\rangle^2}$}&
\multirowcell{5}{$\displaystyle\frac{\langle\L(s)\Q\L_1[(\Q\L_1)^2-\Q\L_2]\Q\L(t)v,v\rangle}{\langle\L_1v,v\rangle^3}$\\
$\displaystyle
+\frac{\langle(\L_1^2+\L_2)v,v\rangle\langle\L(s)(\Q\L_1)^2\Q\L(t)v,v\rangle}{\langle\L_1v,v\rangle^4}$}\\
& & &\\
& & &\\
& & &\\
& & &\\ 
\hline
\multirowcell{10}{$n=2$}&
\multirowcell{10}{$\displaystyle\frac{\langle\L(s)[(\Q\L_1)^2+\Q\L_2]\Q\L(t)v,v\rangle}{\langle\L_1v,v\rangle^2}$\\
$\displaystyle
+\frac{\langle(\L_1^2+\L_2)v,v\rangle\langle\L(s)\Q\L_1\Q\L(t)v,v\rangle}{\langle\L_1v,v\rangle^3}$}&
\multirowcell{10}{$\displaystyle\frac{-\langle\L(s)[(\Q\L_1)^2+\Q\L_2]\Q\L_1\Q\L(t)v,v\rangle}{\langle\L_1v,v\rangle^3}$\\
$\displaystyle
-\frac{\langle(\L_1^2+\L_2)v,v\rangle\langle\L(s)(\Q\L_1)^2\Q\L(t)v,v\rangle}{\langle\L_1v,v\rangle^3}$}&
\multirowcell{10}{$\displaystyle\frac{\langle\L(s)[(\Q\L_1)^4-(\Q\L_2)^2]\Q\L (t)v,v\rangle}{\langle\L_1v,v\rangle^4}$\\
$\displaystyle
+\frac{\langle \L(s)\Q\L_1[(\Q\L_1)^2-\Q\L_2]\Q\L(t)v,v\rangle\langle(\L_1^2+\L_2)v,v\rangle}{\langle\L_1v,v\rangle^5}$\\
$\displaystyle
+\frac{\langle \L(s)[(\Q\L_1)^2+\Q\L_2]\Q\L_1\Q\L(t)v,v\rangle\langle(\L_1^2+\L_2)v,v\rangle}{\langle\L_1v,v\rangle^5}$\\
$\displaystyle
+\frac{\langle\L(s)(\Q\L_1)^2\Q\L(t)v,v\rangle\langle(\L_1^2+\L_2)v,v\rangle}{\langle\L_1v,v\rangle^6}$
}\\
& & &\\
& & &\\
& & &\\
& & &\\ 
& & &\\ 
& & &\\ 
& & &\\ 
& & &\\
& & &\\
\hline
\end{tabular}
\label{Tab:1}
\end{table}
\end{landscape}
\section{Tree representation of the noncommutative polynomials}\label{Appendix_tree_repre}
The first few terms of the Type-II noncommutative Bell polynomials can be represented using trees as:
\begin{equation}\label{Tree_graphs_bell2}
\begin{aligned}
F_0^{\hB}&=
{\fontsize{4}{1}\selectfont
\begin{forest}
 for tree={grow=90, l=1mm}
[]
 \path[fill=black]  (!.parent anchor) circle[radius=2pt];
\end{forest}
}
\\
F_1^{\hB}&=
{\fontsize{4}{1}\selectfont
\begin{forest}
 for tree={grow=90, l=1mm}
[[]]
 \path[fill=black]  (!.parent anchor) circle[radius=2pt];
 \path[fill=black] (!1.child anchor) circle[radius=2pt];
\end{forest}
}
\\
F_2^{\hB}&=
{\fontsize{4}{1}\selectfont
\begin{forest}
 for tree={grow=90, l=1mm}
[[[]]]
 \path[fill=black]  (!.parent anchor) circle[radius=2pt];
 \path[fill=black] (!1.child anchor) circle[radius=2pt]
                   (!11.child anchor) circle[radius=2pt];
\end{forest}
}
+
{\fontsize{4}{1}\selectfont
\begin{forest}
for tree={grow=90, l=1mm}
[[][]]
\path[fill=black]  (!.parent anchor) circle[radius=2pt];
\path[fill=black] (!1.child anchor) circle[radius=2pt]
                 (!2.child anchor) circle[radius=2pt];
\end{forest}
}
\\
F_3^{\hB}&=
{\fontsize{4}{1}\selectfont
\begin{forest}
 for tree={grow=90, l=1mm}
[[[[]]]]
 \path[fill=black]  (!.parent anchor) circle[radius=2pt];
 \path[fill=black] (!1.child anchor) circle[radius=2pt]
                    (!11.child anchor) circle[radius=2pt]
                   (!111.child anchor) circle[radius=2pt];
\end{forest}
}
+\frac{3}{2}
{\fontsize{4}{1}\selectfont
\begin{forest}
 for tree={grow=90, l=1mm}
[[[]][]]
 \path[fill=black]  (!.parent anchor) circle[radius=2pt];
 \path[fill=black] (!1.child anchor) circle[radius=2pt]
                  (!11.child anchor) circle[radius=2pt]
                  (!2.child anchor) circle[radius=2pt];
\end{forest}
}
+\frac{3}{2}
{\fontsize{4}{1}\selectfont
\begin{forest}
 for tree={parent anchor=east, child anchor=east, grow=90, l=1mm}
[[[][]]]
 \path[fill=black]  (!.parent anchor) circle[radius=2pt];
 \path[fill=black] (!1.child anchor) circle[radius=2pt]
                   (!11.child anchor) circle[radius=2pt]
                   (!12.child anchor) circle[radius=2pt];
\end{forest}
}
+
{\fontsize{4}{1}\selectfont
\begin{forest}
 for tree={grow=90, l=1mm}
[[][][]]
 \path[fill=black]  (!.parent anchor) circle[radius=2pt];
 \path[fill=black] (!1.child anchor) circle[radius=2pt]
                   (!2.child anchor) circle[radius=2pt]
                   (!3.child anchor) circle[radius=2pt];
\end{forest}
}
\\
F_{4}^{\hB}&=
{\fontsize{4}{1}\selectfont
\begin{forest}
 for tree={grow=90, l=1mm}
 [[[[[]]]]]
 \path[fill=black]  (!.parent anchor) circle[radius=2pt];
 \path[fill=black] (!1.child anchor) circle[radius=2pt]
                  (!11.child anchor) circle[radius=2pt]
                  (!111.child anchor) circle[radius=2pt]
                  (!1111.child anchor)  circle[radius=2pt];
\end{forest}
}
+2
{\fontsize{4}{1}\selectfont
\begin{forest}
 for tree={grow=90, l=1mm}
 [[[[][]]]]
 \path[fill=black]  (!.parent anchor) circle[radius=2pt];
 \path[fill=black] (!1.child anchor) circle[radius=2pt]
                  (!11.child anchor) circle[radius=2pt]
                  (!111.child anchor) circle[radius=2pt]
                  (!112.child anchor)  circle[radius=2pt];
\end{forest}
}
+2
{\fontsize{4}{1}\selectfont
\begin{forest}
 for tree={grow=90, l=1mm}
[[[[]][]]]
 \path[fill=black]  (!.parent anchor) circle[radius=2pt];
 \path[fill=black] (!1.child anchor) circle[radius=2pt]
                  (!11.child anchor) circle[radius=2pt]
                  (!111.child anchor) circle[radius=2pt]
                  (!12.child anchor) circle[radius=2pt];
\end{forest}
}
+2
{\fontsize{4}{1}\selectfont
\begin{forest}
 for tree={grow=90, l=1mm}
[[[[]]][]]
 \path[fill=black]  (!.parent anchor) circle[radius=2pt];
 \path[fill=black] (!1.child anchor) circle[radius=2pt]
                  (!11.child anchor) circle[radius=2pt]
                  (!111.child anchor) circle[radius=2pt]
                  (!2.child anchor) circle[radius=2pt];
\end{forest}
}
+3
{\fontsize{4}{1}\selectfont
\begin{forest}
 for tree={grow=90, l=1mm}
 [ [[][]] []]
 \path[fill=black]  (!.parent anchor) circle[radius=2pt];
 \path[fill=black] (!1.child anchor) circle[radius=2pt]
                  (!11.child anchor) circle[radius=2pt]
                  (!12.child anchor) circle[radius=2pt]
                  (!2.child anchor)  circle[radius=2pt];
\end{forest}
}
+2
{\fontsize{4}{1}\selectfont
\begin{forest}
 for tree={grow=90, l=1mm}
 [[[]] [] [] ]
 \path[fill=black]  (!.parent anchor) circle[radius=2pt];
 \path[fill=black] (!1.child anchor) circle[radius=2pt]
                  (!11.child anchor) circle[radius=2pt]
                  (!2.child anchor) circle[radius=2pt]
                  (!3.child anchor)  circle[radius=2pt];
\end{forest}
}
+2
{\fontsize{4}{1}\selectfont
\begin{forest}
 for tree={grow=90, l=1mm}
 [ [[][][]] ]
 \path[fill=black]  (!.parent anchor) circle[radius=2pt];
 \path[fill=black] (!1.child anchor) circle[radius=2pt]
                  (!11.child anchor) circle[radius=2pt]
                  (!12.child anchor) circle[radius=2pt]
                  (!13.child anchor)  circle[radius=2pt];
\end{forest}
}
+
{\fontsize{4}{1}\selectfont
\begin{forest}
 for tree={grow=90, l=1mm}
 [ [][][] []]
 \path[fill=black]  (!.parent anchor) circle[radius=2pt];
 \path[fill=black] (!1.child anchor) circle[radius=2pt]
                  (!2.child anchor) circle[radius=2pt]
                  (!3.child anchor) circle[radius=2pt]
                  (!4.child anchor)  circle[radius=2pt];
\end{forest}
}
\\
&\ \ \vdots
\end{aligned}
\end{equation}
Similarly, the first few terms of the noncommutative bipartition polynomials can be represented using trees as:
\begin{equation}\label{Tree_graphs_bipart}
\begin{aligned}
F_0^{\tP}&=
{\fontsize{4}{1}\selectfont
\begin{forest}
 for tree={grow=90, l=1mm}
[]
 \path[fill=black]  (!.parent anchor) circle[radius=2pt];
\end{forest}
}
\\
F_1^{\tP}&=
{\fontsize{4}{1}\selectfont
\begin{forest}
 for tree={grow=90, l=1mm}
[[]]
 \path[fill=black]  (!.parent anchor) circle[radius=2pt];
 \path[fill=black] (!1.child anchor) circle[radius=2pt];
\end{forest}
}
\\
F_2^{\tP}&=-
{\fontsize{4}{1}\selectfont
\begin{forest}
 for tree={grow=90, l=1mm}
[[[]]]
 \path[fill=black]  (!.parent anchor) circle[radius=2pt];
 \path[fill=black] (!1.child anchor) circle[radius=2pt]
                   (!11.child anchor) circle[radius=2pt];
\end{forest}
}
+
{\fontsize{4}{1}\selectfont
\begin{forest}
for tree={grow=90, l=1mm}
[[][]]
\path[fill=black]  (!.parent anchor) circle[radius=2pt];
\path[fill=black] (!1.child anchor) circle[radius=2pt]
                 (!2.child anchor) circle[radius=2pt];
\end{forest}
}
\\
F_3^{\tP}&=
{\fontsize{4}{1}\selectfont
\begin{forest}
 for tree={grow=90, l=1mm}
[[[[]]]]
 \path[fill=black]  (!.parent anchor) circle[radius=2pt];
 \path[fill=black] (!1.child anchor) circle[radius=2pt]
                    (!11.child anchor) circle[radius=2pt]
                   (!111.child anchor) circle[radius=2pt];
\end{forest}
}
-
{\fontsize{4}{1}\selectfont
\begin{forest}
 for tree={grow=90, l=1mm}
[[[]][]]
 \path[fill=black]  (!.parent anchor) circle[radius=2pt];
 \path[fill=black] (!1.child anchor) circle[radius=2pt]
                  (!11.child anchor) circle[radius=2pt]
                  (!2.child anchor) circle[radius=2pt];
\end{forest}
}
-
{\fontsize{4}{1}\selectfont
\begin{forest}
 for tree={parent anchor=east, child anchor=east, grow=90, l=1mm}
[[[][]]]
 \path[fill=black]  (!.parent anchor) circle[radius=2pt];
 \path[fill=black] (!1.child anchor) circle[radius=2pt]
                   (!11.child anchor) circle[radius=2pt]
                   (!12.child anchor) circle[radius=2pt];
\end{forest}
}
+
{\fontsize{4}{1}\selectfont
\begin{forest}
 for tree={grow=90, l=1mm}
[[][][]]
 \path[fill=black]  (!.parent anchor) circle[radius=2pt];
 \path[fill=black] (!1.child anchor) circle[radius=2pt]
                   (!2.child anchor) circle[radius=2pt]
                   (!3.child anchor) circle[radius=2pt];
\end{forest}
}
\\
F_{4}^{\tP}&=-
{\fontsize{4}{1}\selectfont
\begin{forest}
 for tree={grow=90, l=1mm}
 [[[[[]]]]]
 \path[fill=black]  (!.parent anchor) circle[radius=2pt];
 \path[fill=black] (!1.child anchor) circle[radius=2pt]
                  (!11.child anchor) circle[radius=2pt]
                  (!111.child anchor) circle[radius=2pt]
                  (!1111.child anchor)  circle[radius=2pt];
\end{forest}
}
+
{\fontsize{4}{1}\selectfont
\begin{forest}
 for tree={grow=90, l=1mm}
 [[[[][]]]]
 \path[fill=black]  (!.parent anchor) circle[radius=2pt];
 \path[fill=black] (!1.child anchor) circle[radius=2pt]
                  (!11.child anchor) circle[radius=2pt]
                  (!111.child anchor) circle[radius=2pt]
                  (!112.child anchor)  circle[radius=2pt];
\end{forest}
}
+
{\fontsize{4}{1}\selectfont
\begin{forest}
 for tree={grow=90, l=1mm}
[[[[]][]]]
 \path[fill=black]  (!.parent anchor) circle[radius=2pt];
 \path[fill=black] (!1.child anchor) circle[radius=2pt]
                  (!11.child anchor) circle[radius=2pt]
                  (!111.child anchor) circle[radius=2pt]
                  (!12.child anchor) circle[radius=2pt];
\end{forest}
}
+
{\fontsize{4}{1}\selectfont
\begin{forest}
 for tree={grow=90, l=1mm}
[[[[]]][]]
 \path[fill=black]  (!.parent anchor) circle[radius=2pt];
 \path[fill=black] (!1.child anchor) circle[radius=2pt]
                  (!11.child anchor) circle[radius=2pt]
                  (!111.child anchor) circle[radius=2pt]
                  (!2.child anchor) circle[radius=2pt];
\end{forest}
}
-
{\fontsize{4}{1}\selectfont
\begin{forest}
 for tree={grow=90, l=1mm}
 [ [[][]] []]
 \path[fill=black]  (!.parent anchor) circle[radius=2pt];
 \path[fill=black] (!1.child anchor) circle[radius=2pt]
                  (!11.child anchor) circle[radius=2pt]
                  (!12.child anchor) circle[radius=2pt]
                  (!2.child anchor)  circle[radius=2pt];
\end{forest}
}
-
{\fontsize{4}{1}\selectfont
\begin{forest}
 for tree={grow=90, l=1mm}
 [[[]] [] [] ]
 \path[fill=black]  (!.parent anchor) circle[radius=2pt];
 \path[fill=black] (!1.child anchor) circle[radius=2pt]
                  (!11.child anchor) circle[radius=2pt]
                  (!2.child anchor) circle[radius=2pt]
                  (!3.child anchor)  circle[radius=2pt];
\end{forest}
}
-
{\fontsize{4}{1}\selectfont
\begin{forest}
 for tree={grow=90, l=1mm}
 [ [[][][]] ]
 \path[fill=black]  (!.parent anchor) circle[radius=2pt];
 \path[fill=black] (!1.child anchor) circle[radius=2pt]
                  (!11.child anchor) circle[radius=2pt]
                  (!12.child anchor) circle[radius=2pt]
                  (!13.child anchor)  circle[radius=2pt];
\end{forest}
}
+
{\fontsize{4}{1}\selectfont
\begin{forest}
 for tree={grow=90, l=1mm}
 [ [][][] []]
 \path[fill=black]  (!.parent anchor) circle[radius=2pt];
 \path[fill=black] (!1.child anchor) circle[radius=2pt]
                  (!2.child anchor) circle[radius=2pt]
                  (!3.child anchor) circle[radius=2pt]
                  (!4.child anchor)  circle[radius=2pt];
\end{forest}
}
\\
&\ \ \vdots
\end{aligned}
\end{equation}
Together with \eqref{Tree_graphs}, it is easy to see that these three polynomials only differ in terms of the tree coefficients. 
\section{Derivation of the generalized mode-coupling equation}
\label{app:MCT_derivation}
We first define a two-dimensional observable vector $A=[\delta_{\rho_q},j^L_{q}]^T$. The Mori-type projection operator \eqref{Mori_P_MCT} can be symbolically rewritten as:
\begin{align*}
    \P B=(A,B)(A,A)^{-1}A,
\qquad (A,A)^{-1}=\begin{bmatrix}
\frac{1}{NS(|q|)} & 0\\
0 & \frac{m}{Nk_BT}
\end{bmatrix},
\end{align*}
where $(A,B)=\langle A^*B^T\rangle$ and $B=[u,v]^T$. With this projection operator, applying operator identity \eqref{op_id_1_nonc} to $A$, it is easy to get that the LHS of \eqref{op_id_1_nonc} equals to
\begin{align}\label{MCT_LHS}
    \frac{d}{dt}\P e^{it\L}\P A=
    \frac{d}{dt}
    \begin{bmatrix}
\langle\delta_{\rho_{-q}},\delta_{\rho_q}(t)\rangle & \langle\delta_{\rho_{-q}},j_{q}^L(t)\rangle \\
\langle j^L_{-q},\delta_{\rho_q}(t)\rangle & \langle j^L_{-q},\delta_{\rho_q}(t)\rangle
\end{bmatrix}
    (A,A)^{-1}A
=\frac{d}{dt}C(t)(A,A)^{-1}A.
\end{align}
As we will see, the observable vector $(A,A)^{-1}A$ appears in all other terms in the CMZE, hence will be canceled out in the follow-up derivations. In the end, Eqn \eqref{MCT_LHS} leaves us the time derivative of the equilibrium time-autocorrelation matrix $C(t)$. On the other hand, since $i\L$ is a skew-Hermitian operator with respect to the Hilbert space inner product $\langle,\rangle$, we have $\langle i\L a,a\rangle=0$ for any scalar observable function $a$. Combining this fact and the statistical thermodynamics equality: $\langle j^L_{-q},\delta\dot{\rho_{q}}\rangle=-i|q|\frac{Nk_BT}{m}$, we can get that the first term on the RHS of \eqref{op_id_1_nonc} is explicitly given by 
\begin{align}\label{MCT_RHS1}
    \frac{d}{dt}i\P e^{it\L}\P\L\P A=i\Omega C(t)(A,A)^{-1}A,
\end{align}
where $C(t)$ is defined as in \eqref{MCT_LHS} and the matrix $i\Omega$ is given by 
\begin{align}\label{matrix_iomega}
    i\Omega=
    \begin{bmatrix}
0 & -i|q|\\
-i|q|\frac{k_BT}{mS(|q|)} & 0
\end{bmatrix}
.
\end{align}
Now we calculate the memory integral. After the change of notation $\L\rightarrow i\L$, the zeroth order term in the combinatorial expansion \eqref{op_id_1_nonc} now becomes:
\begin{align}\label{MCT_0th_order}
    \int_0^t\P e^{i(t-s)\L}\F_0 Ads,
\end{align}
where $\F_0 A$ can be explicitly calculated as: 
\begin{equation}
\begin{aligned}
\F_0 A=i\P\L i\Q\L A&=
\P(i\L)^2\P A-(i\P\L\P)^2A\\
&=
\begin{bmatrix}
-|q|^2\frac{k_BT}{mS(|q|)} & 0 \\
0 &  -\frac{m}{Nk_BT}\left\langle
\frac{d}{dt}j_{-q}^L,\frac{d}{dt}j_{q}^L\right\rangle
\end{bmatrix}A
-
\begin{bmatrix}
-q^2\frac{k_BT}{mS(|q|)}& 0\\
0 & -q^2\frac{k_BT}{mS(|q|)}
\end{bmatrix}
A\\
&=
\begin{bmatrix}
\omega_0^1(q)& 0\\
0 & \omega_0^2(q)
\end{bmatrix}A
=\Omega_0A.
\end{aligned}
\end{equation}
Here we have set
\begin{equation}\label{w_0^2(q)}
\begin{aligned}
\omega_0^1(q)&=0\\
\omega_0^2(q)&=-\frac{m}{Nk_BT}\left\langle\frac{d}{dt}j_{-q}^L,\frac{d}{dt}j_{q}^L\right\rangle+\frac{|q|^2k_BT}{mS(|q|)}.
\end{aligned}
\end{equation}
As a result, \eqref{MCT_0th_order} simplifies to
\begin{align}\label{MCT_0th_final}
    \int_0^t\P e^{i(t-s)\L}\F_0 Ads=\int_0^t\Omega_0C(t-s)ds(A,A)^{-1}A.
\end{align}
The first-order term in the combinatorial expansion \eqref{op_id_1_nonc} can be simplified as :
\begin{align}\label{MCT_1th_order}
    \int_0^t\P e^{i(t-s)\L}\F_1(\P e^{is\L}\P)^k Ads=\int_0^t\P e^{i(t-s)\L}C(s)(A,A)^{-1}\F_1Ads,\quad k=0,1.
\end{align}
According to the explicit expression \eqref{first_few_f_n}, we know $\F_1$ contains four terms. In order to proceed, we also need to find a subspace $V\subset H$ such that $i\P\L\P|_V$ is invertible in this subspace. This is easy for the CMZE with Mori-type projection operator \eqref{Mori_P_MCT}. To this end, we only need to choose $V=\text{Span}\{\delta_{\rho_{q}},j^L_q\}$. According to the definition \eqref{Mori_P_MCT}, we have $i\P\L\P A=i\P\L A=i\Omega A$. Hence the subspace $V\subset H$ is an invariant subspace of operator $i\P\L\P$, in which $i\P\L\P$ admits a matrix representation $i\P\L\P A=i\Omega A$. Since $i\Omega$ defined in \eqref{matrix_iomega} is invertible for non-zero $q$. $i\P\L\P$ is invertible in $V$ with inverse $-i\Omega^{-1}$ ($q\neq 0$). Having confirmed the invertibility of $i\P\L\P$ in $V$, we introduce the following auxiliary matrices to calculate $\F_1 A$:
\begin{align}\label{auxi_matrices}
    D_0=(A,A), 
    \quad 
    D_1=(A,i\L A)(A,A)^{-1},
    \quad 
    D_2=(A,(i\L)^2A)(A,A)^{-1},
    \quad 
    \cdots.
\end{align}
Note that $D_0^{-1}=(A,A)^{-1}$ and $i\Omega=D_1$. With these auxiliary matrices, the four terms in $\F_1 A$ can be calculated individually as:
\begin{equation}\label{matrixF2}
\begin{aligned}
(i\P\L\P)^2 A&=D_1^2 A=\Omega_1^1A
\\
-(i\P\L\P)[\P(i\L)^2\P](i\P\L\P|_V)^{-1} A&=-
D_1^{-1}D_2D_1 A=\Omega_1^2A
\\
-\P(i\L)^2\P A&=-D_2A=\Omega_1^3A\\
\P(i\L)^3\P (i\P\L\P|_V)^{-1}A&=D_1^{-1}D_3 A
=\Omega_1^4 A
\end{aligned}
\end{equation}
where $\Omega_1^i=\Omega_1^i(q)$, $1\leq i\leq 4$ are diagonal matrices. By direct calculation, we obtain
\begin{align}
\Omega_1(q)=\sum_{i=1}^4\Omega_{1}^i(q)=
\begin{bmatrix}
\omega_1^1(q) & 0\\
0 & \omega_1^2(q)
\end{bmatrix}
=0.
\end{align}
Hence the first-order term \eqref{MCT_1th_order} vanishes since
\begin{align}
   \int_0^t\P e^{i(t-s)\L}\F_1(\P e^{is\L}\P)^k Ads
   =\int_0^t[C(s)D^{-1}_0(q)]^{k}\Omega_1(q)C(t-s)ds(A,A)^{-1}A=0,\quad k=0,1.
\end{align}
Similarly, using \eqref{first_few_f_n}, we can calculate eight terms in $\F_2 A$ explicitly as:
\begin{equation}
\begin{aligned}
(i\P\L\P)[\P(i\L)^2\P](i\P\L\P|_V)^{-1}(\P(i\L)^2\P)(i\P\L\P|_V)^{-2} A&=D_1^{-2}D_2D_1^{-1}D_2D_1 A=\Omega_2^1A
\\
-[\P (i\L)^3\P](i\P\L\P|_V)^{-1}[\P(i\L)^2\P](i\P\L\P|_V)^{-2} A&=-
D_1^{-2}D_2D_1^{-1}D_3 A=\Omega_2^2A
\\
-(i\P\L\P)^2 A&=-D_1^{2}A=\Omega_2^3A\\
(i\P\L\P)[\P(i\L)^2\P](i\P\L\P|_V)^{-1} A&=D_1^{-1}D^2D_1A=\Omega_2^4A\\
-(i\P\L\P)[\P(i\L)^3\P](i\P\L\P|_V)^{-2} A&=-D_1^{-2}D_3D_1A=\Omega_2^5A\\
\P(i\L)^2\P A&=D_2 A=\Omega_2^6 A\\
-[\P(i\L)^3\P] (i\P\L\P|_V)^{-1}A&=-D_1^{-1}D_3 A
=\Omega_2^7 A\\
[\P(i\L)^4\P] (i\P\L\P|_V)^{-2}A&=-D_1^{-2}D_4 A
=\Omega_2^8 A,
\end{aligned}
\end{equation}
where $\Omega_2^i=\Omega_2^i(q)$, $1\leq i\leq 8$ are diagonal matrices. By direct calculation, we obtain
\begin{align}
\Omega_2(q)=\sum_{i=1}^8\Omega_{2}^i(q)=
\begin{bmatrix}
\omega_2^1(q) & 0\\
0 & \omega_2^2(q)
\end{bmatrix}
\end{align}
where 
\begin{equation}\label{w_2^2(q)}
\begin{aligned}
    \omega_2^1(q)&=0\\
    \omega_2^2(q)&=-\frac{m^2S(|q|)}{|q|^2(k_BT)^2N}\left[
    \left\langle\frac{d^2}{dt^2}j_{-q}^L,\frac{d^2}{dt^2}j_{q}^L\right\rangle 
    -\frac{m}{Nk_BT}\left\langle
    \frac{d}{dt}j_{-q}^L,\frac{d}{dt}j_{q}^L\right\rangle^2
    \right].
\end{aligned}
\end{equation}
Hence the second-order contribution to the combinatorial expansion is given by:
\begin{align}\label{MCT_1th_final}
   \int_0^t\P e^{i(t-s)\L}\F_2(\P e^{is\L}\P)^k Ads
   =\int_0^t[C(s)D^{-1}_0(q)]^{k}\Omega_2(q)C(t-s)ds(A,A)^{-1}A=0,\quad k=0,1,2.
\end{align}
Combining \eqref{MCT_LHS}, \eqref{MCT_RHS1}, \eqref{MCT_0th_final} and \eqref{MCT_1th_final}, we get the second-order approximation to the CMZE \eqref{op_id_1_nonc}, which is a mode-coupling equation for the correlation matrix $C(t)$:
\begin{align}\label{2nd_MCT_C(t)}
    \frac{d}{dt}C(t)&\approx i\Omega(q) C(t)+\Omega_0(q)\int_0^tC(t-s)ds+\sum_{k=1}^2\frac{(-1)^{2-k}}{(2-k)!k!}\int_0^t[C(s)D_0^{-1}]^k\Omega_2(q)C(t-s)ds.
\end{align}
Note that the {\em only} approximation we have used in the whole derivation is the truncation of higher-order terms in the combinatorial expansion. Concentrating on the lower left corner of the above matrix equation, we get the following second-order generalized mode-coupling equation for the density fluctuation: 
\begin{equation}\label{2nd_MCT_F(q,t)}
\begin{aligned}
\frac{d^2}{dt^2}F(q,t)&+\frac{|q|^2k_BT}{mS(|q|)}F(q,t)-(\omega^2_0(q)+\frac{1}{2}\omega_2^2(q))\int_0^tds\frac{d}{dt}F(q,t-s)ds
\\
&-\frac{m\omega_2^2(q)}{|q|^2k_BT}\int_0^tds\frac{d^2}{ds^2}F(q,s)\frac{d}{dt}F(q,t-s)\\
&-\frac{1}{2}\frac{m\omega_2^2(q)}{|q|^2S(|q|)k_BT}\int_0^tds
\left[\frac{d}{ds}F(q,s)\right]^2
\frac{d}{dt}F(q,t-s)\\
&+\frac{1}{2}\frac{m^2N^2\omega_2^2(q)}{|q|^4k^2_BT^2}\int_0^tds
\left[\frac{d^2}{ds^2}F(q,s)\right]^2
\frac{d}{dt}F(q,t-s)=0.
\end{aligned}
\end{equation}
To solve \eqref{2nd_MCT_F(q,t)} numerically, we only need to know $\omega^2_0(q)$ and $\omega_2^2(q)$, where the definitions are given in \eqref{w_0^2(q)} and \eqref{w_2^2(q)}. It is easy to see these two coefficients are functions of the static moment in the statistical equilibrium. When retaining more terms in the combinatorial expansion, we get higher-order mode-coupling equations for the density fluctuation with low-order terms got {\em corrected}. For instance, the additional term $\frac{1}{2}\omega_2^2(q)$ added to the zeroth order contribution $\omega_0^2(q)$ in \eqref{2nd_MCT_F(q,t)} can be viewed as a higher-order correction to this linear term. Naturally, high-order approximations are expected to be more accurate while it requires knowing static moments of higher-order derivatives of $\delta_{\rho_q}$ and $j_{q}^L$ contained in auxiliary matrices $D_4=(A,\L^4 A)(A,A)^{-1}$, $D_5=(A,\L^5 A)(A,A)^{-1},\cdots$. As far as we are concerned, the generalized mode-coupling equation we obtained using the CMZE is different from any existing mode-coupling equations in the literature. In particular, it is constructive to compare it with the ones introduced by Janssen and Reichman  \cite{janssen2015microscopic}. In the latter, a system of EOMs that resembles the Martin-Schwinger hierarchy is used as a generalized mode-coupling equation to predict the glass transition. Correspondingly, it is required to know {\em multi-point} static density correlation functions as the input to solve the hierarchical equation.  
\section{Derivation of the CMZE for Hubbard model}
\label{app:hubbard}
We start with the time-independent MZE:
\begin{align}\label{eqn:time_ind_PMZE_OP_hubbard_app}
\frac{d}{dt}\P e^{it\L}\P=i\P e^{it\L}\P\L\P
+\int_0^t\P e^{i(t-s)\L}i\P\L e^{is\Q\L}i\Q\L\P ds.
\end{align} 
Using this equation, we will first derive the EOM for the full retarded Green's function $G^R_{ij}(t)$. One-dimensional EOM \eqref{Green_f_CMZE} can be obtained from the result rather easily. Define $N$-dimensional Mori-type projection operator:
\begin{align}\label{Mori_P_hubbard}
    \P(\cdot)=\sum_{j}|c_{j\sigma})(c_{j\sigma}|,
\end{align}
where the summation $\sum_{j}$ runs over all electrons in the lattice. It is easy to verify that this \eqref{Mori_P_hubbard} is finite-rank and $\text{Ran}(\P)=\text{Span}\{|c_{j\sigma})\}_{j}$. Substituting this projection operator into MZE \eqref{eqn:time_ind_PMZE_OP_hubbard_app} and applying the equation to observable $|c_{j\sigma})$, we obtain the EOM for the retarded Green's function:
\begin{align}\label{MZE_nd_matrix}
    i\frac{d}{dt} G^{R}(t)=\Omega G^{R}(t)+i\int_0^t K^{R}(s) G^{R}(t-s)ds,
\end{align}
where 
\begin{equation}\label{MZE_nd_hubbard_nd_app}
    \begin{aligned}
    G^{R}_{ij}(t)&=(G^{R})_{ij}(t)\\
   \Omega^{R}_{ij}&=(\Omega^{R})_{ij}=(c_{j\sigma}|i\L c_{i\sigma})\\
   K^{R}_{ij}(s)&=(K^{R})_{ij}(s)
   =\left(c_{j\sigma}\bigg|i\L e^{is\Q\L}i\Q\L c_{i\sigma}\right).
    \end{aligned}
\end{equation}
In the case of the tight-binding Hubbard model, the frequency matrix $\Omega_{ij}$ is explicitly given by 
\begin{align}\label{omega_hubbard_MZ_app}
    \Omega_{ij}&=(H_0)_{_{ij}}+U\langle n_{i\bar\sigma}\rangle\delta_{ij}=[(\epsilon_0-\mu)\delta_{ij}+t(\delta_{i-1j}+\delta_{i+1j})]+U\langle n_{i\bar\sigma}\rangle\delta_{ij},
\end{align}
To get the CMZE for the Hubbard model, we introduce auxiliary matrices as we did in \eqref{auxi_matrices}:
\begin{align}\label{auxi_matrices_hubbard}
    D_1=(A,i\L A),
    \quad 
    D_2=(A,(i\L)^2A),
    \quad 
    D_3=(A,(i\L)^2A),
    \quad 
    \cdots.
\end{align}
Here $A=[|c_{0\sigma}),|c_{0\sigma}),\cdots,|c_{N\sigma})]^T$, $A^*=[(c_{0\sigma}|,(c_{0\sigma}|,\cdots,(c_{N\sigma}|]^T$ and $(A,B)=(A^*|B)$. We further note that with these definitions, $D_0=(A,A)=\mathbb{I}$. As we have seen in \ref{app:MCT_derivation}, $D_1,D_2,D_3,\cdots$ are just the matrix representation of operators $\P i\L\P A$, $\P (i\L)^2\P A$, $\P(i\L)^3\P A,\cdots$ in the closed subspace $\text{Ran}(\P)$. By direct calculation, we find that $D_1,D_2,D_3$ are Toeplitz matrices of the form:
\begin{equation}\label{auxi_matrix_hubbard}
\begin{aligned}
D_1=
\begin{bmatrix}
a_0^1& a_1^1&\cdots\\
a_{-1}^1& a_0^1&\cdots\\
\vdots &\vdots
\end{bmatrix}
,\quad
D_2=
\begin{bmatrix}
a_0^2& a_1^2& a_2^2&\cdots\\
a_{-1}^2& a_0^2& a_1^2&\cdots\\
a_{-2}^2& a_{-1}^2& a_0^2&\cdots\\
\vdots &\vdots &\vdots
\end{bmatrix}
,\quad 
D_3=
\begin{bmatrix}
a_0^3& a_1^3& a_2^3& a_3^3&\cdots\\
a_{-1}^3& a_0^3& a_1^3& a_2^3&\cdots\\
a_{-2}^3& a_{-1}^3& a_0^3& a_1^3&\cdots\\
a_{-3}^3& a_{-2}^3& a_{-1}^3&a_0^3&\cdots\\
\vdots &\vdots&\vdots&\vdots,
\end{bmatrix}
\end{aligned}
\end{equation}
where 
\begin{equation}\label{a_0^1}
\begin{aligned}
a_0^1&=-i(\epsilon_0-\mu)-iU\langle n_{i\bar\sigma}\rangle\\
a_1^1&=a_{-1}^1=-it\\
\end{aligned}
\end{equation}
 
\begin{equation}\label{a_0^2}
\begin{aligned}
a_0^2&=-[(\epsilon_0-\mu)^2+2(\epsilon_0-\mu)\langle n_{i\bar\sigma}\rangle+2t^2+U^2\langle n_{i\bar\sigma}\rangle]\\
a_1^2&=-[2t(\epsilon_0-\mu)+Ut(\langle n_{i\bar\sigma}\rangle+
\langle n_{i+1\bar\sigma}\rangle)]\\
a_{-1}^2&=-[2t(\epsilon_0-\mu)+Ut(\langle n_{i\bar\sigma}\rangle+
\langle n_{i-1\bar\sigma}\rangle)]\\
a_2^2&=a_{-2}^2=-t^2\\
\end{aligned}
\end{equation}

\begin{equation}\label{a_0^3}
\begin{aligned}
a_0^3&=i[(\epsilon_0-\mu)^3+3U(\epsilon_0-\mu)^2\langle n_{i\bar\sigma}\rangle+6t^2(\epsilon_0-\mu)+3U^2(\epsilon_0-\mu)\langle n_{i\bar\sigma}\rangle\\
&\ \ \ 
+4t^2U\langle n_{i\bar\sigma}\rangle+t^2U\langle n_{i-1\bar\sigma}\rangle
+t^2U\langle n_{i+1\bar\sigma}\rangle
+U^3\langle n_{i\bar\sigma}\rangle]\\
a_{1}^3&=
i[3t(\epsilon_0-\mu)^2+3tU(\epsilon_0-\mu)(\langle n_{i\bar\sigma}\rangle+\langle n_{i+1\bar\sigma}\rangle)\\
&\ \ \ 
+3t^3+tU^2(\langle n_{i\bar\sigma}\rangle+\langle n_{i+1\bar\sigma}\rangle+\langle n_{i\bar\sigma}n_{i+1\bar\sigma}\rangle)
]
\\
a_{-1}^3&=i[3t(\epsilon_0-\mu)^2+3tU(\epsilon_0-\mu)(\langle n_{i\bar\sigma}\rangle+\langle n_{i-1\bar\sigma}\rangle)\\
&\ \ \ 
+3t^3+tU^2(\langle n_{i\bar\sigma}\rangle+\langle n_{i-1\bar\sigma}\rangle+\langle n_{i\bar\sigma}n_{i-1\bar\sigma}\rangle)
]\\
a_{2}^3&=i[3t^2(\epsilon_0-\mu)]+t^2U(\langle n_{i\bar\sigma}\rangle+\langle n_{i+1\bar\sigma}\rangle+\langle n_{i+2\bar\sigma}\rangle)\\
a_{-2}^3&=
i[3t^2(\epsilon_0-\mu)]+t^2U(\langle n_{i\bar\sigma}\rangle+\langle n_{i-1\bar\sigma}\rangle+\langle n_{i-2\bar\sigma}\rangle)
\\
a_3^3&=a_{-3}^3=it^3,\\
\end{aligned}
\end{equation}
and all other descending diagonals are 0. As a result, The first-order expansion term \eqref{MCT_0th_order} simplifies to
\begin{align}\label{MCT_0th_final_hubbard}
    \int_0^t\P e^{i(t-s)\L}\F_0 Ads=\int_0^t\Omega_0iG^{R}(t-s)ds A,
\end{align}
where $\Omega_0=D_2-D_1^2$. On the other hand, Gershgorin circle theorem implies that $D_1$ is invertible when it is strictly diagonally dominant. Since $D_1$ is a Toeplitz matrix, this happens when the Hubbard model is strongly correlated, i.e. $U\gg t$. We assume this is the case, then 
\begin{align}\label{MCT_1th_order_hubbard}
    \int_0^t\P e^{i(t-s)\L}\F_1(\P e^{is\L}\P)^k Ads=\int_0^t[iG^R(s)]^k\Omega_1iG^R(t-s)dsA,\quad k=0,1,
\end{align}
where $\Omega_1=\sum_{i=1}^4\Omega_1^i$, and $\Omega_1^i$ is defined as in \eqref{matrixF2} with $D_1=\mathbb{I}$ and other $D_i$s given by \eqref{auxi_matrix_hubbard}. If we truncate the combinatorial expansion at this order, we obtain the first-order self-consistent EOM for the retarded Green's function of the Hubbard model:
\begin{align*}
i\frac{d}{dt} G^{R}(t)=\Omega G^{R}(t)+i\int_0^t(\Omega_0-\Omega_1) G^{R}(t-s)ds-\int_0^tG^R(s)\Omega_1G^R(t-s)ds.
\end{align*}
On the other hand, it is easy to check that with the one-dimensional projection operator $\P=|c_{i\sigma})(c_{i\sigma}|$, the statistical moments $\gamma_i|c_{i\sigma})=\P(i\L)^{i+1}\P|c_{i\sigma})=(c_{i\sigma}|(i\L)^{i+1}c_{i\sigma})$ is given by the diagonals of $D_n$s with $\gamma_{i-1}=a_0^{i}$. Since the first few $a_0^i$s are given explicitly in \eqref{a_0^1}-\eqref{a_0^3}, combining with \eqref{f_n_CCMZE}, we obtain \eqref{f_n_hubbard_1P}.
\bibliographystyle{plain}
\bibliography{combinatorial_QFT}
\end{document}